\providecommand{\U}[1]{\protect\rule{.1in}{.1in}}
\providecommand{\U}[1]{\protect\rule{.1in}{.1in}}
\providecommand{\U}[1]{\protect\rule{.1in}{.1in}}
\theoremstyle{plain}
\newtheorem{theorem}{Theorem}[section]
\newtheorem{lemma}[theorem]{Lemma}
\newtheorem{remark}{Remark}
\begin{document}
\title{Unstable and Stable Galaxy Models}
\author{Yan Guo}
\address{Lefschetz Center for Dynamical Systems, Division of Applied Mathematics, Brown
University, Providence, RI 02912, USA }
\email{guoy@cfm.brown.edu }
\author{Zhiwu Lin}
\address{Mathematics Department\\
University of Missouri\\
Columbia, MO 65211 USA}
\email{lin@math.missouri.edu}

\begin{abstract}
To determine the stability and instability of a given steady galaxy
configuration is one of the fundamental problems in the Vlasov theory for
galaxy dynamics. In this article, we study the stability of isotropic
spherical symmetric galaxy models $f_{0}(E)$, for which the distribution
function $f_{0}$ depends on the particle energy $E$ only. In the first part of
the article, we derive the first sufficient criterion for linear instability
of $f_{0}(E):$ $f_{0}(E)$ is linearly unstable if the second-order operator
\[
A_{0}\equiv-\Delta+4\pi\int f_{0}^{\prime}(E)\{I-\mathcal{P}\}dv
\]
has a negative direction, where $\mathcal{P}$ is the projection onto the
function space $\{g(E,L)\},$ $L$ being the angular momentum [see the explicit
formulae\ (\ref{ergodic-radial}) and (\ref{ergodic-spherical})]. In the second
part of the article, we prove that for the important King model, the
corresponding $A_{0}$ is positive definite. Such a positivity leads to the
nonlinear stability of the King model under all spherically symmetric perturbations.

\end{abstract}
\maketitle

\section{Introduction}

A galaxy is an ensemble of billions of stars, which interact by the
gravitational field which they create collectively. For galaxies, the
collisional relaxation time is much longer than the age of the universe
(\cite{BT}). The collisions can therefore be ignored and the galactic dynamics
is well described by the Vlasov - Poisson system (collisionless Boltzmann
equation)
\begin{equation}
\partial_{t}f+v\cdot\nabla_{x}f-\nabla_{x}U\cdot\nabla_{v}f=0,\text{
\ \ \ \ \ \ \ \ \ \ \ \ \ \ \ }\Delta U=4\pi\int_{\mathbf{R}^{3}}f(t,x,v)dv,
\label{vp}%
\end{equation}
where $\left(  x,v\right)  \in\mathbb{R}^{3}\times\mathbb{R}^{3}$, $f(t,x,v)$
is the distribution function and $U_{f}\left(  t,x\right)  $ is its
gravitational potential. The Vlasov-Poisson system can also be used to
describe the dynamics of globular clusters over their period of orbital
revolutions (\cite{fp84}). One of the central questions in such galactic
problems, which has attracted considerable attention in the astrophysics
literature, of \cite{bertin}, \cite{BT}, \cite{fp84}, \cite{merritt} and the
references there, is to determine \textit{dynamical stability }of steady
galaxy models. Stability study can be used to test a proposed configuration as
a model for a real stellar system. On the other hand, instabilities of steady
galaxy models can be used to explain some of the striking irregularities of
galaxies, such as spiral arms as arising from the instability of an initially
featureless galaxy disk (\cite{bertin}), (\cite{palmer}).

In this article, we consider stability of spherical galaxies, which are the
simplest elliptical galaxy models. Though most elliptical galaxies are known
to be non-spherical, the study of instability and dynamical evolution of
spherical galaxies could be useful to understand more complicated and
practical galaxy models . By Jeans's Theorem, a steady spherical galaxy is of
the form%
\[
f_{0}(x,v)\equiv f_{0}(E,L^{2}),
\]
where the particle energy and total momentum are
\[
E=\frac{1}{2}|v|^{2}+U_{0}(x),\ L^{2}=\left\vert x\times v\right\vert ^{2},
\]
and $U_{0}(x)=U_{0}\left(  \left\vert x\right\vert \right)  $ satisfies the
self-consistent Poisson equation. The isotropic models take the form
\[
f_{0}(x,v)\equiv f_{0}(E).
\]
The cases when $f_{0}^{\prime}(E)<0$ has been widely studied and these models
are known to be linearly stable to both radial (\cite{dbf73}) and non-radial
perturbations (\cite{ant62}). The well-known Casimir-Energy functional (as a
Liapunov functional)
\begin{equation}
\mathcal{H}(f)\equiv\int\int Q(f)+\frac{1}{2}\int\int|v|^{2}f-\frac{1}{8\pi
}\int|\nabla_{x}U_{f}|^{2}, \label{h}%
\end{equation}
is constant along the time evolution. If $f_{0}^{\prime}(E)<0,$ we can choose
the Casimir function $Q_{0}$ such that
\[
Q_{0}^{\prime}(f_{0}(E))\equiv-E
\]
for all $E.$ By a Taylor expansion of $\mathcal{H}(f)-\mathcal{H}(f_{0})$, it
follows that formally the first variation at $f_{0}$ is zero, that is,
$\mathcal{H}^{(1)}(f_{0}(E))=0$ (on the support of $f_{0}(E)$)$,$ and the
second order variation of $\mathcal{H}$ at $f_{0}$ is
\begin{equation}
\mathcal{H}_{f_{0}}^{(2)}[g]\equiv\frac{1}{2}\int\int_{\left\{  f_{0}%
>0\right\}  }\frac{g^{2}}{-f_{0}^{\prime}(E)}dxdv-\frac{1}{8\pi}\int
|\nabla_{x}U_{g}|^{2}dx \label{h2}%
\end{equation}
where $Q^{\prime\prime}(f_{0})=\frac{1}{-f_{0}^{\prime}(E)},\ g=f-f_{0}$ and
$\Delta U_{g}=\int gdv$. In the 1960s, Antonov (\cite{an61}, \cite{ant62})
proved that
\begin{equation}
\mathcal{H}_{f_{0}}^{(2)}[Dh]=\int\int\frac{\left\vert Dh\right\vert ^{2}%
}{\left\vert f_{0}^{\prime}(E)\right\vert }dxdv-\frac{1}{4\pi}\int\left\vert
\nabla\psi_{h}\right\vert ^{2}dx \label{Antonov}%
\end{equation}
is positive definite for a large class of monotone models. Here
\[
D=v\cdot\nabla_{x}-\nabla_{x}U_{0}\cdot\nabla_{v},
\]
$\ h(x,v)$ is odd in $v$ and $-\Delta\psi=\int Dhdv$. He showed that such a
positivity is equivalent to the linear stability of $f_{0}(E)$. In
\cite{dbf73}, Doremus, Baumann and Feix proved the radial stability of any
monotone spherical models. Their proof was further clarified and simplified in
\cite{dbf76}, \cite{sygnet84}, \cite{KS85}, and more recently in \cite{aly96},
\cite{gr-king}. In particular, this implies that any monotone isotropic models
are at least linearly stable.

Unfortunately, despite its importance and a lot of research (e.g.,
\cite{henon73}, \cite{bgh86}, \cite{bar71}, \cite{goodman88}), to our
knowledge, no rigorous and explicit instability criterion of non-monotone
models has been derived. When $f_{0}^{\prime}(E)$ changes sign, functional
$\mathcal{H}_{f_{0}}^{(2)}$ is indefinite and it gives no stability
information, although it seems to suggest that these models are not energy
minimizers under symplectic perturbations. In this paper, we first obtain the
following instability criterion for general spherical galaxies. For any
function $g$ with compact support within the support of $f_{0}(E),$ we define
the $\left\vert f_{0}^{\prime}(E)\right\vert -$weighted $L^{2}\left(
\mathbf{R}^{3}\times\mathbf{R}^{3}\right)  $ space $L_{\left\vert
f_{0}^{\prime}\right\vert }^{2}$ with the norm $\left\Vert \cdot\right\Vert
_{\left\vert f_{0}^{\prime}\right\vert }$ as
\begin{equation}
||h||_{|f_{0}^{\prime}|}^{2}\equiv\int\int|f_{0}^{\prime}(E)|h^{2}dxdv.
\label{weight}%
\end{equation}

\begin{theorem}
\label{theorem-insta}Assume that $f_{0}(E)$ has a compact support in $x$ and
$v,$ and $f_{0}^{\prime}$ is bounded. For $\phi\in H^{1},$ define the
quadratic form
\begin{equation}
(A_{0}\phi,\phi)=\int|\nabla\phi|^{2}dx+4\pi\int\int f_{0}^{\prime}(E)\left(
\phi-\mathcal{P}\phi\right)  ^{2}dxdv, \label{formula-qudratic-A0}%
\end{equation}
where $\mathcal{P}$ is the projector of $L_{\left\vert f_{0}^{\prime
}\right\vert }^{2}$ to
\[
\ker D=\left\{  g\left(  E,L^{2}\right)  \right\}  ,
\]
and more explicitly $\mathcal{P}\phi$ is given by (\ref{a0-radial}) for radial
functions and (\ref{ergodic-spherical}) for general functions. If there exists
$\phi_{0}\in H^{1}$ such that%
\begin{equation}
(A_{0}\phi_{0},\phi_{0})<0, \label{negative-con}%
\end{equation}
then there exists $\lambda_{0}>0$ and $\phi\in H^{2},$ $f\left(  x,v\right)  $
given by (\ref{f-exp}), such that $e^{\lambda_{0}t}[f,\phi]$ is a growing mode
to the Vlasov-Poisson system (\ref{vp}) linearized around $\left[
f_{0}(E),U_{f_{0}}\right]  .$
\end{theorem}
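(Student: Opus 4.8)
The plan is to look for a purely growing mode $e^{\lambda t}[f,\phi]$ with $\phi=U_{f}$, reduce its existence to the solvability of a \emph{self-adjoint} operator equation $\mathcal{A}_{\lambda}\phi=0$ in the single unknown $\phi$, depending on a parameter $\lambda>0$, and then run a continuation argument in $\lambda$, using hypothesis (\ref{negative-con}) at one end and a positivity estimate at the other.

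First I would carry out the reduction. The linearized Vlasov equation is $\partial_{t}f+Df=f_{0}^{\prime}(E)D\phi$ (using $\nabla_{v}f_{0}=f_{0}^{\prime}(E)v$ and $D\phi=v\cdot\nabla_{x}\phi$), so for a mode $f=e^{\lambda t}f$, $\phi=e^{\lambda t}\phi$ it reads $(\lambda+D)f=f_{0}^{\prime}(E)D\phi$. Splitting $f=f^{\mathrm{ev}}+f^{\mathrm{odd}}$ into even and odd parts in $v$, and noting that $D$ reverses the $v$-parity while $f_{0}^{\prime}(E)$ and $\phi$ are even, one gets $\lambda f^{\mathrm{ev}}+Df^{\mathrm{odd}}=0$ and $\lambda f^{\mathrm{odd}}+Df^{\mathrm{ev}}=f_{0}^{\prime}(E)D\phi$; eliminating $f^{\mathrm{ev}}$ gives
\[
f^{\mathrm{odd}}=\lambda f_{0}^{\prime}(E)(\lambda^{2}-D^{2})^{-1}D\phi,\qquad f^{\mathrm{ev}}=-f_{0}^{\prime}(E)(\lambda^{2}-D^{2})^{-1}D^{2}\phi,
\]
which is (\ref{f-exp}). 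Here $D$ is skew-adjoint on the weighted space $L_{|f_{0}^{\prime}|}^{2}$ — the measure $|f_{0}^{\prime}(E)|\,dxdv$ is invariant under the Hamiltonian flow generated by $D$ — so $-D^{2}\geq0$ is self-adjoint and $(\lambda^{2}-D^{2})^{-1}$ is bounded by $\lambda^{-2}$. Since $\int f^{\mathrm{odd}}dv=0$, the Poisson equation $\Delta\phi=4\pi\int f\,dv$ becomes, using $(\lambda^{2}-D^{2})^{-1}D^{2}=-I+\lambda^{2}(\lambda^{2}-D^{2})^{-1}$,
\[
\mathcal{A}_{\lambda}\phi:=-\Delta\phi+4\pi\int f_{0}^{\prime}(E)\,(-D^{2})(\lambda^{2}-D^{2})^{-1}\phi\,dv=0.
\]
The even/odd splitting is exactly what makes $\mathcal{A}_{\lambda}$ self-adjoint on $L^{2}(\mathbb{R}^{3})$. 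Conversely, if $\lambda_{0}>0$ and $\phi\in H^{2}\setminus\{0\}$ solve $\mathcal{A}_{\lambda_{0}}\phi=0$, the displayed formulas define $f$ for which $e^{\lambda_{0}t}[f,\phi]$ is a growing mode; the assumptions that $f_{0}$ has compact support and $f_{0}^{\prime}$ is bounded are used here to make sense of $f\in L^{2}$ and of all the operators involved. Thus it suffices to produce some $\lambda>0$ with $\ker\mathcal{A}_{\lambda}\neq\{0\}$.

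Next I would analyse the family $\{\mathcal{A}_{\lambda}\}_{\lambda>0}$ and its two extremes. Since $f_{0}^{\prime}\in L^{\infty}$ is compactly supported, $\mathcal{A}_{\lambda}+\Delta$ maps $H^{1}(\mathbb{R}^{3})$ compactly into $L^{2}(\mathbb{R}^{3})$: lift $\phi(x)$ to phase space and restrict it to the compact support of $f_{0}^{\prime}$ (a Rellich-compact step), then apply the bounded operator $(-D^{2})(\lambda^{2}-D^{2})^{-1}$ and integrate in $v$. Hence $\mathcal{A}_{\lambda}=(-\Delta)^{1/2}(I+\mathcal{K}_{\lambda})(-\Delta)^{1/2}$ with $\mathcal{K}_{\lambda}$ compact and self-adjoint, so $\ker\mathcal{A}_{\lambda}\neq\{0\}$ iff $-1$ is an eigenvalue of $\mathcal{K}_{\lambda}$; and $\mathcal{K}_{\lambda}$ is norm-continuous in $\lambda$ because the spectral calculus for $-D^{2}$ gives $\|(-D^{2})(\lambda^{2}-D^{2})^{-1}-(-D^{2})(\mu^{2}-D^{2})^{-1}\|=|\lambda-\mu|/(\lambda+\mu)$. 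Therefore $\kappa(\lambda):=\min\sigma(\mathcal{K}_{\lambda})$, the least eigenvalue (attained, as $\mathcal{K}_{\lambda}$ is compact), is continuous on $(0,\infty)$, and $\kappa(\lambda)=-1$ forces $\ker\mathcal{A}_{\lambda}\neq\{0\}$. As $\lambda\to0^{+}$, $\lambda^{2}(\lambda^{2}-D^{2})^{-1}$ converges strongly to the orthogonal projection $\mathcal{P}$ onto $\ker D=\ker D^{2}$ — this is also what identifies $\mathcal{P}$ with the orbit average along the flow of $D$, i.e. with the explicit formulas (\ref{a0-radial}), (\ref{ergodic-spherical}) — so $(\mathcal{A}_{\lambda}\phi_{0},\phi_{0})\to\int|\nabla\phi_{0}|^{2}+4\pi\int\int f_{0}^{\prime}(\phi_{0}-\mathcal{P}\phi_{0})\phi_{0}$, and this limit equals $(A_{0}\phi_{0},\phi_{0})$ since $\mathrm{sgn}(f_{0}^{\prime}(E))\,\mathcal{P}\phi_{0}$ lies in $\ker D$ while $\phi_{0}-\mathcal{P}\phi_{0}$ is orthogonal to $\ker D$ in $L_{|f_{0}^{\prime}|}^{2}$, killing the cross term. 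By (\ref{negative-con}) the limit is negative, so for small $\lambda$ one has $(\mathcal{A}_{\lambda}\phi_{0},\phi_{0})<0$, hence $\kappa(\lambda)<-1$. For large $\lambda$, because $D$ applied to the lift of $\phi$ costs only one derivative of $\phi$, one gets (with $C$ depending only on $f_{0}$) $\langle(-D^{2})(\lambda^{2}-D^{2})^{-1}\phi,\phi\rangle_{|f_{0}^{\prime}|}\leq\lambda^{-2}\langle-D^{2}\phi,\phi\rangle_{|f_{0}^{\prime}|}=\lambda^{-2}\|v\cdot\nabla\phi\|_{|f_{0}^{\prime}|}^{2}\leq C\lambda^{-2}\|\nabla\phi\|_{L^{2}}^{2}$, so $(\mathcal{A}_{\lambda}\phi,\phi)\geq(1-C\lambda^{-2})\|\nabla\phi\|_{L^{2}}^{2}>0$, whence $\mathcal{A}_{\lambda}>0$ and $\kappa(\lambda)>-1$.

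Finally, the intermediate value theorem applied to the continuous function $\kappa$ yields $\lambda_{0}\in(0,\infty)$ with $\kappa(\lambda_{0})=-1$; the corresponding eigenvector $\psi$ of $\mathcal{K}_{\lambda_{0}}$ gives $\phi:=(-\Delta)^{-1/2}\psi\neq0$ with $\mathcal{A}_{\lambda_{0}}\phi=0$, and elliptic regularity (the datum $-\Delta\phi$ being a compactly supported $L^{2}$ function) places $\phi\in H^{2}$; feeding $(\lambda_{0},\phi)$ into the formulas of the reduction step produces the claimed growing mode $e^{\lambda_{0}t}[f,\phi]$. I expect the hardest parts to be the two quantitative statements at the ends of the $\lambda$-interval: the compactness and norm-continuity of $\mathcal{K}_{\lambda}$ (resting on the Rellich embedding off the support of $f_{0}^{\prime}$ and on the explicit spectral estimate for $(-D^{2})(\lambda^{2}-D^{2})^{-1}$), and the uniform bound forcing $\mathcal{A}_{\lambda}>0$ for large $\lambda$ — the subtle point being that this estimate must hold on the form domain $H^{1}$, which works precisely because one derivative of $\phi$ already controls $D$ of its lift. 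A further technical burden is to make the even/odd reduction rigorous and to verify that the reconstructed $[f,\phi]$ is a genuine solution of the linearized system in the appropriate function spaces.
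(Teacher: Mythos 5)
Your overall strategy is the one the paper itself uses: reduce a purely growing mode to the kernel of a one-parameter family of self-adjoint dispersion operators in $\phi$ alone, show the quadratic form converges to $(A_{0}\phi,\phi)$ as $\lambda\rightarrow0^{+}$ (so hypothesis (\ref{negative-con}) makes it negative somewhere), show positivity for large $\lambda$, and continue in $\lambda$ to find a crossing. Indeed your $\mathcal{A}_{\lambda}$ coincides with the paper's $A_{\lambda}$: the trajectory integral $\int_{-\infty}^{0}\lambda e^{\lambda s}\phi(X(s))\,ds=\lambda(\lambda+D)^{-1}\phi$, and its odd-in-$v$ part drops out under $\int f_{0}^{\prime}\,dv$, leaving exactly your $\lambda^{2}(\lambda^{2}-D^{2})^{-1}$ term; your $\lambda\rightarrow0^{+}$ step is the paper's ergodic lemma in the weighted space $L_{\left\vert f_{0}^{\prime}\right\vert}^{2}$. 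Where you genuinely diverge is the crossing argument. The paper works with $k(\lambda)=\inf\operatorname{spec}A_{\lambda}$, whose zero at $\lambda_{\ast}$ sits at the bottom of the essential spectrum $[0,+\infty)$, and it must extract a genuine kernel element by normalizing $\left\Vert \chi\phi_{n}\right\Vert _{2}=1$ for approximate eigenfunctions and passing to weak limits (a compactness step tied to the compact support of $f_{0}$). You instead conjugate by $(-\Delta)^{1/2}$, writing $\mathcal{A}_{\lambda}=(-\Delta)^{1/2}(I+\mathcal{K}_{\lambda})(-\Delta)^{1/2}$ with $\mathcal{K}_{\lambda}$ compact (Rellich plus the cutoff to the support of $f_{0}^{\prime}$) and norm-continuous in $\lambda$, and locate $\lambda_{0}$ where $\min\sigma(\mathcal{K}_{\lambda})=-1$; since $-1<0$ is automatically an eigenvalue of a compact self-adjoint operator, the essential-spectrum difficulty disappears and the endpoint extraction is unnecessary. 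This Birman--Schwinger-type packaging is a legitimate and arguably cleaner variant, and your functional-calculus formulation of the even part also makes the symmetry of the nonlocal term manifest, where the paper argues via time-reversal of trajectories.

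One step is incomplete as written: membership $\phi\in H^{2}(\mathbf{R}^{3})$. From the eigenvector of $\mathcal{K}_{\lambda_{0}}$ you only get $\phi\in\dot{H}^{1}$, and elliptic regularity with a compactly supported $L^{2}$ right-hand side gives local $H^{2}$ but not $\phi\in L^{2}$ at infinity: the Newtonian potential of a compactly supported density generically decays only like $\left\vert x\right\vert ^{-1}$, which is not square integrable. You need the paper's additional observation that the induced density $\rho$ in (\ref{poisson}) has $\int\rho\,dx=0$ (a direct computation using $\int_{-\infty}^{0}\lambda_{0}e^{\lambda_{0}s}ds=1$ and the measure-preserving flow), which upgrades the decay to $O(\left\vert x\right\vert ^{-2})$ and hence gives $\phi\in L^{2}$, then $H^{2}$. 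This is a fixable gap, but it must be filled since $\phi\in H^{2}$ is part of the statement; similarly, the final verification that the reconstructed $f$ from (\ref{f-exp}) solves the linearized Vlasov equation (weakly) should be carried out, as the paper does with test functions, rather than deferred.
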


A similar instability criterion can be obtained for symmetry preserving
perturbations of anisotropic spherical models $f_{0}\left(  E,L^{2}\right)  $,
see Remark \ref{remark anistropic}. We note that the term $\mathcal{P}\phi$ in
the instability criterion is highly non-local and this reflects the collective
nature of stellar instability. The proof of Theorem \ref{theorem-insta} is by
extending an approach developed in \cite{lin01} for 1D Vlasov-Poisson, which
has recently been generalized to Vlasov-Maxwell systems (\cite{lw-linear},
\cite{lw-new}). There are two elements in this approach. One is to formulate a
family of dispersion operators $A_{\lambda}$ for the potential, depending on a
positive parameter $\lambda$. The existence of a purely growing mode is
reduced to find a parameter $\lambda_{0}$ such that the $A_{\lambda_{0}}$ has
a kernel. The key observation is that these dispersion operators are
self-adjoint due to the reversibility of the particle trajectories. Then a
continuation argument is applied to find the parameter $\lambda_{0}$
corresponding to a growing mode, by comparing the spectra of $A_{\lambda}$ for
very small and large values of $\lambda$. There are two new complications in
the stellar case. First, the essential spectrum of $A_{\lambda}$ is
$[0,+\infty)$ and thus we need to make sure that the continuation does not end
in the essential spectrum$.$This is achieved by using some compactness
property due to the compact support of the stellar model. Secondly, it is more
tricky to find the limit of $A_{\lambda}$ when $\lambda$ tends to zero. For
that, we need an ergodic lemma (Lemma \ref{lemma-ergodic}) and use the
integrable nature of the particle dynamics in a central field to derive an
expression for the projection $\mathcal{P}\phi$ appeared in the limit.

In the second part of the article, we further study the nonlinear (dynamical)
stability of the normalized King model:%

\begin{equation}
f_{0}=[e^{E_{0}-E}-1]_{+} \label{king}%
\end{equation}
motivated by the study of the operator $A_{0}.$ The famous King model
describes isothermal galaxies and the core of most globular clusters
\cite{king}. Such a model provides a canonical form for many galaxy models
widely used in astronomy. Even though $f_{0}^{\prime}<0$ for the King model,
it is important to realize that, because of the Hamiltonian nature of the
Vlasov-Poisson system (\ref{vp}), linear stability fails to imply nonlinear
stability (even in the finite dimensional case). The Liapunov functional is
usually required to prove nonlinear stability. In the Casimir-energy
functional (\ref{h}), it is natural to expect that the positivity of such a
quadratic form $\mathcal{H}_{f_{0}}^{(2)}[g]$ should imply stability for
$f_{0}(E)$. However, there are at least two serious mathematical difficulties.
First of all, it is very challenging to use the positivity of $\mathcal{H}%
_{f_{0}}^{(2)}[g]$ to control higher order remainder in $\mathcal{H}%
(f)-\mathcal{H}(f_{0})$ to conclude stability \cite{wang}. For example, one of
the remainder terms is $f^{3}$ whose $L^{2}$ norm is difficult to be bounded
by a power of the stability norm. The non-smooth nature of $f_{0}(E)$ also
causes trouble here$.$ Second of all, even if one can succeed in controlling
the nonlinearity, the positivity of $H_{f_{0}}^{(2)}[g]$ is \textit{only}
valid for certain perturbation of the form $g=Dh$ \cite{KS85}. It is not clear
at all if \textit{any} arbitrary, general perturbation can be reduced to the
form $Dh$. To overcome these two difficulties, a direct variational approach
was initiated by Wolansky \cite{wol}, then further developed systematically by
Guo and Rein in \cite{G1}, \cite{G2}, \cite{GR2}, \cite{GR3}, \cite{GR4}.
Their method avoids entirely the delicate analysis of the second order
variation $\mathcal{H}_{f_{0}}^{(2)}$ in (\ref{h2}), which has led to first
rigorous nonlinear stability proof for a large class of $f_{0}(E).$ The high
point of such a program is the nonlinear stability proof for every polytrope
\cite{GR3} $f_{0}(E)=(E_{0}-E)_{+}^{k}$. Their basic idea is to construct
galaxy models by solving a variational problem of minimizing the energy under
some constraints of Casimir invariants. A concentration-compactness argument
is used to show the convergence of the minimizing sequence. All the models
constructed in this way are automatically stable. \ 

Unfortunately, despite its success, the King model can not be studied by such
a variational approach. The Casimir function for a \textit{normalized} King
model is
\begin{equation}
Q_{0}(f)=(1+f)\ln(1+f)-1-f, \label{q}%
\end{equation}
which has very slow growth for $f\rightarrow\infty.$ As a result, the direct
variational method fails. Recently, Guo and Rein \cite{gr-king} proved
nonlinear radial stability among a class of measure-preserving perturbations
\begin{equation}
\mathcal{S}_{f_{0}}\equiv\left[  f(t,r,v_{r},L)\geq0:\;\int Q(f,L)=\int
Q(f_{0},L),\text{ for }Q\in C_{c}^{\infty}\text{ and }Q(0,L)\equiv0.\right]  .
\label{s}%
\end{equation}
The basic idea is to observe that for perturbations in the class
$\mathcal{S}_{f_{0}}$, one can write $g=f-f_{0}$ as $Dh=\left\{  h,E\right\}
$. Therefore, $\mathcal{H}_{f_{0}}^{(2)}[g]$ $=\mathcal{H}_{f_{0}}^{(2)}[Dh]$,
for which the positivity was proved in \cite{KS85} for radial perturbations.
To avoid the difficulty of controlling the remainder term by $\mathcal{H}%
_{f_{0}}^{(2)}[g]$, an indirect contradiction argument was used in
\cite{gr-king}.

As our second main result of this article, we establish nonlinear stability of
King's model for general perturbations with spherical symmetry:

\begin{theorem}
\label{theorem-stability}The King's model $f_{0}=[e^{E_{0}-E}-1]_{+}$ is
nonlinearly stable under spherically symmetric perturbations in the following
sense: given any $\varepsilon>0$ there exists $\varepsilon_{1}>0$ such that
for any compact supported initial data $f(0)\in C_{c}^{1}$ with spherical
symmetry, if $d\left(  f\left(  0\right)  ,f_{0}\right)  <\varepsilon_{1}$
then
\[
\sup_{0\leq t<\infty}d\left(  f\left(  t\right)  ,f_{0}\right)  <\varepsilon,
\]
where the distance functional $d\left(  f,f_{0}\right)  $ is defined by
(\ref{defn-distance}).
\end{theorem}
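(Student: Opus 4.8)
The plan is to prove Theorem \ref{theorem-stability} by an indirect, contradiction-based variational argument, adapting the strategy of \cite{gr-king} from the radial class $\mathcal{S}_{f_{0}}$ to the larger class of all spherically symmetric perturbations. The point is that a general spherically symmetric perturbation need \emph{not} be writable as $g = Dh = \{h,E\}$ on the nose; however, one can decompose it. Writing $g = f - f_{0}$, the key observation is that for spherically symmetric $f$ the flow of $D$ on the energy surfaces $\{E = \text{const}\}$ in the reduced $(r,v_r,L)$ phase space is \emph{ergodic} on the surfaces $\{E=c, L=\ell\}$ when $\ell \neq 0$, so that the obstruction to solving $Dh = g$ is exactly the fiber-average of $g$ over the $D$-flow — and this average is precisely $\mathcal{P}g$, the projection from Lemma \ref{lemma-ergodic}. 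Thus I would split $g = Dh + (I-\text{solvable part})$, with the non-$Dh$ part being a function of $(E,L^2)$, i.e. a perturbation \emph{within} the Jeans class. On the $Dh$ part, Antonov's coercivity (\ref{Antonov}), which holds for the King model since $f_0' < 0$ on its support, gives positivity of $\mathcal{H}_{f_0}^{(2)}$; on the $(E,L^2)$-part, one argues separately that $\mathcal{H}_{f_0}^{(2)}$ is still controlled — this is where the hypothesis that $A_0$ is positive definite for King (the second main result flagged in the abstract) enters, since $A_0$ is exactly the reduced operator governing the $\mathcal{P}\phi$-component.

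Concretely I would proceed as follows. First, set up the distance functional $d(f,f_0)$ from (\ref{defn-distance}) and record the two conserved quantities along the flow: the Casimir-energy functional $\mathcal{H}$ of (\ref{h}) (with $Q = Q_0$ the King Casimir (\ref{q})) and all Casimirs $\int\int C(f)$, plus conservation of total mass; by the choice $Q_0'(f_0(E)) = -E$ one gets $\mathcal{H}^{(1)}(f_0) = 0$ and the expansion $\mathcal{H}(f) - \mathcal{H}(f_0) = \mathcal{H}_{f_0}^{(2)}[g] + \mathcal{R}[g]$ with a cubic-order remainder $\mathcal{R}$. Second, assume for contradiction that stability fails: there is $\varepsilon_0 > 0$ and a sequence of spherically symmetric $C_c^1$ data $f_n(0)$ with $d(f_n(0), f_0) \to 0$ but $d(f_n(t_n), f_0) = \varepsilon_0$ for some times $t_n$. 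Set $g_n = f_n(t_n) - f_0$. Third, use conservation: $\mathcal{H}(f_n(t_n)) = \mathcal{H}(f_n(0)) \to \mathcal{H}(f_0)$, hence $\mathcal{H}_{f_0}^{(2)}[g_n] + \mathcal{R}[g_n] \to 0$. Fourth — the heart of the matter — establish that along the equimeasurable constraint (the $f_n(t_n)$ are rearrangements of the data, up to the flow), the perturbation $g_n$ can be split as $g_n = D h_n + b_n$ with $b_n = b_n(E,L^2)$, and that $\mathcal{H}_{f_0}^{(2)}[g_n] \geq \kappa\,\|g_n\|_{|f_0'|}^{-}{}^2$-type coercivity holds modulo the $A_0$-direction; since $A_0 > 0$ for King, the full quadratic form is positive-definite relative to the natural norm, so $\mathcal{H}_{f_0}^{(2)}[g_n] \to 0$ forces $\|g_n\| \to 0$ in the relevant sense, contradicting $d(f_n(t_n),f_0) = \varepsilon_0$ once one checks $d$ is controlled by that norm plus an equimeasurability term that also vanishes.

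The main obstacle — and the step I would spend the most care on — is Step four: the remainder control together with the splitting. Controlling $\mathcal{R}[g_n]$ is genuinely delicate because, as the authors note, the King Casimir $Q_0$ grows only like $f\ln f$, so the natural higher-order term behaves worse than any polynomial power of the stability norm and the non-smoothness of $f_0$ at $\partial\{f_0 > 0\}$ compounds this; the contradiction argument is designed precisely to sidestep a direct estimate $|\mathcal{R}[g]| \lesssim \|g\|^{2+\delta}$, but one still needs \emph{some} quantitative handle — e.g. an $L^1$-plus-kinetic-energy bound from conservation, plus lower semicontinuity of $\mathcal{H}_{f_0}^{(2)}$ along weakly convergent subsequences — to pass to the limit and derive the contradiction. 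A secondary but real difficulty is justifying the decomposition $g = Dh + b$ uniformly in $n$: one must show the $D$-flow is ergodic on generic energy-angular-momentum surfaces (Lemma \ref{lemma-ergodic} gives the pointwise ergodic average, but turning it into a bounded solution operator $g \mapsto h$ on the weighted space requires a spectral-gap or at least a closed-range statement for $D$ restricted to spherically symmetric functions), and then that the $b$-component, living in $\ker D$, is exactly the object on which the positivity of $A_0$ — the content of the paper's second half — can be invoked. Modulo these two points, concentration-compactness is \emph{not} needed here (unlike the direct variational method of \cite{GR3}), which is what makes the King model tractable by this route despite the failure of the energy-Casimir minimization approach.
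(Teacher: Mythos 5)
Your proposed route has a genuine gap at its core step. You plan to split $g = Dh + b$ with $b=b(E,L^2)\in\ker D$, apply Antonov coercivity to the $Dh$ part, and control the kernel part $b$ by invoking the positivity of $A_0$. But $A_0$ acts on the potential $\phi(x)$, and its positivity (Lemma \ref{lemma-ao}) is proved \emph{from} Antonov's positivity precisely by solving $Dh=\phi_0-\mathcal{P}\phi_0$: it is a dual statement, equivalent to coercivity of $\mathcal{H}^{(2)}_{f_0}$ on the range of $D$, and it gives no control on directions $b(E,L^2)\in\ker D$. As the paper stresses, positivity of $\mathcal{H}^{(2)}_{f_0}$ holds only for perturbations of the form $Dh$, and a general spherically symmetric perturbation cannot be reduced to that form; so your step four, as stated, cannot close. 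What actually handles the kernel component in the paper is a finite family of extra Casimir invariants $J_i(f,L)=\int Q_i(f,L)\,dxdv$ with $\partial_1Q_i(f_0,L)=\xi_i(E,L)$: expanding their conservation to first order shows that the coefficients $(g_{\mathrm{in}},\xi_i)_{|f_0'|}$ of the finite-dimensional projection $\mathcal{P}_N g_{\mathrm{in}}$ are of size $d(0)^{1/2}$ plus higher order in $d(t)$ (the term $I^3_{\mathrm{in}}$), while Lemma \ref{lemma-an} shows the truncated operator $A^N$ inherits the coercivity of $A_0$. Nothing in your sketch plays this role, and without it the quadratic form is simply not coercive (indeed not sign-definite) on your $b$-component.

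The second gap is the remainder. You correctly note that the King Casimir $Q_0(f)=(1+f)\ln(1+f)-f$ grows too slowly for a direct power-type estimate, and you propose to sidestep this by a contradiction argument using weak convergence and ``lower semicontinuity of $\mathcal{H}^{(2)}_{f_0}$''; but this functional contains the negative term $-\frac{1}{8\pi}\int|\nabla U_g|^2$ and is indefinite off the range of $D$, so weak lower semicontinuity does not produce the contradiction, and the ``quantitative handle'' you admit is needed is exactly the missing ingredient. The paper avoids all of this by a pointwise Legendre-duality trick (Lemma \ref{lemma-duality}) with $c=-(\phi_{\mathrm{in}}-P_N\phi_{\mathrm{in}})$, which converts the nonlinear term in $g_{\mathrm{in}}$ into $(f_0+1)(1+c-e^c)$, a nonlinear expression in the potential alone; Taylor expansion then yields the truncated quadratic form controlled by Lemma \ref{lemma-an} plus a cubic remainder in $\phi_{\mathrm{in}}-P_N\phi_{\mathrm{in}}$, harmless because $|\phi_{\mathrm{in}}|_\infty\le Cd^{1/2}$ for spherically symmetric potentials. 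Combined with a separate quantitative estimate of the outer part $g_{\mathrm{out}}$ (relative to the support of $f_0$, reflected in the definition (\ref{defn-distance})) and a continuity-in-$t$ bootstrap, this gives a direct inequality $I(f(0))-I(f_0)\ge \delta' d(t)-C(\mathrm{higher\ order})-Cd(t)^{1/2}d(0)^{1/2}$, hence stability, with no compactness or contradiction argument. If you insist on the indirect scheme, you would still need both missing ingredients (near-conservation of the $\xi_i$-projections and duality control of the remainder through $\phi$), at which point the direct argument is simpler.
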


For the proof, we extended the approach in \cite{lw-nonlinear} for the
$1\frac{1}{2}D$ Vlasov-Maxwell model. To prove nonlinear stability, we study
the Taylor expansion of $\mathcal{H}(f)-\mathcal{H}(f_{0})$. Two difficulties
as mentioned before are: to prove the positivity of the quadratic form and to
control the remainder. We use two ideas introduced in \cite{lw-nonlinear}. The
first idea is to use any finite number of Casimir functional $Q_{i}\left(
f,L^{2}\right)  $ as constraints. The difference from \cite{gr-king} is that
we do not impose $Q_{i}\left(  f,L^{2}\right)  =Q_{i}\left(  f_{0}%
,L^{2}\right)  $ in the perturbation class, but expand the invariance equation
$Q_{i}\left(  f\left(  t\right)  ,L^{2}\right)  -Q_{i}\left(  f_{0}%
,L^{2}\right)  =Q_{i}\left(  f\left(  0\right)  ,L^{2}\right)  -Q_{i}\left(
f_{0},L^{2}\right)  $ to the first order. In this way, we get a constraint for
$g=f-f_{0}$ in the form that the coefficient of its projection to
$\partial_{1}Q_{i}\left(  f_{0},L^{2}\right)  $ is small. Putting these
constraints together, we deduce that a finite dimensional projection of $g$ to
the space spanned by $\left\{  \partial_{1}Q_{i}\left(  f_{0},L^{2}\right)
\right\}  $ is small. To control the remainder term, we use a duality
argument. Noting that it is much easier to control the potential $\phi$, we
use a Legendre transformation to reduce the nonlinear term in $g$ to a new one
in $\phi$ only. The key observation is that the constraints on $g$ in the
projection form are nicely suited to the Legendre transformation and yields a
non-local nonlinear term in $\phi$ only with the projections kept. By
performing a Taylor expansion of this non-local nonlinear term in $\phi$, the
quadratic form becomes a truncated version of $(A_{0}\phi,\phi)$ defined by
(\ref{formula-qudratic-A0}), whose positivity can be shown to be equivalent to
that of Antonov functional. The the remainder term now is only in terms of
$\phi$ and can be easily controlled by the quadratic form. The new
complication in the stellar case is that the steady distribution $f_{0}\left(
E\right)  $ is non-smooth and compactly supported. Therefore, we split the
perturbation $g$ into inner and outer parts, according to the support of
$f_{0}$. For the inner part, we use the above constrainted duality argument
and the outer part is estimated separately.

\section{An Instability Criterion}

We consider a steady distribution
\[
f_{0}\left(  x,v\right)  =f_{0}(E)
\]
has a bounded support in $x$ and $v$ and $f_{0}^{\prime}$ is bounded, where
the particle energy $E=\frac{1}{2}|v|^{2}+U_{0}(x).$ The steady gravitational
potential $U_{0}(x)$ satisfies a nonlinear Poisson equation%
\[
\Delta U_{0}=4\pi\int f_{0}dv.
\]

The linearized Vlasov-Poisson system is
\begin{equation}
\partial_{t}f+v\cdot\nabla_{x}f-\nabla_{x}U_{0}\cdot\nabla_{v}f=\nabla_{x}%
\phi\cdot\nabla_{v}f_{0},\text{ \ \ \ \ \ }\Delta\phi=4\pi\int f(t,x,v)dv.
\label{lvp}%
\end{equation}
A growing mode solution $(e^{\lambda t}f(x,v),e^{\lambda t}\phi(x))$ to
(\ref{vp}) with $\lambda>0$ satisfies%
\begin{equation}
\lambda f+v\cdot\nabla_{x}f-\nabla_{x}U_{0}\cdot\nabla_{v}f=f_{0}^{\prime
}v\cdot\nabla_{x}\phi. \label{linear-vlasov}%
\end{equation}
We define $[X(s;x,v),V(s;x,v)]$ as the trajectory of%

\begin{equation}
\left\{
\begin{array}
[c]{c}%
\frac{dX(s;x,v)}{ds}=V(s;x,v)\\
\frac{dV(s;x,v)}{ds}=-\nabla_{x}U_{0}%
\end{array}
\right.  \label{trajectory-ode}%
\end{equation}
such that $X(0;x,v)=x,~$and $V(0;x,v)=v.$ Notice that the particle energy $E$
is constant along the trajectory. Integrating along such a trajectory for
$-\infty\leq s\leq0$, we have%
\begin{align}
f(x,v)  &  =\int_{-\infty}^{0}e^{\lambda s}f_{0}^{\prime}(E)V(s;x,v)\cdot
\nabla_{x}\phi(X(s;x,v))ds\label{f-exp}\\
&  =f_{0}^{\prime}(E)\phi(x)-f_{0}^{\prime}(E)\int_{-\infty}^{0}\lambda
e^{\lambda s}\phi(X(s;x,v))ds.\nonumber
\end{align}

Plugging it back into the Poisson equation, we obtain an equation for $\phi$%
\[
-\Delta\phi+[4\pi\int f_{0}^{\prime}(E)dv]\phi-4\pi\int f_{0}^{\prime}%
(E)\int_{-\infty}^{0}\lambda e^{\lambda s}\phi(X(s;x,v))dsdv=0.
\]
We therefore define the operator $A_{\lambda}$ as
\[
A_{\lambda}\phi\equiv-\Delta\phi+[4\pi\int f_{0}^{\prime}(E)dv]\phi-4\pi\int
f_{0}^{\prime}(E)\int_{-\infty}^{0}\lambda e^{\lambda s}\phi(X(s;x,v))dsdv.
\]

\begin{lemma}
\label{lemma-operator}Assume that $f_{0}(E)$ has a bounded support in $x$ and
$v$ and $f_{0}^{\prime}$ is bounded. For any $\lambda>0$, the operator
$A_{\lambda}:H^{2}\rightarrow$ $L^{2}$ is self-adjoint with the essential
spectrum $[0,+\infty)\,.$
\end{lemma}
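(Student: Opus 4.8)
The plan is to analyze the operator $A_{\lambda}\phi=-\Delta\phi+[4\pi\int f_0'(E)dv]\phi-4\pi\int f_0'(E)\int_{-\infty}^0\lambda e^{\lambda s}\phi(X(s;x,v))\,ds\,dv$ by separating the unbounded Laplacian part from a lower-order perturbation. First I would show that $-\Delta$ with domain $H^2(\mathbf{R}^3)$ is self-adjoint with essential spectrum $[0,+\infty)$ (standard), and that the remaining operator
\[
B_{\lambda}\phi\equiv[4\pi\int f_0'(E)\,dv]\phi-4\pi\int f_0'(E)\int_{-\infty}^0\lambda e^{\lambda s}\phi(X(s;x,v))\,ds\,dv
\]
is bounded and self-adjoint on $L^2$. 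Boundedness is where the hypotheses enter: since $f_0$ has compact support in $(x,v)$ and $f_0'$ is bounded, the function $\rho_0(x)\equiv 4\pi\int f_0'(E)\,dv$ is a bounded, compactly supported multiplier, so the first term is a bounded self-adjoint operator. For the second term, the $v$-integration is over the compact region $\{E\le E_0\}$ and the trajectory map $(x,v)\mapsto X(s;x,v)$ preserves the energy surfaces and (by Liouville) the measure $dx\,dv$; combined with $\int_{-\infty}^0\lambda e^{\lambda s}\,ds=1$, a Cauchy--Schwarz / Fubini estimate bounds the $L^2\to L^2$ norm by $C\|f_0'\|_\infty\,|\mathrm{supp}_v f_0|$. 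Self-adjointness of this integral operator is the crucial structural point and follows from the time-reversibility of the characteristic flow: changing variables $(x,v)\mapsto(X(s;x,v),V(s;x,v))$, which is measure-preserving with inverse given by running the flow backward (equivalently $s\mapsto -s$ together with $v\mapsto -v$), shows the kernel is symmetric. I would state this as a lemma and verify the symmetry identity $(B_\lambda\phi,\psi)=(\phi,B_\lambda\psi)$ by this substitution.

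Next, with $A_\lambda=-\Delta+B_\lambda$ where $-\Delta$ is self-adjoint on $H^2$ and $B_\lambda$ is bounded self-adjoint, the Kato--Rellich theorem immediately gives that $A_\lambda$ is self-adjoint on the domain $H^2$. For the essential spectrum, I would invoke Weyl's theorem on the stability of the essential spectrum under relatively compact perturbations: I need $B_\lambda$ to be relatively compact with respect to $-\Delta$, i.e., $B_\lambda(-\Delta+1)^{-1}$ compact on $L^2$. Here the compact support of $f_0$ in $x$ is essential — the multiplier $\rho_0$ and the whole of $B_\lambda$ are supported in a fixed ball $\{|x|\le R_0\}$, and $(-\Delta+1)^{-1}$ maps $L^2$ into $H^2$; composing with multiplication by a compactly supported bounded function gives a map into $H^2$ of a bounded region with bounded image, which by Rellich--Kondrachov embeds compactly into $L^2$. (For the trajectory-integral piece one also uses that $\phi\mapsto\int_{-\infty}^0\lambda e^{\lambda s}\phi(X(s;\cdot))\,ds$ is bounded $H^1\to H^1$ localized to the support, so the same Rellich argument applies.) Hence $\sigma_{ess}(A_\lambda)=\sigma_{ess}(-\Delta)=[0,+\infty)$.

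The main obstacle I expect is the relative compactness of the nonlocal trajectory operator $T_\lambda\phi\equiv\int_{-\infty}^0\lambda e^{\lambda s}\phi(X(s;x,v))\,ds$ after integrating against $f_0'(E)\,dv$: one must show this maps $H^2$ (or even $H^1$) compactly into $L^2$ when restricted to the support. The subtlety is that $\phi$ is evaluated along trajectories that may leave the support of $f_0$ in $x$, so the naive "compact support in $x$" argument is not quite enough — the gain of compactness must come from the $v$-integration over the bounded energy region together with a uniform-in-$s$ control of the flow, plus the fixed decaying weight $\lambda e^{\lambda s}$ making the $s$-integral a genuine probability average. I would handle this by showing $T_\lambda$ composed with multiplication by $\rho_0$-type localization and the $v$-average lands in $H^1$ of a bounded set (differentiating under the integral, using that $\nabla_x X(s;x,v)$ and $\nabla_v X(s;x,v)$ are bounded on the relevant bounded time intervals weighted by $\lambda e^{\lambda s}$, which requires a Gronwall estimate on the variational equation for the flow using boundedness of $\nabla^2 U_0$ on $\mathrm{supp}\,f_0$), then invoking Rellich. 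An alternative, cleaner route is to prove $A_\lambda = -\Delta + B_\lambda$ has $B_\lambda$ compact relative to $-\Delta$ simply because $B_\lambda: L^2\to L^2$ is bounded with range contained in functions supported in $\{|x|\le R_0\}$ and $(-\Delta+1)^{-1}$ is smoothing there — which sidesteps the flow regularity entirely and is the argument I would ultimately present.
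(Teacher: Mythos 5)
Your overall architecture (split $A_\lambda=-\Delta+B_\lambda$, boundedness via the measure-preserving flow and Cauchy--Schwarz, symmetry via time reversibility, Kato--Rellich, then Weyl) matches the paper, but the one step that actually carries the essential-spectrum claim --- relative compactness of the nonlocal trajectory term, i.e.\ compactness of $B_\lambda(-\Delta+1)^{-1}$ --- is not established by either of the arguments you offer. Your first route needs $\phi\mapsto\int_{-\infty}^0\lambda e^{\lambda s}\phi(X(s;\cdot,\cdot))\,ds$ to be bounded $H^1\to H^1$ by differentiating under the integral; but the Gronwall estimate on the variational equation only gives $|\nabla_x X(s;x,v)|\le e^{C|s|}$ with $C$ tied to $\sup|\nabla^2 U_0|$, and $\int_{-\infty}^0\lambda e^{\lambda s}e^{C|s|}\,ds$ diverges as soon as $\lambda\le C$, whereas the lemma is asserted for every $\lambda>0$; the phrase ``bounded on the relevant bounded time intervals'' does not apply, since the time integral runs over all of $(-\infty,0]$. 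Your fallback ``cleaner route'' --- $B_\lambda$ bounded with range supported in a fixed ball, plus smoothing of $(-\Delta+1)^{-1}$ --- is a non sequitur as stated: in the composition $B_\lambda(-\Delta+1)^{-1}$ the resolvent smooths the \emph{input} before $B_\lambda$ acts, and a bounded operator whose \emph{output} happens to be compactly supported need not map $H^2$-bounded sets to precompact sets (one can construct a bounded $B$ with range in $L^2$ of a fixed ball, built from translates escaping to infinity, for which $B(-\Delta+1)^{-1}$ is not compact). Output localization alone does not give relative compactness; you need input localization or an additional structural fact.

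The ingredient you are missing is the paper's key observation, which also removes the worry you raise about trajectories leaving the support: since $E$ is conserved, $f_0'(E(x,v))\,\phi(X(s;x,v))=\bigl(f_0'(E)\,\phi\bigr)(X(s;x,v),V(s;x,v))$, and $f_0'(E)\neq0$ forces $U_0(X(s))\le E<E_0$, so the characteristics that the operator actually samples never exit the spatial support of $f_0$. Hence, with $\chi$ a cutoff equal to $1$ on that support, the perturbation satisfies $B_\lambda=B_\lambda M_\chi$ (the paper's identity $K_\lambda=K_\lambda M_\chi$), and then $B_\lambda(-\Delta+1)^{-1}=B_\lambda\bigl[M_\chi(-\Delta+1)^{-1}\bigr]$ is compact by Rellich--Kondrachov, with no estimate on the flow derivatives at all. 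Alternatively, your cleaner route can be repaired by using the symmetry you already proved: from $B_\lambda=M_\chi B_\lambda$ and $B_\lambda=B_\lambda^{*}$ one gets $B_\lambda=B_\lambda M_\chi$ (equivalently, $(-\Delta+1)^{-1}B_\lambda$ is compact by your output-localization argument and its adjoint is $B_\lambda(-\Delta+1)^{-1}$); but this adjoint step must be stated explicitly --- as written, the relative-compactness claim, and with it the identification of $\sigma_{\mathrm{ess}}(A_\lambda)=[0,+\infty)$, is unsupported.
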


\begin{proof}
We denote
\[
K_{\lambda}\phi=-4\pi\lbrack\int f_{0}^{\prime}(E)dv]\phi+4\pi\int
f_{0}^{\prime}(E)\int_{-\infty}^{0}\lambda e^{\lambda s}\phi
(X(s;x,v))dsdv.\text{ }%
\]
Recall that $f_{0}\left(  x,v\right)  =f_{0}(E)$ has a compact support
$\subset S\subset\mathbb{R}_{x}^{3}\times\mathbb{R}_{v}^{3}$. We may assume
$S=S_{x}\times S_{v}$, both balls in $\mathbb{R}^{3}$. Let $\chi=\chi\left(
|x|\right)  $ be a smooth cut-off function for the spatial support of $f_{0}$
in the physical space $S_{x}$; that is, $\chi\equiv1$ on the spatial support
of $f_{0}$ and has compact support inside $S_{x}$. Let $M_{\chi}$ be the
operator of multiplication by $\chi$. Then $K_{\lambda}=K_{\lambda}M_{\chi
}=M_{\chi}K_{\lambda}=M_{\chi}K_{\lambda}M_{\chi}$. Indeed,
\[
f_{0}^{\prime}\left(  x,v\right)  =f_{0}^{\prime}\left(
X(s;x,v),V(s;x,v)\right)
\]
because of the invariance of $E$ under the flow. So
\begin{align}
\left(  K_{\lambda}\phi\right)  \left(  x\right)   &  =-4\pi\lbrack\int
f_{0}^{\prime}(E)dv]\phi+4\pi\int f_{0}^{\prime}(E)\int_{-\infty}^{0}\lambda
e^{\lambda s}\phi(X(s;x,v))dsdv\label{support}\\
&  =-4\pi\lbrack\int f_{0}^{\prime}(E)dv]\phi+4\pi\int\int_{-\infty}%
^{0}\lambda e^{\lambda s}\left(  f_{0}^{\prime}(E)\phi\right)
(X(s;x,v))dsdv\nonumber\\
&  =(M_{\chi}K_{\lambda}M_{\chi}\phi)(x).\nonumber
\end{align}
First we claim that
\[
\left\Vert K_{\lambda}\right\Vert _{L^{2}\rightarrow L^{2}}\leq8\pi\left\vert
\int\left\vert f_{0}^{\prime}(E)\right\vert dv\right\vert _{\infty}.
\]
Indeed, the $L^{2}$ norm for the first term in $K_{\lambda}$ is easily bounded
by $4\pi\left\vert \int f_{0}^{\prime}(E)dv\right\vert _{\infty}$. For the
second term, we have for any $\psi\in L^{2},$
\begin{align}
&  |\int_{-\infty}^{0}\int\int4\pi\lambda e^{\lambda s}f_{0}^{\prime}%
(E)\phi(X(s;x,v))dsdv\psi(x)dx|\label{estimate-quadratic}\\
&  \leq4\pi\int_{-\infty}^{0}\lambda e^{\lambda s}\left(  \int\int
|f_{0}^{\prime}(E)|\phi^{2}(X(s;x,v))dvdx\right)  ^{\frac{1}{2}}\left(
\int\int|f_{0}^{\prime}(E)|\psi^{2}(x)dvdx\right)  ^{\frac{1}{2}}ds\nonumber\\
&  =4\pi\int_{-\infty}^{0}\lambda e^{\lambda s}\left(  \int\int|f_{0}^{\prime
}(E)|\phi^{2}(x)dvdx\right)  ^{\frac{1}{2}}\left(  \int\int|f_{0}^{\prime
}(E)|\psi^{2}(x)dvdx\right)  ^{\frac{1}{2}}ds\nonumber\\
&  =4\pi\left(  \int\int|f_{0}^{\prime}(E)|\phi^{2}(x)dvdx\right)  ^{\frac
{1}{2}}\left(  \int\int|f_{0}^{\prime}(E)|\psi^{2}(x)dvdx\right)  ^{\frac
{1}{2}}\nonumber\\
&  \leq4\pi\left\vert \int\left\vert f_{0}^{\prime}(E)\right\vert
dv\right\vert _{\infty}\left\Vert \phi\right\Vert _{2}\left\Vert
\psi\right\Vert _{2}\text{. }\nonumber
\end{align}
Moreover, we have that $K_{\lambda}$ is symmetric Indeed, for fixed $s,$ by
making a change of variable $(y,w)\rightarrow(X(s;x,v),V(s;x,v)),$ so that
$(z,v)=(X(-s;y,w),V(-s;y,w)),$ we deduce that
\begin{align*}
&  \int\int4\pi f_{0}^{\prime}(E)\int_{-\infty}^{0}\lambda e^{\lambda s}%
\phi(X(s;x,v))dsdv\psi(x)dx\\
&  =\int_{-\infty}^{0}\lambda e^{\lambda s}\int\int4\pi f_{0}^{\prime}%
(E)\phi(y)\psi(X(-s;y,w))dydwds\\
&  =\int\int4\pi f_{0}^{\prime}(E)\int_{-\infty}^{0}\lambda e^{\lambda s}%
\psi(X(-s;y,-w))\phi(y)dydwds\\
&  =\int\int4\pi f_{0}^{\prime}(E)\int_{-\infty}^{0}\lambda e^{\lambda s}%
\psi(X(s;x,v))\phi(x)dvdxds.
\end{align*}
Here we have used the fact $[X(s;y,w),V(s;y,w)]=[X(-s;y,-w),-V(s;y,-w)]$ in
the last line$.$ Hence%
\[
(K_{\lambda}\phi,\psi)=(\phi,K_{\lambda}\psi).
\]
Since $K_{\lambda}=K_{\lambda}M_{\chi}$ and $M_{\chi}$ is compact from $H^{2}$
into $L^{2}$ space with support in $S_{x}$, so $K_{\lambda}$ is relatively
compact with respect to $-\Delta$. Thus by Kato-Relich and Weyl's Theorems,
$A_{\lambda}:H^{2}\rightarrow$ $L^{2}$ is self-adjoint and $\sigma
_{\text{ess}}(A_{\lambda})=\sigma_{\text{ess}}(-\Delta).$
\end{proof}

\begin{lemma}
\label{lemma-infy}Assume that $f_{0}^{\prime}(E)$ has a bounded support in $x$
and $v$ and $f_{0}^{\prime}$ is bounded. Let
\[
k(\lambda)=\inf_{\phi\in D(A_{\lambda}),||\phi||_{2}=1}(\phi,A_{\lambda}%
\phi),
\]
then $k(\lambda)$ is a continuous function of $\lambda$ when $\lambda>0$.
Moreover, there exists $0<\Lambda<\infty$ such that for $\lambda>\Lambda$
\begin{equation}
k(\lambda)\geq0. \label{limit-infy}%
\end{equation}

\end{lemma}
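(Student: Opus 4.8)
The plan is to treat the two assertions separately, writing throughout $A_\lambda=-\Delta-K_\lambda$ with $K_\lambda$ the bounded operator from the proof of Lemma~\ref{lemma-operator}; since $A_\lambda$ is self-adjoint and bounded below by $-\|K_\lambda\|_{L^2\to L^2}\ge-8\pi\big|\int|f_0'(E)|\,dv\big|_\infty$ uniformly in $\lambda$, the number $k(\lambda)$ is the bottom of $\sigma(A_\lambda)$ and equals $\inf_{\phi\in C_c^\infty,\ \|\phi\|_2=1}\big(\|\nabla\phi\|_2^2-(\phi,K_\lambda\phi)\big)$. For continuity I would use that this representation gives $|k(\lambda)-k(\lambda')|\le\|K_\lambda-K_{\lambda'}\|_{L^2\to L^2}$, so it suffices to show $\lambda\mapsto K_\lambda$ is norm continuous on $(0,\infty)$. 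Only the time weight $\lambda e^{\lambda s}$ ($s<0$) depends on $\lambda$, so re-running the Cauchy--Schwarz/measure-preservation estimate (\ref{estimate-quadratic}) with $\lambda e^{\lambda s}$ replaced by $\lambda e^{\lambda s}-\lambda'e^{\lambda's}$ yields $\|K_\lambda-K_{\lambda'}\|_{L^2\to L^2}\le 4\pi\big|\int|f_0'(E)|\,dv\big|_\infty\int_{-\infty}^0|\lambda e^{\lambda s}-\lambda'e^{\lambda's}|\,ds$, and the last integral tends to $0$ as $\lambda'\to\lambda$ since both integrands are probability densities on $(-\infty,0)$ converging pointwise (Scheff\'e's lemma, or dominated convergence). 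This makes $k$ locally Lipschitz, hence continuous, on $(0,\infty)$.

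For the positivity when $\lambda$ is large I would proceed in two steps. First, rewrite the quadratic form: since $\int_{-\infty}^0\lambda e^{\lambda s}\,ds=1$ and the characteristic flow (\ref{trajectory-ode}) preserves both Lebesgue measure and the energy $E$ — so that $\int\int f_0'(E)\,\phi^2(X(s;x,v))\,dv\,dx=\int\int f_0'(E)\,\phi^2\,dv\,dx$ — symmetrizing the cross term gives
\[
(\phi,A_\lambda\phi)=\int|\nabla\phi|^2\,dx+2\pi\int_{-\infty}^0\lambda e^{\lambda s}\int\int f_0'(E)\big(\phi(x)-\phi(X(s;x,v))\big)^2\,dv\,dx\,ds .
\]
Second, dominate the last term by the Dirichlet term. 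Writing $\phi(x)-\phi(X(s;x,v))=-\int_s^0\nabla\phi(X(\sigma;x,v))\cdot V(\sigma;x,v)\,d\sigma$, using that $|V(\sigma;x,v)|^2=2\big(E-U_0(X(\sigma))\big)$ is bounded by a constant $C_0<\infty$ along every trajectory through the (compact) support of $f_0'$ — because $E$ is bounded there and $U_0$, the Newtonian potential of the bounded compactly supported density $\int f_0\,dv$, is bounded — then applying Cauchy--Schwarz in $\sigma$ and, for each fixed $\sigma$, the (measure- and energy-preserving) change of variables $(y,w)=(X(\sigma;x,v),V(\sigma;x,v))$, I would get
\[
\int\int|f_0'(E)|\big(\phi(x)-\phi(X(s;x,v))\big)^2\,dv\,dx\ \le\ C_0\,s^2\,\Big|\int|f_0'(E)|\,dv\Big|_\infty\|\nabla\phi\|_2^2 .
\]
Since $\int_{-\infty}^0\lambda e^{\lambda s}s^2\,ds=2/\lambda^2$, the second term of $(\phi,A_\lambda\phi)$ is then at most $(4\pi C_0/\lambda^2)\big|\int|f_0'(E)|\,dv\big|_\infty\|\nabla\phi\|_2^2$ in absolute value, so $(\phi,A_\lambda\phi)\ge\big(1-4\pi C_0\big|\int|f_0'(E)|\,dv\big|_\infty/\lambda^2\big)\|\nabla\phi\|_2^2\ge0$ for every $\phi\in C_c^\infty$, hence for every $\phi$ in the form domain by density, once $\lambda\ge\Lambda:=\big(4\pi C_0\big|\int|f_0'(E)|\,dv\big|_\infty\big)^{1/2}$. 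This gives $k(\lambda)\ge0$ for $\lambda>\Lambda$.

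I expect the continuity to be routine and the positivity to be the substantive part, the main obstacle being that the heuristic limit $\int_{-\infty}^0\lambda e^{\lambda s}\phi(X(s;x,v))\,ds\to\phi(x)$ as $\lambda\to\infty$ is \emph{not} uniform over $\|\phi\|_2=1$ in $L^2$ or even $H^1$: one is forced to pay with the Dirichlet energy to absorb the $O(\lambda^{-2})$ error, which is why $\phi(x)-\phi(X(s;x,v))$ must be expanded as an integral of $\nabla\phi$ along the characteristics and the velocities along those characteristics controlled — precisely the place where the compact support of $f_0$ and the Hamiltonian structure of (\ref{trajectory-ode}) enter. Beyond that, only the bookkeeping of the Liouville change of variables and the elementary Laplace-transform moments $\int_{-\infty}^0\lambda e^{\lambda s}|s|^k\,ds$ needs checking.
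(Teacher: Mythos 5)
Your proof is correct, and it reaches the second assertion by a route that differs in execution from the paper's. For continuity you do essentially what the paper does: the paper bounds $|(\phi,(A_\lambda-A_{\lambda_0})\phi)|$ directly by estimating $\int_{-\infty}^0|\lambda e^{\lambda s}-\lambda_0 e^{\lambda_0 s}|\,ds$ (via the fundamental theorem of calculus in $\lambda$, yielding the modulus $C|\ln\lambda-\ln\lambda_0|$), while you package the same computation as norm continuity of $K_\lambda$ plus Scheff\'e; note that Scheff\'e alone gives continuity, not the local Lipschitz property you claim, but the explicit $C|\ln\lambda-\ln\lambda_0|$ bound does, so nothing is lost. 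For $\lambda$ large, the paper integrates by parts in $s$, writing $\int_{-\infty}^0\lambda e^{\lambda s}\phi(X(s))\,ds=\phi(x)-\int_{-\infty}^0 e^{\lambda s}\nabla\phi(X(s))\cdot V(s)\,ds$, so that the local term cancels and a bilinear Cauchy--Schwarz estimate together with measure preservation, boundedness of $|v|$ on the support of $f_0'$, and the Sobolev inequality $\|\psi\|_6\le C\|\nabla\psi\|_2$ gives $|(K_\lambda\phi,\psi)|\le \frac{C}{\lambda}\|\nabla\phi\|_2\|\nabla\psi\|_2$, hence $(\phi,A_\lambda\phi)\ge(1-C/\lambda)\|\nabla\phi\|_2^2$. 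You instead symmetrize the cross term into $2\pi\int_{-\infty}^0\lambda e^{\lambda s}\int\!\int f_0'(E)\bigl(\phi(x)-\phi(X(s))\bigr)^2dv\,dx\,ds$ and bound the square difference by writing it as a line integral of $\nabla\phi$ along the characteristic, Cauchy--Schwarz in time, Liouville change of variables, and the velocity bound on $\operatorname{supp}f_0'$, which with $\int_{-\infty}^0\lambda e^{\lambda s}s^2ds=2/\lambda^2$ gives $(\phi,A_\lambda\phi)\ge(1-C/\lambda^2)\|\nabla\phi\|_2^2$. The underlying mechanism is the same in both proofs (differentiate $\phi$ along trajectories, use compactness of the support and energy conservation to control $|V|$, and let the exponential weight supply the smallness), but your symmetrized version avoids the Sobolev embedding and even yields a better rate, $O(\lambda^{-2})$ rather than $O(\lambda^{-1})$, at the cost of a slightly longer pointwise trajectory estimate; your sign in the identity $\phi(x)-\phi(X(s))=\int_s^0\nabla\phi(X(\sigma))\cdot V(\sigma)\,d\sigma$ is off by a minus sign, which is immaterial since only the square is used.
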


\begin{proof}
Fix $\lambda_{0}>0,$ $\phi\in D(A_{\lambda}),$ and $||\phi||_{2}=1.$ Then
\begin{align*}
k(\lambda_{0})  &  \leq(\phi,A_{\lambda_{0}}\phi)\\
&  \leq(\phi,A_{\lambda}\phi)+|(\phi,A_{\lambda_{0}}\phi)-(\phi,A_{\lambda
}\phi)|\\
&  \leq(\phi,A_{\lambda}\phi)+4\pi\int\int|f_{0}^{\prime}(E)|\int_{-\infty
}^{0}[\lambda e^{\lambda s}-\lambda_{0}e^{\lambda_{0}s}]\phi(X(s;x,v))\phi
(x)dsdvdx\\
&  \leq(\phi,A_{\lambda}\phi)+4\pi\int\int|f_{0}^{\prime}(E)|\int_{-\infty
}^{0}\int_{\lambda_{0}}^{\lambda}[\tilde{\lambda}|s|e^{\tilde{\lambda}%
s}+e^{\tilde{\lambda}s}]d\tilde{\lambda}\phi(X(s;x,v))\phi(x)dsdvdx\\
&  \leq(\phi,A_{\lambda}\phi)+C\int_{-\infty}^{0}\int_{\lambda_{0}}^{\lambda
}[\tilde{\lambda}|s|e^{\tilde{\lambda}s}+e^{\tilde{\lambda}s}]d\tilde{\lambda
}ds\\
&  \leq(\phi,A_{\lambda}\phi)+C|\ln\lambda-\ln\lambda_{0}|.
\end{align*}
We therefore deduce that by taking the infimum over all $\phi,$%
\[
k(\lambda_{0})\leq k(\lambda)+C|\ln\lambda-\ln\lambda_{0}|.
\]
Same argument also yields $k(\lambda)\leq k(\lambda_{0})+C|\ln\lambda
-\ln\lambda_{0}|.$Thus $\left\vert k(\lambda_{0})-k(\lambda)\right\vert \leq
C|\ln\lambda-\ln\lambda_{0}|$ and $k(\lambda)$ is continuous for $\lambda>0$.

To prove (\ref{limit-infy}), by (\ref{f-exp}), we recall from Sobolev's
inequality in $\mathbf{R}^{3}$%

\begin{align*}
|(K_{\lambda}\phi,\psi)|  &  =\left\vert \int\int4\pi f_{0}^{\prime
}(E)e^{\lambda s}\nabla\phi(X(s;x,v))V(s)dsdv\psi(x)dx\right\vert \\
&  \leq\int_{-\infty}^{0}e^{\lambda s}\left(  \int\int|\psi|^{2}|f_{0}%
^{\prime}(E)|dvdx\right)  ^{1/2}\cdot\\
&  \times\lbrack\int\int|\nabla\phi(X\left(  s\right)  )|^{2}|f_{0}^{\prime
}(E)||V\left(  s\right)  |^{2}dxdv]^{1/2}ds\\
&  =\int_{-\infty}^{0}e^{\lambda s}\left(  \int\int|\psi|^{2}|f_{0}^{\prime
}(E)|dvdx\right)  ^{1/2}\int\int v^{2}|\nabla\phi(x)|^{2}|f_{0}^{\prime
}(E)|dxdv]^{1/2}ds\\
&  \leq\frac{C}{\lambda}||\psi||_{6}||\nabla\phi||_{2}\leq\frac{C}{\lambda
}||\nabla\psi||_{2}||\nabla\phi||_{2},
\end{align*}
since $f_{0}$ has compact support. Therefore,
\[
(A_{\lambda}\phi,\phi)=||\nabla\phi||^{2}-(K_{\lambda}\phi,\phi)\geq
(1-\frac{C}{\lambda})||\nabla\phi||^{2}\geq0
\]
for $\lambda$ large.
\end{proof}

We now compute $\lim_{\lambda\rightarrow0+}A_{\lambda}$. We first consider the
case when the test function $\phi$ is spherically symmetric.

\begin{lemma}
\label{lemma-a0-radial}For spherically symmetric function $\phi(x)=\phi\left(
|x|\right)  ,$ we have
\begin{equation}%
\begin{split}
\lim_{\lambda\rightarrow0+}(A_{\lambda}\phi,\phi)  &  =(A_{0}\phi,\phi
)\equiv\int|\nabla\phi|^{2}dx+4\pi\int\int f_{0}^{\prime}(E)dv\phi^{2}dx\\
&  \ \ \ \ \ \ \ \ \ -32\pi^{3}\int_{\min U_{0}}^{E}\int_{0}^{\infty}%
f_{0}^{\prime}(E)\frac{\left(  \int_{r_{1}(E,L)}^{r_{2}(E,L)}\frac{\phi
dr}{\sqrt{2(E-U_{0}-L^{2}/2r^{2})}}\right)  ^{2}}{\int_{r_{1}(E,L)}%
^{r_{2}(E,L)}\frac{dr}{\sqrt{2(E-U_{0}-L^{2}/2r^{2})}}}dLdE\\
&  =\int|\nabla\phi|^{2}+32\pi^{3}\int f_{0}^{\prime}(E)\int_{r_{1}%
(E,L)}^{r_{2}(E,L)}(\phi-\bar{\phi})^{2}\frac{drdEdL}{\sqrt{2(E-U_{0}%
-L^{2}/2r^{2})}}.
\end{split}
\label{a0-radial}%
\end{equation}

\end{lemma}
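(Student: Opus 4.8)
The plan is to treat the three terms in $(A_{\lambda}\phi,\phi)$ separately. The first two, $\|\nabla\phi\|_{2}^{2}$ and $4\pi\int\int f_{0}^{\prime}(E)\,dv\,\phi^{2}\,dx$, are independent of $\lambda$, so everything reduces to evaluating
\[
\lim_{\lambda\rightarrow 0+}4\pi\int\int f_{0}^{\prime}(E)\Bigl(\int_{-\infty}^{0}\lambda e^{\lambda s}\phi(X(s;x,v))\,ds\Bigr)\phi(x)\,dv\,dx .
\]
Since $\int_{-\infty}^{0}\lambda e^{\lambda s}\,ds=1$, the inner integral is the exponential (Abel) average of $\phi$ along the characteristic $s\mapsto X(s;x,v)$ of \eqref{trajectory-ode}; by a standard Abelian argument together with the ergodic lemma (Lemma \ref{lemma-ergodic}) it converges, for a.e.\ $(x,v)$ as $\lambda\to 0+$, to the Cesàro time average $\lim_{T\to\infty}\frac1T\int_{-T}^{0}\phi(X(s;x,v))\,ds$ of $\phi$ along the orbit. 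For a \emph{radial} test function $\phi(|x|)$ this time average depends only on the radial motion $r(s)=|X(s;x,v)|$, which for every noncircular orbit with $L\neq 0$ oscillates between the turning points $r_{1}(E,L)\le r_{2}(E,L)$ at radial speed $|\dot r|=\sqrt{2(E-U_{0}(r)-L^{2}/2r^{2})}$; the circular ($r_{1}=r_{2}$) and purely radial ($L=0$) orbits form a null set and are discarded. Converting the time integral over half a radial period to an $r$-integral via $ds=dr/|\dot r|$ then gives the orbit average
\[
\bar\phi(E,L)=\frac{\displaystyle\int_{r_{1}(E,L)}^{r_{2}(E,L)}\frac{\phi(r)\,dr}{\sqrt{2(E-U_{0}-L^{2}/2r^{2})}}}{\displaystyle\int_{r_{1}(E,L)}^{r_{2}(E,L)}\frac{dr}{\sqrt{2(E-U_{0}-L^{2}/2r^{2})}}} .
\]

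To pass the $\lambda$-limit under the $dx\,dv$ integral I would use dominated convergence: the computation in the proof of Lemma \ref{lemma-operator} (see \eqref{estimate-quadratic}) already bounds the $\lambda$-dependent term by $C\|\phi\|_{|f_{0}^{\prime}|}^{2}$ uniformly in $\lambda$, while pointwise the identity $\lambda e^{-\lambda u}=\int_{u}^{\infty}\lambda^{2}e^{-\lambda t}\,dt$ dominates the Abel average by the one-sided maximal function of $\phi$ along the orbit, which lies in $L^{2}(|f_{0}^{\prime}|\,dx\,dv)$ since $f_{0}^{\prime}$ is bounded with compact support. Hence
\[
\lim_{\lambda\rightarrow 0+}(A_{\lambda}\phi,\phi)=\int|\nabla\phi|^{2}\,dx+4\pi\int\int f_{0}^{\prime}(E)\bigl(\phi^{2}-\bar\phi(E,L)\,\phi\bigr)\,dv\,dx .
\]
I then convert the velocity integral to an $(E,L)$-integral: for fixed $x$ with $r=|x|$, decomposing $v$ into its radial part $v_{r}$ and tangential part $v_{\perp}$, using $|v_{\perp}|=L/r$, $v_{r}=\pm\sqrt{2(E-U_{0}-L^{2}/2r^{2})}$, $dv=2\pi|v_{\perp}|\,dv_{r}\,d|v_{\perp}|$ and the Jacobian $\partial(E,L)/\partial(v_{r},|v_{\perp}|)=r\,v_{r}$, summing over the two signs of $v_{r}$ and then integrating $dx=4\pi r^{2}\,dr$ over $r_{1}\le r\le r_{2}$ yields the orbit measure proportional to $L\,dr\,dE\,dL/\sqrt{2(E-U_{0}-L^{2}/2r^{2})}$; collecting the two factors $4\pi$ and writing $L\,dL=\tfrac12\,d(L^{2})$ produces the constant $32\pi^{3}$ of \eqref{a0-radial}. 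Finally, by the definition of $\bar\phi$ one has, for each fixed $(E,L)$, $\int_{r_{1}}^{r_{2}}\bar\phi\,\phi\,\frac{dr}{\sqrt{\cdots}}=\bar\phi^{2}\int_{r_{1}}^{r_{2}}\frac{dr}{\sqrt{\cdots}}=\int_{r_1}^{r_2}\bar\phi^2\frac{dr}{\sqrt{\cdots}}$, i.e.\ $\int\int f_{0}^{\prime}\bar\phi\,\phi=\int\int f_{0}^{\prime}\bar\phi^{2}$, so that $\phi^{2}-\bar\phi\,\phi$ may be replaced by $(\phi-\bar\phi)^{2}$. Substituting the explicit quotient for $\bar\phi$ gives the middle line of \eqref{a0-radial}, and performing the replacement gives the last line.

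The main obstacle is the rigorous a.e.\ convergence of the Abel (equivalently, time) averages when $\phi$ is only in $H^{1}$: along a fixed orbit $\phi$ need not be continuous, and $H^{1}$-convergence of smooth approximants does not restrict to a null set. This is exactly what the ergodic lemma is for — passing to action--angle type coordinates $(E,L,\theta_{r},\dots)$ the flow becomes translation in $\theta_{r}$ at frequency $\omega_{r}(E,L)$, Fubini shows that the restriction of $\phi$ to a.e.\ orbit is integrable, and the Cesàro averaging theorem for periodic integrable functions then returns the mean $\bar\phi(E,L)$ on a.e.\ orbit; the compact support of $f_{0}$ keeps $U_{0}$ smooth and all integrals finite on the relevant region, and the integrable inverse-square-root singularity of the measure at the turning points, which has to be carried through the change of variables, causes no harm. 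The remaining steps — the dominated-convergence estimate and the bookkeeping of the numerical constants — are routine.
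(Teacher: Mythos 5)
Your argument is correct and is essentially the paper's own proof: isolate the $\lambda$-dependent term, show the Abel average along each periodic orbit with $E<E_{0}$, $L>0$ converges to the orbit average $\bar\phi(E,L)$ (the paper cites Lin's lemma from \cite{lin01} for exactly this Abel-to-period-average step, which is the elementary pointwise version of what you attribute to the Abelian argument and Lemma \ref{lemma-ergodic}), then pass to the limit by dominated convergence, change variables $(x,v)\to(r,E,L)$, and use $\int\int f_{0}^{\prime}\bar\phi\,\phi=\int\int f_{0}^{\prime}\bar\phi^{2}$ to obtain the $(\phi-\bar\phi)^{2}$ form. Your explicit Jacobian bookkeeping (yielding $64\pi^{3}L\,dL\,dE=32\pi^{3}\,d(L^{2})\,dE$, which also clarifies the constant in \eqref{a0-radial}) and the maximal-function domination are in fact spelled out in more detail than in the paper.
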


\begin{proof}
Given the steady state $f_{0}(E)$, $U_{0}(|x|)$ and any radial function
$\phi\left(  \left\vert x\right\vert \right)  .$ To find the limit of
\begin{align}
(A_{\lambda}\phi,\phi)  &  =\int|\nabla\phi|^{2}dx+4\pi\int\int f_{0}^{\prime
}(E)dv\phi^{2}dx\label{exp-A_lambda}\\
&  -4\pi\int\int f_{0}^{\prime}(E)\left(  \int_{-\infty}^{0}\lambda e^{\lambda
s}\phi(X(s;x,v))ds\ \right)  \phi\left(  x\right)  dxdv,\nonumber
\end{align}
we study the following
\begin{equation}
\lim_{\lambda\rightarrow0+}\int_{-\infty}^{0}\lambda e^{\lambda s}%
\phi(X(s;x,v))ds. \label{limit-radial}%
\end{equation}
Note that we only need to study (\ref{limit-radial}) for points $\left(
x,v\right)  $ with $E=\frac{1}{2}|v|^{2}+U_{0}|\left(  x|\right)  <E_{0}$ and
$L=\left\vert x\times v\right\vert >0$, because in the third integral of
(\ref{exp-A_lambda}) $f_{0}^{\prime}(E)$ has support in $\left\{
E<E_{0}\right\}  $ and the set $\left\{  L=0\right\}  $ has a zero measure. We
recall the linearized Vlasov-Poisson system in the $r,v_{r},L$ coordinates
takes the form%
\begin{align*}
\partial_{t}f+v_{r}\partial_{r}f+\left(  \frac{L}{r^{3}}-\partial_{r}%
U_{0}\right)  \partial_{v_{r}}f  &  =\partial_{r}U_{f}\partial_{v_{r}}f_{0},\\
\partial_{rr}U_{f}+\frac{2}{r}\partial_{r}U_{f}  &  =4\pi\int fdv.
\end{align*}
For the corresponding linearized system, for points $\left(  x,v\right)  $
with $E<E_{0}$ and $L>0,$ the trajectory of $(X(s;x,v),V(s;x,v))$ in the
coordinate $(r,E,L)\,$ is a periodic motion described by the ODE (see
\cite{BT})
\begin{align*}
\frac{dr(s)}{ds}  &  =v_{r}(s),\\
\frac{dv_{r}(s)}{ds}  &  =-U_{0}^{\prime}(r)+\frac{L^{2}}{r^{3}}.
\end{align*}
with the period
\[
T\left(  E,L\right)  =2\int_{r_{1}(E,L)}^{r_{2}(E,L)}\frac{dr}{\sqrt
{2(E-U_{0}-L^{2}/2r^{2})}},
\]
where $0<r_{1}(E,L)\leq r_{2}(E,L)<+\infty$ are zeros of $E-U_{0}-L^{2}%
/2r^{2}.$So by Lin's lemma in [\cite{lin01}]\textbf{,}
\[
\lim_{\lambda\rightarrow0}\int_{-\infty}^{0}\lambda e^{\lambda s}%
\phi(X(s;x,v))ds=\frac{1}{T}\int_{0}^{T}\phi(X(s;x,v))ds.
\]
Since $\phi(X(s;x,v)=\phi(r(s)),\ $a change of variable from $s\rightarrow
r(s)$ leads to%
\[
\int_{0}^{T}\phi(X(s;x,v))ds=2\int_{r_{1}}^{r_{2}}\frac{\phi(r)dr}%
{\sqrt{2(E-U_{0}-L^{2}/2r^{2})}}.
\]
For any function $g(r,E,L),$ we define its trajectory average as
\[
\bar{g}(E,L)\equiv\frac{\int_{r_{1}(E,L)}^{r_{2}(E,L)}\frac{g(r,E,L)dr}%
{\sqrt{2(E-U_{0}-L^{2}/2r^{2})}}}{\int_{r_{1}(E,L)}^{r_{2}(E,L)}\frac
{dr}{\sqrt{2(E-U_{0}-L^{2}/2r^{2})}}}.
\]
Then
\[
\lim_{\lambda\rightarrow0+}\int_{-\infty}^{0}\lambda e^{\lambda s}%
\phi(X(s;x,v))ds=2\int_{r_{1}}^{r_{2}}\frac{\phi(r)dr}{\sqrt{2(E-U_{0}%
-L^{2}/2r^{2})}}/T\left(  E,L\right)  =\bar{\phi}\left(  E,L\right)
\]
and the integrand in third term of (\ref{exp-A_lambda}) converges pointwise to
$f_{0}^{\prime}(E)\bar{\phi}\phi$. Thus by the dominated convergence theorem,
we have
\begin{align*}
\lim_{\lambda\rightarrow0+}(A_{\lambda}\phi,\phi)  &  =\int|\nabla\phi
|^{2}dx+4\pi\int\int f_{0}^{\prime}(E)\phi^{2}dxdv-4\pi\int\int f_{0}^{\prime
}(E)\bar{\phi}\phi\ dxdv\\
&  =\int|\nabla\phi|^{2}dx+4\pi\int\int f_{0}^{\prime}(E)\phi^{2}dxdv\\
\ \ \ \ \ \ \  &  \ \ \ \ \ \ -32\pi^{3}\int_{\min U_{0}}^{E}\int_{0}^{\infty
}f_{0}^{\prime}(E)\int_{r_{1}(E,L)}^{r_{2}(E,L)}\bar{\phi}\left(  E,L\right)
\phi\left(  r\right)  \frac{drdEdL}{\sqrt{2(E-U_{0}-L/2r^{2})}}\\
&  =\int|\nabla\phi|^{2}dx+4\pi\int\int f_{0}^{\prime}(E)\phi^{2}dxdv\\
&  \ \ \ \ \ \ \ -32\pi^{3}\int_{\min U_{0}}^{E}\int_{0}^{\infty}f_{0}%
^{\prime}(E)\frac{\left(  \int_{r_{1}(E,L)}^{r_{2}(E,L)}\frac{\phi dr}%
{\sqrt{2(E-U_{0}-L/2r^{2})}}\right)  ^{2}}{\int_{r_{1}(E,L)}^{r_{2}(E,L)}%
\frac{dr}{\sqrt{2(E-U_{0}-L/2r^{2})}}}dEdL\\
&  =\int|\nabla\phi|^{2}+32\pi^{3}\int f_{0}^{\prime}(E)\int_{r_{1}%
(E,L)}^{r_{2}(E,L)}(\phi-\bar{\phi})^{2}\frac{drdEdL}{\sqrt{2(E-U_{0}%
-L/2r^{2})}}.
\end{align*}
This finishes the proof of the lemma.
\end{proof}

To compute $\lim_{\lambda\rightarrow0+}(A_{\lambda}\phi,\phi)$ for more
general test function $\phi,$ we use the following ergodic lemma which is a
direct generalization of the result in \cite{lw-linear}.

\begin{lemma}
\label{lemma-ergodic}Consider the solution $\left(  P\left(  s;p,q\right)
,Q\left(  s;p,q\right)  \right)  $ to be the solution of a Hamiltonian system
\begin{align*}
\dot{P}  &  =\partial_{q}H\left(  P,Q\right) \\
\dot{Q}  &  =-\partial_{p}H\left(  P,Q\right)
\end{align*}
with $\left(  P\left(  0\right)  ,Q\left(  0\right)  \right)  =\left(
p,q\right)  \in\mathbf{R}^{n}\times\mathbf{R}^{n}$. Denote
\[
\mathcal{Q}^{\lambda}m=\int_{-\infty}^{0}\lambda e^{\lambda s}m\left(
P\left(  s\right)  ,Q\left(  s\right)  \right)  ds.
\]
Then for any $m\left(  p,q\right)  \in L^{2}\left(  \mathbf{R}^{n}%
\times\mathbf{R}^{n}\right)  $, we have $\mathcal{Q}^{\lambda}m\rightarrow
\mathcal{P}m$ strongly in $L^{2}\left(  \mathbf{R}^{n}\times\mathbf{R}%
^{n}\right)  $. Here $\mathcal{P}$ is the projection operator of $L^{2}\left(
\mathbf{R}^{n}\times\mathbf{R}^{n}\right)  $ to the kernel of the transport
operator $D=\partial_{q}H\partial_{p}-\partial_{p}H\partial_{q}$ and
$\mathcal{P}m$ is the phase space average of $m$ in the set traced by the trajectory.
\end{lemma}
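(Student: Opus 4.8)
The plan is to recognize $\mathcal{Q}^{\lambda}$ as a scaled resolvent of the generator of the Koopman unitary group of the Hamiltonian flow, and then to invoke the mean ergodic theorem in its Abel (resolvent) form. First I would record the group structure. By Liouville's theorem the flow $\Phi_{s}:(p,q)\mapsto(P(s;p,q),Q(s;p,q))$ preserves Lebesgue measure on $\mathbf{R}^{n}\times\mathbf{R}^{n}$ (the Hamiltonian vector field $(\partial_{q}H,-\partial_{p}H)$ is divergence free), so $T_{s}m:=m\circ\Phi_{s}$ is an isometry of $L^{2}$ for each $s$, and since $T_{-s}=T_{s}^{-1}$ each $T_{s}$ is unitary and $\{T_{s}\}_{s\in\mathbf{R}}$ is a one-parameter group. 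Strong continuity, $T_{s}m\to m$ in $L^{2}$ as $s\to0$, follows by dominated convergence for $m$ continuous with compact support (continuity of the flow) and then for all $m\in L^{2}$ by the uniform bound $\|T_{s}\|=1$ and density. By Stone's theorem $\{T_{s}\}$ has a skew-adjoint generator which, by the chain rule $\frac{d}{ds}m(\Phi_{s}(p,q))=(\partial_{q}H\,\partial_{p}-\partial_{p}H\,\partial_{q})m$, is (the closure of) the transport operator $D$. In particular $D^{*}=-D$ and $L^{2}=\ker D\oplus\overline{\operatorname{ran}D}$ orthogonally.

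Next, substituting $s=-\tau$ shows that $\mathcal{Q}^{\lambda}m=\int_{0}^{\infty}\lambda e^{-\lambda\tau}T_{-\tau}m\,d\tau$ is an absolutely convergent Bochner integral (since $s\mapsto T_{s}m$ is bounded and continuous and $\int_{-\infty}^{0}\lambda e^{\lambda s}\,ds=1$), equal to $\lambda(\lambda+D)^{-1}m$, where $-D$ generates the contraction (in fact unitary) semigroup $\{T_{-\tau}\}_{\tau\ge0}$; consequently $\|\mathcal{Q}^{\lambda}\|_{L^{2}\to L^{2}}\le1$ for every $\lambda>0$. Now the mean ergodic argument: if $m\in\ker D$ then $T_{-\tau}m=m$ and $\mathcal{Q}^{\lambda}m=m$; and if $m=Dg$ with $g\in D(D)$, then writing $Dg=(\lambda+D)g-\lambda g$ gives $\mathcal{Q}^{\lambda}m=\lambda g-\lambda^{2}(\lambda+D)^{-1}g$, whence $\|\mathcal{Q}^{\lambda}m\|\le2\lambda\|g\|\to0$ as $\lambda\to0+$ (using $\|(\lambda+D)^{-1}\|\le\lambda^{-1}$). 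Since $\ker D\oplus\operatorname{ran}D$ is dense in $L^{2}$ (as $\overline{\operatorname{ran}D}=(\ker D^{*})^{\perp}=(\ker D)^{\perp}$) and the family $\{\mathcal{Q}^{\lambda}\}$ is uniformly bounded, an $\varepsilon/3$ argument yields $\mathcal{Q}^{\lambda}m\to\mathcal{P}m$ strongly in $L^{2}$ for every $m$, where $\mathcal{P}$ is the orthogonal projection onto $\ker D$. That this $\mathcal{P}m$ is the phase-space average of $m$ over the orbit closure is the ergodic-decomposition statement; in the integrable central-field setting it identifies $\mathcal{P}m$ with the average over the invariant torus, i.e.\ the period average along the periodic orbit, which is precisely what is needed to extend Lemma \ref{lemma-a0-radial} to general $\phi$.

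The step I expect to be the main obstacle is not the ergodic estimate but the functional-analytic setup: one must ensure the flow $\Phi_{s}$ is complete (globally defined for all $s\in\mathbf{R}$) so that $\{T_{s}\}$ is genuinely a group of unitaries — this is where the structure of $H$, and in the application the compact support of $f_{0}$ together with the lower bound on $U_{0}$, enters — and one must keep careful track of domains so that the resolvent identity $(\lambda+D)^{-1}Dg=g-\lambda(\lambda+D)^{-1}g$ is used only for $g\in D(D)$ before extending to all $m\in L^{2}$ by density and the uniform bound. Once the unitary $C_{0}$-group is in place, the remainder is exactly the classical Abel-mean form of von Neumann's ergodic theorem.
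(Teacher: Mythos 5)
Your proof is correct, and it takes a genuinely different route from the paper's. The paper also begins with Stone's theorem (the Koopman group $U(s)m=m(P(s),Q(s))$ is unitary with skew-adjoint generator $D$), but then it invokes the spectral theorem for $R=-iD$: writing $U(s)=\int e^{i\alpha s}\,dM_{\alpha}$, it computes $\mathcal{Q}^{\lambda}m=\int_{\mathbb{R}}\frac{\lambda}{\lambda+i\alpha}\,dM_{\alpha}m$, identifies $\mathcal{P}=M_{\{0\}}$, and concludes by dominated convergence applied to the multiplier $\bigl|\frac{\lambda}{\lambda+i\alpha}-\xi(\alpha)\bigr|^{2}$ against $d\|M_{\alpha}m\|^{2}$. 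You instead prove the Abel-mean form of von Neumann's ergodic theorem directly: $\mathcal{Q}^{\lambda}=\lambda(\lambda+D)^{-1}$ as a Bochner integral, $\mathcal{Q}^{\lambda}m=m$ on $\ker D$, the explicit bound $\|\mathcal{Q}^{\lambda}Dg\|\le 2\lambda\|g\|$ on $\operatorname{ran}D$ via the resolvent identity and $\|(\lambda+D)^{-1}\|\le\lambda^{-1}$, and then density of $\ker D\oplus\operatorname{ran}D$ (from $\overline{\operatorname{ran}D}=(\ker D)^{\perp}$, using $D^{*}=-D$) plus the uniform bound $\|\mathcal{Q}^{\lambda}\|\le1$. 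The paper's spectral-measure computation is shorter once the spectral theorem is granted and exhibits $\mathcal{P}$ immediately as the spectral projection at $\alpha=0$; your argument is more elementary (only the resolvent bound and the kernel--range duality of a skew-adjoint operator), yields a quantitative rate on the range of $D$, and usefully makes explicit the hypotheses the paper leaves implicit — Liouville's theorem for measure preservation, strong continuity, completeness of the flow, and the domain bookkeeping in the resolvent identity. Both treatments, like the paper, defer the identification of $\mathcal{P}m$ with the phase-space (torus) average to the standard ergodic/action-angle discussion that follows the lemma.
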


\begin{proof}
Denote $U\left(  s\right)  :L^{2}\left(  \mathbf{R}^{n}\times\mathbf{R}%
^{n}\right)  \rightarrow L^{2}\,\left(  \mathbf{R}^{n}\times\mathbf{R}%
^{n}\right)  $\ to be the unitary semigroup $U\left(  s\right)  m=m\left(
P\left(  s\right)  ,Q\left(  s\right)  \right)  $. By Stone Theorem
(\cite{yosida}), $U\left(  s\right)  $ is generated by $iR=D$, where $R=-iD$
is self-adjoint and
\[
U\left(  s\right)  =\int_{-\infty}^{+\infty}e^{i\alpha s}dM_{\alpha}%
\]
where $\left\{  M_{\alpha};\alpha\in\mathbf{R}^{1}\right\}  $ is spectral
measure of $R$. So
\[
\int_{-\infty}^{0}\lambda e^{\lambda s}m(P(s),Q(s))ds=\int_{-\infty}%
^{0}\lambda e^{\lambda s}\int_{\mathbb{R}}e^{i\alpha s}dM_{\alpha}%
m\ ds=\int_{\mathbb{R}}\frac{\lambda}{\lambda+i\alpha}dM_{\alpha}m.
\]
On the other hand, the projection is $\mathcal{P}=M_{\{0\}}=\int_{\mathbb{R}%
}\xi dM_{\alpha}$ where $\xi(\alpha)=0$ for $\alpha\neq0$ and $\xi(0)=1$.
Therefore
\[
\left\Vert \int_{-\infty}^{0}\lambda e^{\lambda s}m(P(s),Q(s))ds-\mathcal{P}%
m\right\Vert _{\mathbf{L}^{2}}^{2}=\int_{\mathbb{R}}\left\vert \frac{\lambda
}{\lambda+i\alpha}-\xi(\alpha)\right\vert ^{2}d\Vert M_{\alpha}m\Vert
_{\mathbf{L}^{2}}^{2}%
\]
by orthogonality of the spectral projections. By the dominated convergence
theorem this expression tends to $0$ as $\lambda\rightarrow0+$, as we wished
to prove. The explaination of $\mathcal{P}m$ as the phase space average of $m$
is in our remark below.
\end{proof}

\begin{remark}
Since $\int_{-\infty}^{0}\lambda e^{\lambda s}ds=1$, the function%
\begin{equation}
\left(  \mathcal{Q}^{\lambda}m\right)  \left(  x,v\right)  =\int_{-\infty}%
^{0}\lambda e^{\lambda s}m\left(  P(s),Q(s)\right)  ds\label{time-avera-wei}%
\end{equation}
is a weighted time average of the observable $m$ along the particle
trajectory. By the same proof of Lemma \ref{lemma-ergodic}, we have
\begin{equation}
\lim_{T\rightarrow\infty}\frac{1}{T}\int_{0}^{T}m\left(  P(s),Q(s)\right)
ds=\mathcal{P}m.\label{time-average}%
\end{equation}
But from the standard ergodic theory (\cite{arnold-ergodic}) of Hamiltonian
systems, the limit of the above time average in (\ref{time-average}) equals
the phase space average of $m$ in the set traced by the trajectory. Thus
$\mathcal{P}m$ has the meaning of the phase space average of $m$ and Lemma
\ref{lemma-ergodic} states that the limit of the weighted time average
(\ref{time-avera-wei}) yields the same phase space average. In particular, if
the particle motion is ergodic in the invariant set $S_{I}$ determined by the
invariants $E_{1},\cdots,I_{k}$, and if $d\sigma_{I}$ denotes the induced
measure of $\mathbf{R}^{n}\times\mathbf{R}^{n}$ on $S_{I}$, then
\begin{equation}
\mathcal{P}m=\frac{1}{\sigma_{I}\left(  S_{I}\right)  }\int_{S_{I}}m\left(
p,q\right)  d\sigma_{I}\left(  p,q\right)  .\label{ergodic}%
\end{equation}
For integral systems, using action angle variables $\left(  J_{1},\cdots
,J_{n};\varphi_{1},\cdots,\varphi_{n}\right)  $ we have
\begin{equation}
\left(  \mathcal{P}m\right)  \left(  J_{1},\cdots,J_{n}\right)  =\left(
2\pi\right)  ^{-n}\int_{0}^{2\pi}\cdots\int_{0}^{2\pi}m\left(  J_{1}%
,\cdots,J_{n},\varphi_{1},\cdots,\varphi_{n}\right)  d\varphi_{1},\cdots
d\varphi_{n}\label{ergodic-integrable}%
\end{equation}
for the generic case with independent frequencies (see \cite{arnold78}).
\end{remark}

Recall the weighted $L^{2}$ space $L_{\left\vert f_{0}^{\prime}\right\vert
}^{2}$ in (\ref{weight}). Then $U\left(  s\right)  :L_{\left\vert
f_{0}^{\prime}\right\vert }^{2}\rightarrow L_{\left\vert f_{0}^{\prime
}\right\vert }^{2}$ defined by $U\left(  s\right)  m=m\left(
X(s;x,v),V(s;x,v)\right)  $ is an unitary group, where $\left(
X(s;x,v),V(s;x,v)\right)  $ is the particle trajectory (\ref{trajectory-ode}).
The generator of $U\left(  s\right)  $ is $D=v\cdot\partial_{x}-\nabla
_{x}U_{0}\cdot\nabla_{v}$ and $R=-iD$ is self-adjoint by Stone Theorem. By the
same proof, Lemma \ref{lemma-ergodic} is still valid in $L_{\left\vert
f_{0}^{\prime}\right\vert }^{2}$. In particular, for any $\phi\left(
x\right)  \in L^{2}\left(  \mathbf{R}^{3}\right)  $ we have
\begin{equation}
\int_{-\infty}^{0}\lambda e^{\lambda s}\phi(X(s;x,v))ds\rightarrow
\mathcal{P}\phi\label{ergodic-galaxy}%
\end{equation}
in $L_{\left\vert f_{0}^{\prime}\right\vert }^{2}$, where $\mathcal{P}$ is the
projector of $L_{\left\vert f_{0}^{\prime}\right\vert }^{2}$ to $\ker D$.

Now we derive an explicit formula for the above limit $\mathcal{P}\phi$. Note
that as in the proof of lemma \ref{lemma-a0-radial}, we only need to derive
the formula of $\mathcal{P}\phi$ for points $\left(  x,v\right)  $ with
$E<E_{0}$ and $L>0$. Since $U_{0}\left(  x\right)  =U_{0}\left(  r\right)  $,
the particle motion (\ref{trajectory-ode}) in such a center field is
integrable and has been well studied (see e.g. \cite{BT}, \cite{arnold78}).
For particles with energy $E<E_{0}<0$, $L>0$ and momentum $\vec{L}=x\times v$,
the particle orbit is a rosette in the annulus
\[
A_{E,L}=\left\{  r_{1}(E,L)\leq r\leq r_{2}(E,L)\right\}  =\left\{
E-U_{0}-L^{2}/2r^{2}\geq0\right\}  ,
\]
lying on the orbital plane perpendicular to $\vec{L}$. So we can consider the
particle motion to be planar. For such case, the action-angle variables are as
follows (see e.g. \cite{lb94}): the actions variables are%

\[
J_{r}=\frac{2\pi}{T\left(  E,L\right)  },\text{ \ \ \ \ \ }J_{\theta}=L,
\]
where
\[
T\left(  E,L\right)  =2\int_{r_{1}(E,L)}^{r_{2}(E,L)}\frac{dr}{\sqrt
{2(E-U_{0}-L^{2}/2r^{2})}}.
\]
is the radial period, the angle variable $\varphi_{r}$ is determined by
\[
d\varphi_{r}=\frac{2\pi}{T\left(  E,L\right)  }\frac{dr}{\sqrt{2(E-U_{0}%
-L^{2}/2r^{2})}}%
\]
and $\varphi_{\theta}=\theta-\Delta\theta$ where
\[
d\left(  \Delta\theta\right)  =\frac{Lr^{-2}-\Omega_{\theta}}{\sqrt
{2(E-U_{0}-L^{2}/2r^{2})}}dr
\]
and
\[
\Omega_{\theta}\left(  E,L\right)  =\frac{1}{T\left(  E,L\right)  }\int
_{r_{1}(E,L)}^{r_{2}(E,L)}\frac{L}{r^{2}\sqrt{2(E-U_{0}-L^{2}/2r^{2})}}dr
\]
is the average angular velocity. For any function $\phi\left(  x\right)  \in
H^{2}\left(  \mathbf{R}^{3}\right)  $, we denote $\phi_{\vec{L}}\left(
r,\theta\right)  $ to be the restriction of $\phi$ in the orbital plane
perpendicular to $\vec{L}$. Then by (\ref{ergodic-integrable}), for the
generic case when the radial and angular frequencies are independent, we have
\begin{align}
\left(  \mathcal{P}\phi\right)  \left(  E,\vec{L}\right)   &  =\left(
2\pi\right)  ^{-2}\int_{0}^{2\pi}\int_{0}^{2\pi}\phi_{\vec{L}}d\varphi
_{\theta}d\varphi_{r}\label{ergodic-spherical}\\
&  =\frac{1}{\pi T\left(  E,L\right)  }\int_{r_{1}(E,L)}^{r_{2}(E,L)}\int
_{0}^{2\pi}\frac{\phi_{\vec{L}}\left(  r,\theta\right)  d\theta dr}%
{\sqrt{2(E-U_{0}-L^{2}/2r^{2})}}.\nonumber
\end{align}
In particular, for a spherically symmetric function $\phi=\phi\left(
r\right)  $, we recover
\begin{equation}
\left(  \mathcal{P}\phi\right)  \left(  E,L\right)  =\frac{2}{T\left(
E,L\right)  }\int_{r_{1}(E,L)}^{r_{2}(E,L)}\frac{\phi(r)dr}{\sqrt
{2(E-U_{0}-L^{2}/2r^{2})}}. \label{ergodic-radial}%
\end{equation}
We thus conclude the following

\begin{lemma}
\label{lemma-limit0}Assume that $f_{0}(E)$ has a bounded support in $x$ and
$v$ and $f_{0}^{\prime}$ is bounded. For any $\phi\in H^{1}\left(
\mathbf{R}^{3}\right)  $, we have%
\begin{align}
\lim_{\lambda\rightarrow0+}(A_{\lambda}\phi,\phi)  &  =(A_{0}\phi
,\phi)\label{A0-general}\\
&  =\int|\nabla\phi|^{2}dx+4\pi\int\int f_{0}^{\prime}(E)dv\phi^{2}dx-4\pi
\int\int f_{0}^{\prime}(E)\left(  \mathcal{P}\phi\right)  ^{2}dxdv\nonumber\\
&  =\int|\nabla\phi|^{2}dx+4\pi\int\int f_{0}^{\prime}(E)\left(
\phi-\mathcal{P}\phi\right)  ^{2}dxdv\nonumber
\end{align}
where $\mathcal{P}$ is the projector of $L_{\left\vert f_{0}^{\prime
}\right\vert }^{2}$ to $\ker D$ and more explicitly $\mathcal{P}\phi$ is given
by (\ref{ergodic-spherical}). The limiting operator $A_{0}$ is
\begin{equation}
A_{0}\phi=-\Delta\phi+[4\pi\int f_{0}^{\prime}(E)dv]\phi-4\pi\int
f_{0}^{\prime}(E)\mathcal{P}\phi dv. \label{operator-A0}%
\end{equation}

\end{lemma}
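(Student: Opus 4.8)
The plan is to reduce the general case to the ergodic lemma (Lemma \ref{lemma-ergodic}) applied in the weighted space $L^{2}_{|f_{0}^{\prime}|}$, and then identify the limit of the third term in the expression (\ref{exp-A_lambda}) for $(A_{\lambda}\phi,\phi)$. First I would recall, exactly as in the proof of Lemma \ref{lemma-a0-radial}, that in the third integral only points $(x,v)$ with $E<E_{0}$ and $L>0$ contribute, since $f_{0}^{\prime}(E)$ is supported in $\{E<E_{0}\}$ and $\{L=0\}$ has zero measure; on this set the particle motion in the central field $U_{0}(r)$ is a bounded planar rosette, hence the trajectory is recurrent and the hypotheses behind the ergodic limit apply. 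Next I would invoke (\ref{ergodic-galaxy}): for $\phi\in H^{1}(\mathbf{R}^{3})\hookrightarrow L^{2}(\mathbf{R}^{3})$, the function $(x,v)\mapsto\phi(x)$ lies in $L^{2}_{|f_{0}^{\prime}|}$ because $f_{0}^{\prime}$ is bounded with compact support, so
\[
\int_{-\infty}^{0}\lambda e^{\lambda s}\phi(X(s;x,v))\,ds \longrightarrow \mathcal{P}\phi
\]
strongly in $L^{2}_{|f_{0}^{\prime}|}$ as $\lambda\to 0+$, where $\mathcal{P}$ is the orthogonal projection in $L^{2}_{|f_{0}^{\prime}|}$ onto $\ker D$. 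Since $\phi$ itself also lies in $L^{2}_{|f_{0}^{\prime}|}$, the inner product in that space is continuous, and therefore
\[
4\pi\int\int f_{0}^{\prime}(E)\left(\int_{-\infty}^{0}\lambda e^{\lambda s}\phi(X(s;x,v))\,ds\right)\phi(x)\,dxdv \longrightarrow 4\pi\int\int f_{0}^{\prime}(E)\,(\mathcal{P}\phi)\,\phi\,dxdv;
\]
more precisely this equals $-4\pi\,(\mathcal{P}\phi,\phi)_{|f_{0}^{\prime}|}$, and since $\mathcal{P}$ is an orthogonal projection, $(\mathcal{P}\phi,\phi)_{|f_{0}^{\prime}|}=(\mathcal{P}\phi,\mathcal{P}\phi)_{|f_{0}^{\prime}|}=\|\mathcal{P}\phi\|_{|f_{0}^{\prime}|}^{2}$, which is exactly $\int\int |f_{0}^{\prime}(E)|(\mathcal{P}\phi)^{2}dxdv$. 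The first two terms of (\ref{exp-A_lambda}) are independent of $\lambda$, so passing to the limit gives the second line of (\ref{A0-general}), and completing the square using $(\mathcal{P}\phi,\phi)_{|f_{0}^{\prime}|}=\|\mathcal{P}\phi\|^{2}_{|f_{0}^{\prime}|}$ yields the third line, the Antonov-type form $\int|\nabla\phi|^{2}+4\pi\int\int f_{0}^{\prime}(E)(\phi-\mathcal{P}\phi)^{2}$.

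The explicit formula (\ref{ergodic-spherical}) for $\mathcal{P}\phi$ then follows from the action-angle computation already carried out before the lemma: in the generic case of independent radial and angular frequencies, $\mathcal{P}\phi$ is the average of $\phi_{\vec{L}}$ over the two angle variables $(\varphi_{r},\varphi_{\theta})$, which after changing variables $\varphi_{r}\mapsto r$ and $\varphi_{\theta}\mapsto\theta$ becomes the stated double integral. Finally, the operator form (\ref{operator-A0}) is read off by writing $(A_{0}\phi,\phi)$ back in the form $(\phi,A_{0}\phi)$ and using that $\mathcal{P}$ is self-adjoint in $L^{2}_{|f_{0}^{\prime}|}$, so $\int\int f_{0}^{\prime}(E)(\mathcal{P}\phi)^{2}dv = \int\int f_{0}^{\prime}(E)(\mathcal{P}\phi)\phi\,dv$, which is the weak form of the last term in (\ref{operator-A0}); the boundedness of this term on $H^{2}$ was already established in Lemma \ref{lemma-operator}, so $A_{0}$ is a bounded perturbation of $-\Delta$.

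The one point requiring care — the main obstacle — is the justification of passing the limit under the $\lambda$-integral for \emph{general} (non-radial) $\phi$: in the radial proof this was dominated convergence applied pointwise in $(E,L)$ via Lin's lemma, but here the natural tool is instead the $L^{2}_{|f_{0}^{\prime}|}$-convergence from Lemma \ref{lemma-ergodic}, so I must be sure that $\phi$ (viewed as a function on phase space) genuinely lies in $L^{2}_{|f_{0}^{\prime}|}$ and that the weighted inner product is jointly continuous — both of which follow from $f_{0}^{\prime}\in L^{\infty}$ with compact support, giving $\int\int|f_{0}^{\prime}(E)|\phi^{2}\,dxdv\le \|f_{0}^{\prime}\|_{\infty}|S_{v}|\int_{S_{x}}\phi^{2}\,dx<\infty$ for $\phi\in H^{1}$. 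A secondary subtlety is that (\ref{ergodic-spherical}) is asserted only in the generic (non-resonant) case; on the measure-zero set of $(E,L)$ with commensurable frequencies the orbit is closed and $\mathcal{P}\phi$ is instead the average along that closed orbit, but since this set has zero measure in the $dxdv$ integral it does not affect $(A_{0}\phi,\phi)$, so no separate treatment is needed for the quadratic form.
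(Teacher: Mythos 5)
Your route is essentially the paper's: the paper gives no separate proof of Lemma \ref{lemma-limit0} but derives it exactly as you do, from the ergodic Lemma \ref{lemma-ergodic} transplanted to the weighted space $L^{2}_{|f_{0}'|}$ (i.e.\ the convergence (\ref{ergodic-galaxy})), continuity of the weighted pairing to pass to the limit in the third term of (\ref{exp-A_lambda}), and the action--angle computation for the explicit formula (\ref{ergodic-spherical}); your observations that $\phi\in H^{1}$ lies in $L^{2}_{|f_{0}'|}$ because $f_{0}'$ is bounded with compact support, and that the resonant $(E,L)$ set is null and hence irrelevant for the quadratic form, are the right points of care.

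One step, however, is not correct as written for the generality the lemma requires. You identify $4\pi\int\!\!\int f_{0}'(E)(\mathcal{P}\phi)\phi\,dxdv$ with $-4\pi(\mathcal{P}\phi,\phi)_{|f_{0}'|}$ and then with $-4\pi\|\mathcal{P}\phi\|_{|f_{0}'|}^{2}$. This uses $f_{0}'=-|f_{0}'|$, i.e.\ $f_{0}'\leq 0$, which is not assumed here; the lemma feeds the instability criterion, whose whole point is sign-changing $f_{0}'$. What the chain of equalities in (\ref{A0-general}) actually needs is the \emph{signed} orthogonality $\int\!\!\int f_{0}'(E)\,\mathcal{P}\phi\,(\phi-\mathcal{P}\phi)\,dxdv=0$, and orthogonality of $\phi-\mathcal{P}\phi$ to $\ker D$ in the $|f_{0}'|$-inner product does not give this directly. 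The fix is one line: $\operatorname{sgn}f_{0}'(E)$ is a function of $E$ alone, hence invariant along trajectories, so $\operatorname{sgn}(f_{0}')\,\mathcal{P}\phi\in\ker D$ and
\begin{equation*}
\int\!\!\int f_{0}'\,\mathcal{P}\phi\,(\phi-\mathcal{P}\phi)\,dxdv=\bigl(\operatorname{sgn}(f_{0}')\,\mathcal{P}\phi,\ \phi-\mathcal{P}\phi\bigr)_{|f_{0}'|}=0;
\end{equation*}
equivalently, since $f_{0}'(E)$ and $\mathcal{P}\phi$ are both constant on each invariant leaf, averaging $\phi$ over the leaf replaces it by $\mathcal{P}\phi$. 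With this correction your argument goes through verbatim and matches the paper's derivation of both (\ref{A0-general}) and (\ref{operator-A0}).
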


Now we give the proof of the instability criterion.

\begin{proof}
[Proof of Theorem \ref{theorem-insta}]We define
\[
\lambda_{\ast}=\sup_{k(\lambda)<0}\lambda.
\]
By Lemmas \ref{lemma-operator} and \ref{lemma-limit0}, we deduce that
\[
-\infty<\lambda_{\ast}\leq\Lambda<\infty.
\]
Therefore, by the continuity of $k(\lambda),$ we have
\[
k(\lambda_{\ast})=0.
\]
Hence, there exists an increasing sequence of $\lambda_{n}<\lambda
_{n+1}<\lambda_{\ast}$ so that $\lambda_{n}\rightarrow\lambda_{\ast}$,
$k_{n}\equiv k(\lambda_{n})<0,$ and
\[
k_{n}\rightarrow k(\lambda_{\ast})=0.
\]
Therefore, $k_{n}$ are negative eigenvalues. By Lemma \ref{lemma-infy}, we get
a sequence $\phi_{n}\in H^{2}$ such that
\begin{equation}
A_{\lambda_{n}}\phi_{n}=k_{n}\phi_{n} \label{eqn-eigenvalue}%
\end{equation}
with $k_{n}<0$, $k_{n}\rightarrow0$ and $\lambda_{n}\rightarrow\lambda_{0}>0$,
as $n\rightarrow\infty$. Recall $\chi$ the cutoff function of the support of
$f_{0}(E)$ such that $\chi\equiv1$ for $f_{0}(E)>0.$ We claim that $\chi
\phi_{n}$ is a nonzero function for any $n$. Suppose otherwise, $\chi\phi
_{n}\equiv0$, then from the equation (\ref{eqn-eigenvalue}) we have $\left(
-\Delta-k_{n}\right)  \phi_{n}=0$ which implies that $\phi_{n}=0$, a
contradiction$.$Thus we can normalize $\phi_{n}$ by $\left\Vert \chi\phi
_{n}\right\Vert _{2}=1$. Taking inner product of (\ref{eqn-eigenvalue}) with
$\phi_{n}$ and integrating by parts, we have%
\begin{align*}
\left\Vert \bigtriangledown\phi_{n}\right\Vert _{2}^{2}  &  \leq-4\pi\int\int
f_{0}^{\prime}(E)\phi_{n}^{2}\ dvdx+\int\int4\pi f_{0}^{\prime}(E)\int
_{-\infty}^{0}\lambda_{n}e^{\lambda_{n}s}\phi_{n}(X(s;x,v))ds\phi_{n}\left(
x\right)  dx\\
&  =-4\pi\int\int f_{0}^{\prime}(E)\left(  \chi\phi_{n}\right)  ^{2}\ dvdx\\
&  +\int\int4\pi f_{0}^{\prime}(E)\int_{-\infty}^{0}\lambda_{n}e^{\lambda
_{n}s}\left(  \chi\phi_{n}\right)  (X(s;x,v))ds\left(  \chi\phi_{n}\right)
\left(  x\right)  dx\\
&  \leq8\pi\left\vert \int f_{0}^{\prime}(E)dv\right\vert _{\infty}\left\Vert
\chi\phi_{n}\right\Vert _{2}^{2}.
\end{align*}
Here in the second equality above, we use the fact $\chi=1\ $on the support of
$f_{0}^{\prime}(E)\ $($f_{0}(E)$) and that $\left(  \chi\phi_{n}\right)
(X(s;x,v))=$ $\phi_{n}(X(s;x,v)\chi\ $due to the invariance of the support
under the trajectory flow, as in (\ref{support}). In the last inequality, we
use the same estimate as in (\ref{estimate-quadratic}). Thus,
\[
\sup_{n}||\phi_{n}||_{L^{6}}\leq C\sup_{n}\left\Vert \bigtriangledown\phi
_{n}\right\Vert _{2}<C^{\prime},
\]
for some constant $C^{\prime}$ independent of $n$. Then there exists $\phi\in
L^{6}$ and $\nabla\phi\in L^{2}$ such that
\[
\phi_{n}\rightarrow\phi\text{ weakly in }L^{6}\text{, }\ \ \ \ \ \ \text{and
}\nabla\phi_{n}\rightarrow\nabla\phi\text{ weakly in }L^{2}.
\]
This implies that $\chi\phi_{n}\rightarrow\chi\phi$ strongly in $L^{2}$.
Therefore $\left\Vert \chi\phi\right\Vert _{2}=1$ and thus $\phi\neq0.$ It is
easy to show that $\phi$ is a weak solution of $A_{\lambda_{0}}\phi=0$ or
\begin{equation}
-\Delta\phi=-[4\pi\int f_{0}^{\prime}(E)dv]\phi+4\pi f_{0}^{\prime}%
(E)\int_{-\infty}^{0}\lambda_{0}e^{\lambda_{0}s}\phi(X(s;x,v))dsdv=\rho.
\label{poisson}%
\end{equation}
We have that
\begin{align*}
\int\rho dx  &  =-4\pi\int\int f_{0}^{\prime}(E)\phi\left(  x\right)
dxdv+\int_{-\infty}^{0}\lambda_{0}e^{\lambda_{0}s}\int\int4\pi f_{0}^{\prime
}(E)\phi(X(s;x,v))dxdvds\\
&  =-4\pi\int\int f_{0}^{\prime}(E)\phi\left(  x\right)  dxdv+\int_{-\infty
}^{0}\lambda_{0}e^{\lambda_{0}s}\int\int4\pi f_{0}^{\prime}(E)\phi(x)dxdvds=0
\end{align*}
and by (\ref{poisson}) $\rho$ has compact support in $S_{x}$, the $x-$support
of $f_{0}(E).$ Therefore from the formula $\phi\left(  x\right)  =\int
\frac{\rho\left(  y\right)  }{\left\vert x-y\right\vert }dy$, we have
\[
\phi\left(  x\right)  =\int\frac{\rho\left(  y\right)  }{\left\vert
x-y\right\vert }dy=\int\frac{\rho\left(  y\right)  }{\left\vert x-y\right\vert
}dy-\int\frac{\rho\left(  y\right)  }{\left\vert x\right\vert }dy=O\left(
\left\vert x\right\vert ^{-2}\right)  ,
\]
for $x$ large, and thus $\phi\in L^{2}$. By elliptic regularity, $\phi\in
H^{2}$. We define $f\left(  x,v\right)  $ by (\ref{f-exp}), then $f\in
L^{\infty}$ with the compact support in $S$. Now we show that $e^{\lambda
_{0}t}[f,\phi]$ is a weak solution to the linearized Vlasov-Poisson system.
Since $\phi$ satisfies the Poisson equation (\ref{poisson}), we only need to
show that $f$ satisfies the linearized Vlasov equation (\ref{linear-vlasov})
weakly. For that, we take any $g\in C_{c}^{1}\left(  \mathbb{R}^{3}%
\times\mathbb{R}^{3}\right)  ,$ and
\begin{align*}
&  \iint_{\mathbb{R}^{3}\times\mathbb{R}^{3}}\left(  Dg\right)  fdxdv\\
&  =\iint_{\mathbb{R}^{3}\times\mathbb{R}^{3}}\left(  Dg\right)  \left(
f_{0}^{\prime}(E)\phi(x)\right)  dxdv-\iint_{\mathbb{R}^{3}\times
\mathbb{R}^{3}}\left(  Dg\right)  f_{0}^{\prime}(E)\int_{-\infty}^{0}%
\lambda_{0}e^{\lambda_{0}s}\phi(X(s;x,v))dsdxdv\\
&  =I+II.
\end{align*}
Since $D$ is skew-adjoint, the first term is
\[
I=-\iint_{\mathbb{R}^{3}\times\mathbb{R}^{3}}gD\left(  f_{0}^{\prime}%
(E)\phi\right)  dxdv=-\iint_{\mathbb{R}^{3}\times\mathbb{R}^{3}}f_{0}^{\prime
}(E)gD\phi dxdv.
\]
For the second term,
\begin{align*}
II  &  =-\int_{-\infty}^{0}\lambda_{0}e^{\lambda_{0}s}\iint_{\mathbb{R}%
^{3}\times\mathbb{R}^{3}}f_{0}^{\prime}(E)\ Dg(x,v)\ \phi\left(
X(s;x,v)\right)  dxdvds\\
&  =-\int_{-\infty}^{0}\lambda_{0}e^{\lambda_{0}s}\iint_{\mathbb{R}^{3}%
\times\mathbb{R}^{3}}f_{0}^{\prime}(E)\left(  Dg\right)  \left(
X(-s),V(-s)\right)  \phi\left(  x\right)  dxdvds\\
&  =-\iint_{\mathbb{R}^{3}\times\mathbb{R}^{3}}f_{0}^{\prime}(E)\int_{-\infty
}^{0}\lambda_{0}e^{\lambda_{0}s}\left(  -\frac{d}{ds}g\left(
X(-s),V(-s)\right)  \right)  ds\ \phi\left(  x\right)  dxdv\\
&  =\iint_{\mathbb{R}^{3}\times\mathbb{R}^{3}}f_{0}^{\prime}(E)\left\{
\lambda_{0}g\left(  x,v\right)  -\int_{-\infty}^{0}\lambda_{0}^{2}%
e^{\lambda_{0}s}g\left(  X(-s),V(-s)\right)  ds\right\}  \phi\left(  x\right)
dxdv\\
&  =\iint_{\mathbb{R}^{3}\times\mathbb{R}^{3}}\left\{  f_{0}^{\prime
}(E)\lambda_{0}\phi\left(  x\right)  -f_{0}^{\prime}(E)\int_{-\infty}%
^{0}\lambda_{0}^{2}e^{\lambda_{0}s}\phi\left(  X(s),V(s)\right)  ds\right\}
g\left(  x,v\right)  dxdv\\
&  =\lambda_{0}\iint_{\mathbb{R}^{3}\times\mathbb{R}^{3}}\left\{
f_{0}^{\prime}(E)\phi\left(  x\right)  -f_{0}^{\prime}(E)\int_{-\infty}%
^{0}\lambda_{0}e^{\lambda_{0}s}\phi\left(  X(s),V(s)\right)  ds\right\}
g\ dxdv\\
&  =.\lambda_{0}\iint_{\mathbb{R}^{3}\times\mathbb{R}^{3}}fgdxdv.
\end{align*}
Thus we have
\[
\iint_{\mathbb{R}^{3}\times\mathbb{R}^{3}}\left(  Dg\right)  fdxdv=\iint
_{\mathbb{R}^{3}\times\mathbb{R}^{3}}\left(  \lambda_{0}f-f_{0}^{\prime
}(E)D\phi\right)  gdxdv
\]
which implies that $f$ is a weak solution to the linearized Vlasov equation
\[
\lambda_{0}f+Df=f_{0}^{\prime}\left(  E\right)  v\cdot\nabla_{x}\phi.
\]

\end{proof}

\begin{remark}
\label{remark anistropic}Consider an anisotropic spherical galaxy with
$f_{0}\left(  x,v\right)  =f_{0}\left(  E,L^{2}\right)  $. For a radial
symmetric growing mode $e^{\lambda t}\left(  \phi,f\right)  $ with $\phi
=\phi\left(  \left\vert x\right\vert \right)  $ and $f=f\left(  \left\vert
x\right\vert ,E,L^{2}\right)  $. The linearized Vlasov equation (\ref{lvp})
becomes
\begin{align*}
&  \ \ \ \ \ \lambda f+v\cdot\nabla_{x}f-\nabla_{x}U_{0}\cdot\nabla_{v}f\\
&  =\nabla_{x}\phi\cdot\nabla_{v}f_{0}=\nabla_{x}\phi\cdot\left(
\frac{\partial f_{0}}{\partial E}v+\frac{\partial f_{0}}{\partial L^{2}}%
\nabla_{v}\left(  \left\vert x\times v\right\vert ^{2}\right)  \right) \\
&  =\phi^{\prime}\left(  \left\vert x\right\vert \right)  \frac{x}{\left\vert
x\right\vert }\cdot\left(  \frac{\partial f_{0}}{\partial E}v+2\frac{\partial
f_{0}}{\partial L^{2}}\left[  \left(  x\times v\right)  \times x\right]
\right)  =\frac{\partial f_{0}}{\partial E}v\cdot\nabla_{x}\phi,
\end{align*}
which is of the same form as in the isotropic case (\ref{limit-radial}). So by
the same proof of Theorem \ref{theorem-insta}, we also get an instability
criterion for radial perturbations of anisotropic galaxy, in terms of the
quadratic form (\ref{a0-radial}) with $f_{0}^{\prime}(E)$ being replaced by
$\frac{\partial f_{0}}{\partial E}$.
\end{remark}

\section{Nonlinear Stability of the King's Model}

In the second half of the article, we investigate the nonlinear stability of
the King model (\ref{king}). We first establish:

\begin{lemma}
\label{lemma-ao}Consider spherical models $f_{0}=f_{0}\left(  E\right)  $ with
$f_{0}^{\prime}<0.$ The operator $A_{0}:H_{r}^{2}\rightarrow L_{r}^{2}$
\[
A_{0}\phi=-\Delta\phi+[4\pi\int f_{0}^{\prime}dv]\phi-4\pi\int f_{0}^{\prime
}\mathcal{P}\phi dv
\]
is positive, where $H_{r}^{2}$ and $L_{r}^{2}$ are spherically symmetric
subspaces of $H^{2}$ and $L^{2}$, and the projection $\mathcal{P}\phi$ is
defined by (\ref{ergodic-radial}). Moreover, for $\phi\in H_{r}^{2}$ we have
\begin{equation}
\left(  A_{0}\phi,\phi\right)  \geq\varepsilon\left(  \left\vert \nabla
\phi\right\vert _{2}^{2}+\left\vert \phi\right\vert _{2}^{2}\right)
\label{estimate-a0}%
\end{equation}
for some constant $\varepsilon>0$.
\end{lemma}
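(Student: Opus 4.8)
The plan is to show that the quadratic form $(A_0\phi,\phi)$, which by Lemma \ref{lemma-a0-radial} equals $\int|\nabla\phi|^2 + 32\pi^3\int f_0'(E)\int_{r_1}^{r_2}(\phi-\bar\phi)^2\,\frac{dr\,dE\,dL}{\sqrt{2(E-U_0-L^2/2r^2)}}$, is positive definite on $H_r^2$, and then upgrade positivity to the coercive bound \eqref{estimate-a0} by a compactness argument. The first and main step is to connect $(A_0\phi,\phi)$ to the Antonov functional \eqref{Antonov}. Given a radial $\phi$, one wants to produce an odd-in-$v$ function $h$ with $-\Delta\psi_h=\int Dh\,dv$ related to $\phi$, so that $\mathcal{H}^{(2)}_{f_0}[Dh]$ coincides with (a multiple of) $(A_0\phi,\phi)$. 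The natural candidate is $h$ such that $Dh = f_0'(E)(\phi - \mathcal{P}\phi)$: since $\phi-\mathcal{P}\phi$ has zero trajectory average on each $(E,L)$-level set, it lies in the range of $D$ restricted to functions on the annulus $A_{E,L}$, so such an $h$ exists (solve the ODE along trajectories, i.e. integrate $\phi-\bar\phi$ in the angle variable). With this choice $\int Dh\,dv = \int f_0'(E)(\phi-\mathcal{P}\phi)\,dv$, and since $\int f_0'\mathcal{P}\phi\,dv$ reproduces exactly the nonlocal term, one checks $-\Delta\psi_h = -\Delta\phi + [4\pi\int f_0'dv]\phi - 4\pi\int f_0'\mathcal{P}\phi\,dv = A_0\phi$; hence $\psi_h=\phi$ and $\frac{1}{4\pi}\int|\nabla\psi_h|^2 = \frac{1}{4\pi}(A_0\phi,\phi) - (\text{the weighted }L^2\text{ term})$, so that $(A_0\phi,\phi)$ is, up to constants, precisely $\mathcal{H}^{(2)}_{f_0}[Dh]$. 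Antonov's inequality (as proved in \cite{an61,ant62,KS85}, and in the monotone radial case it applies since $f_0'<0$) then gives $(A_0\phi,\phi)\ge 0$, with equality only if $Dh\equiv 0$, i.e. $\phi=\mathcal{P}\phi$ on the support of $f_0$; an elementary argument (a radial $\phi$ constant on every annulus $A_{E,L}$ forces $\phi$ constant on the whole support, then $-\Delta\phi$ forces $\phi\equiv0$ using the decay of the Newtonian potential of a compactly supported, mean-zero source) shows strict positivity.

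The second step is the coercivity \eqref{estimate-a0}. I would argue by contradiction: suppose there is a sequence $\phi_n\in H_r^2$ with $|\nabla\phi_n|_2^2+|\phi_n|_2^2=1$ but $(A_0\phi_n,\phi_n)\to 0$. Extract a weakly convergent subsequence $\phi_n\rightharpoonup\phi$ in $H^1$; by the compact support of $f_0$ and Rellich, the cutoff $\chi\phi_n\to\chi\phi$ strongly in $L^2$, hence in $L^2_{|f_0'|}$, and also $\mathcal{P}\phi_n\to\mathcal{P}\phi$ in $L^2_{|f_0'|}$ since $\mathcal{P}$ is bounded there. Then $\int\int|f_0'|(\phi_n-\mathcal{P}\phi_n)^2 \to \int\int|f_0'|(\phi-\mathcal{P}\phi)^2$ while $\liminf\int|\nabla\phi_n|^2\ge\int|\nabla\phi|^2$, so $(A_0\phi,\phi)\le \liminf (A_0\phi_n,\phi_n)=0$; by Step 1 this forces $\phi\equiv 0$. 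But then the weighted term $\int\int|f_0'|(\phi_n-\mathcal{P}\phi_n)^2\to 0$, so $(A_0\phi_n,\phi_n)\to 0$ gives $|\nabla\phi_n|_2\to 0$, whence $\phi_n\to 0$ in $L^6$; combined with the strong $L^2$ convergence of $\chi\phi_n$ to $0$ and the fact that outside $S_x$ the function $\phi_n$ is harmonic-like (governed by $-\Delta\phi_n$ with source supported in $S_x$, mean zero), one gets $|\phi_n|_2\to 0$, contradicting the normalization. This yields \eqref{estimate-a0} for some $\varepsilon>0$.

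The main obstacle I anticipate is making the identification in Step 1 rigorous for the \emph{non-smooth, compactly supported} King-type profile: one must verify that the $h$ solving $Dh=f_0'(E)(\phi-\mathcal{P}\phi)$ along the (bounded) trajectories is well-defined and lies in the space where Antonov's inequality \eqref{Antonov} was established (e.g. $h$ odd in $v$, $|f_0'|^{-1/2}Dh\in L^2$, finite Dirichlet energy of $\psi_h$), and that the boundary/decay terms in integrating $D$ by parts vanish — this is exactly where the compact support of $f_0$ and the form \eqref{a0-radial} (which already has the trajectory integral written out) are used. A secondary technical point is the recovery of $L^2$-control of $\phi$ from $L^6$-control plus the Poisson structure, which is the same estimate used at the end of the proof of Theorem \ref{theorem-insta} (the $O(|x|^{-2})$ decay of the potential of a mean-zero compactly supported density); I would simply invoke that argument. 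Everything else — boundedness of $\mathcal{P}$ on $L^2_{|f_0'|}$, weak lower semicontinuity, Rellich compactness — is routine.
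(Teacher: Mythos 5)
Your Step 1 is the paper's key idea — compare $(A_0\phi,\phi)$ with the Antonov functional \eqref{Antonov} through a function $h$, odd in $v$, solving $Dh=f_0^{\prime}(E)(\phi-\mathcal{P}\phi)$ (the paper obtains $h$ abstractly from the closed range theorem for the skew--adjoint operator $D$ on the radial subspace of $L^{2}_{|f_0^{\prime}|}$, while you integrate $\phi-\bar\phi$ along orbits; that variant is acceptable modulo the integrability caveats you yourself flag). But the identification you then make is wrong: $-\Delta\psi_h=\int Dh\,dv$ contains no $-\Delta\phi$ term, so it is not $A_0\phi$, and ``hence $\psi_h=\phi$'' does not follow; in fact $\psi_h$ coincides with $\phi$ (up to a constant) precisely when $A_0\phi=0$, so $(A_0\phi,\phi)$ is \emph{not} a constant multiple of $\mathcal{H}^{(2)}_{f_0}[Dh]$. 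The correct relation — and this is exactly the computation in the paper's proof, carried out there for the minimizer $\phi_0$ with $f^{-}=f_0^{\prime}h$ — is obtained by completing the square:
\[
(A_0\phi,\phi)=\int|\nabla(\phi-\psi_h)|^{2}dx+4\pi\left(\iint\frac{|Dh|^{2}}{|f_0^{\prime}|}\,dx\,dv-\frac{1}{4\pi}\int|\nabla\psi_h|^{2}dx\right),\qquad \Delta\psi_h=4\pi\int Dh\,dv,
\]
so dropping the first (nonnegative) term gives the one-sided bound $(A_0\phi,\phi)\geq 4\pi\mathcal{H}^{(2)}_{f_0}[Dh]\geq0$, which is all that is needed. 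Your equality-case discussion inherits the same confusion (the clause ``$-\Delta\phi$ forces $\phi\equiv0$ using the decay of the Newtonian potential of a mean-zero source'' presupposes that $\phi$ solves a Poisson equation, which a general test function does not; the clean fix is that $Dh=0$ forces $\psi_h=0$, hence $\nabla\phi=0$ in the equality case). Incidentally, you never need Theorem \ref{theorem-insta}, which the paper quotes only for the soft bound $k_0\geq0$; your corrected Antonov comparison gives that directly.

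The serious gap is in Step 2. After you obtain $\|\nabla\phi_n\|_{2}\rightarrow0$ and $\chi\phi_n\rightarrow0$ in $L^{2}$, you cannot conclude $\|\phi_n\|_{2}\rightarrow0$: the $\phi_n$ are arbitrary normalized elements of $H^{2}_{r}$, they satisfy no equation, so ``harmonic-like outside $S_x$ with a mean-zero source'' has nothing to act on, and smallness in $L^{6}$ together with smallness in $L^{2}$ on a compact set does not control the $L^{2}(\mathbf{R}^{3})$ norm of a radial function. Concretely, the spreading sequence $\phi_n(x)=n^{-3/2}\psi(|x|/n)$, with $\psi$ radial and supported outside the spatial support of $f_0$, has $\chi\phi_n\equiv0$ and $\|\nabla\phi_n\|_{2}\rightarrow0$ (hence the quadratic form tends to $0$) while $\|\phi_n\|_{2}$ stays fixed; this is the failure of compactness of $H^{1}_{r}\hookrightarrow L^{2}(\mathbf{R}^{3})$ at the endpoint exponent $2$, and it shows your contradiction cannot be reached by the soft tools you list — some additional input controlling the $L^{2}$ mass at spatial infinity is indispensable. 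The paper's route is different at exactly this point: it sets $k_0=\inf(A_0\phi,\phi)/\|\phi\|_{2}^{2}$, asserts the infimum is attained (invoking a compact embedding $H^{2}_{r}\hookrightarrow L^{2}_{r}$ — this endpoint compactness is the delicate step there as well), applies the Antonov comparison to the attained minimizer to get $k_0>0$, and then deduces \eqref{estimate-a0} by the elementary splitting $(A_0\phi,\phi)=\varepsilon(\cdot)+(1-\varepsilon)(A_0\phi,\phi)$ together with $\|\mathcal{P}\|_{L^{2}_{|f_0^{\prime}|}\rightarrow L^{2}_{|f_0^{\prime}|}}\leq1$ and the boundedness of $\int|f_0^{\prime}|dv$. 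As written, your argument does not close; to rescue it you would have to import the attained-minimizer step (or an equivalent justification of $L^{2}$ control), not merely Rellich plus weak lower semicontinuity.
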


\begin{proof}
Define $k_{0}=\inf\left(  A_{0}\phi,\phi\right)  /\left(  \phi,\phi\right)
.$We want to show that $k_{0}>0$. First, by using the compact embedding of
$H_{r}^{2}\hookrightarrow L_{r}^{2}$ it is easy to show that the minimum can
be obtained and $k_{0}$ is the lowest eigenvalue. Let $A_{0}\phi_{0}=k_{0}%
\phi_{0}$ with $\phi_{0}\in H_{r}^{2}$ and $\left\Vert \phi_{0}\right\Vert
_{2}=1$. The fact that $k_{0}\geq0$ follows immediately from Theorem
\ref{theorem-insta} and the nonexistence of radial modes (\cite{dbf73},
\cite{KS85}) for monotone spherical models. The proof of $k_{0}>0$ is more
delicate. For that, we relate the quadratic form $\left(  A_{0}\phi
,\phi\right)  $ to the Antonov functional (\ref{Antonov}). We define
$D=v\cdot\partial_{x}-\nabla_{x}U_{0}\cdot\nabla_{v}$ to be the generator of
the unitary group $U\left(  s\right)  $:$L_{\left\vert f_{0}^{\prime
}\right\vert }^{2,r}\rightarrow L_{\left\vert f_{0}^{\prime}\right\vert
}^{2,r}$ defined by $U\left(  s\right)  m=m\left(  X(s;x,v),V(s;x,v)\right)
.$ Here $L_{\left\vert f_{0}^{\prime}\right\vert }^{2,r}$ is the spherically
symmetric subspace of $L_{\left\vert f_{0}^{\prime}\right\vert }^{2}$, which
is preserved under the flow mapping $U\left(  s\right)  $. By the definition
of $\mathcal{P}\phi$, we have $\phi_{0}-\mathcal{P}\phi_{0}\perp\ker D$. By
Stone theorem $iD$ is self-adjoint and in particular $D$ is closed. Therefore
by the closed range theorem (\cite{yosida}), we have $\left(  \ker D\right)
^{\perp}=R\left(  D\right)  $ , where $R\left(  D\right)  $ is the range of
$D$. So there exists $h\in L_{\left\vert f_{0}^{\prime}\right\vert }^{2,r}$
such that $Dh=\phi_{0}-\mathcal{P}\phi_{0}$. Moreover, since $\phi
_{0}-\mathcal{P}\phi_{0}$ is even in $v$ and the operator $D$ reverses the
parity in $v$, the function $h$ is odd in $v$. Define $f^{-}=f_{0}^{\prime}h.$
We have
\begin{align*}
k_{0}  &  =\left(  A_{0}\phi_{0},\phi_{0}\right)  =\int\left\vert \nabla
\phi_{0}\right\vert ^{2}dx+4\pi\int\int f_{0}^{\prime}\left(  \phi
_{0}-\mathcal{P}\phi_{0}\right)  ^{2}dxdv\\
&  =\int\left\vert \nabla\phi_{0}\right\vert ^{2}dx-8\pi\int\int\left\vert
f_{0}^{\prime}\right\vert \left(  \phi_{0}-\mathcal{P}\phi_{0}\right)
\phi_{0}dxdv\\
&  \ \ \ \ \ +4\pi\int\int\left\vert f_{0}^{\prime}\right\vert \left(
\phi_{0}-\mathcal{P}\phi_{0}\right)  ^{2}dxdv\\
&  =4\pi\left(  \int\int\frac{\left\vert Df^{-}\right\vert ^{2}}{\left\vert
f_{0}^{\prime}\right\vert }dxdv+2\int\phi_{0}\int Df^{-}dvdx+\frac{1}{4\pi
}\int\left\vert \nabla\phi_{0}\right\vert ^{2}dx\right) \\
&  =4\pi\left(  \int\int\frac{\left\vert Df^{-}\right\vert ^{2}}{\left\vert
f_{0}^{\prime}\right\vert }dxdv+\frac{1}{2\pi}\int\phi_{0}\Delta\phi
^{-}dx+\frac{1}{4\pi}\int\left\vert \nabla\phi_{0}\right\vert ^{2}dx\right) \\
&  =4\pi\left(  \int\int\frac{\left\vert Df^{-}\right\vert ^{2}}{\left\vert
f_{0}^{\prime}\right\vert }dxdv+\frac{1}{4\pi}\int\left(  \left\vert
\nabla\phi_{0}\right\vert ^{2}-2\nabla\phi_{0}\cdot\nabla\phi^{-}\right)
dx\right) \\
&  \geq4\pi\left(  \int\int\frac{\left\vert Df^{-}\right\vert ^{2}}{\left\vert
f_{0}^{\prime}\right\vert }dxdv-\frac{1}{4\pi}\int\left\vert \nabla\phi
^{-}\right\vert ^{2}dx\right)
\end{align*}
where $\Delta\phi^{-}=4\pi\int Df^{-}dv.$Notice that the last expression above
is the Antonov functional $4\pi H\left(  f^{-},f^{-}\right)  $. Since $f^{-}$
is spherical symmetric and odd in $v,$we have $H\left(  f^{-},f^{-}\right)
>0$ by the proof in \cite{KS85} which was further clarified in \cite{aly96}
and \cite{gr-king}. Therefore we get $k_{0}>0$ as desired and $\left(
A_{0}\phi,\phi\right)  \geq k_{0}\left\vert \phi\right\vert _{2}^{2}$.

To get the estimate (\ref{estimate-a0}), we rewrite
\begin{align*}
\left(  A_{0}\phi,\phi\right)   &  =\varepsilon\left(  \int\left\vert
\nabla\phi\right\vert ^{2}dx+4\pi\int\int f_{0}^{\prime}\left(  \phi
-\mathcal{P}\phi\right)  ^{2}dxdv\right)  +\left(  1-\varepsilon\right)
\left(  A_{0}\phi,\phi\right) \\
&  \geq\varepsilon\int\left\vert \nabla\phi\right\vert ^{2}dx-4\pi
\varepsilon\left\Vert \phi-\mathcal{P}\phi\right\Vert _{L_{\left\vert
f_{0}^{\prime}\right\vert }^{2}}^{2}+\left(  1-\varepsilon\right)
k_{0}\left\vert \phi\right\vert _{2}^{2}\\
&  \geq\varepsilon\int\left\vert \nabla\phi\right\vert ^{2}dx-8\pi
\varepsilon\left\Vert \phi\right\Vert _{L_{\left\vert f_{0}^{\prime
}\right\vert }^{2}}^{2}+\left(  1-\varepsilon\right)  k_{0}\left\vert
\phi\right\vert _{2}^{2}\text{ (since }\left\Vert \mathcal{P}\right\Vert
_{L_{\left\vert f_{0}^{\prime}\right\vert }^{2}\rightarrow L_{\left\vert
f_{0}^{\prime}\right\vert }^{2}}\leq1)\\
&  \geq\varepsilon\int\left\vert \nabla\phi\right\vert ^{2}dx+\left(  \left(
1-\varepsilon\right)  k_{0}-C\varepsilon\right)  \left\vert \phi\right\vert
_{2}^{2}\text{ }\geq\varepsilon\left(  \int\left\vert \nabla\phi\right\vert
^{2}dx+\left\vert \phi\right\vert _{2}^{2}\right)
\end{align*}
if $\varepsilon$ is small enough$.$
\end{proof}

Next, we will approximate the $\ker D$ by a finite dimensional approximation.
Let $\left\{  \xi_{i}(E,L)=\alpha_{i}(E)\beta_{i}(L)\right\}  _{i=1}^{\infty}$
be a smooth orthogonal basis for the subspace $\ker D=\left\{  g(E,L)\right\}
\subset$ $L_{\left\vert f_{0}^{\prime}\right\vert }^{2,r}.$Define the
finite-dimensional projection operator $\mathcal{P}_{N}:L_{\left\vert
f_{0}^{\prime}\right\vert }^{2,r}\rightarrow L_{\left\vert f_{0}^{\prime
}\right\vert }^{2,r}$ by%
\begin{equation}
\mathcal{P}_{N}h\equiv\sum_{i=1}^{N}(h,\xi_{i})_{\left\vert f_{0}^{\prime
}\right\vert }\xi_{i} \label{defn-Pn}%
\end{equation}
and the operator $A^{N}:H_{r}^{2}\rightarrow L_{r}^{2}$ by
\[
A^{N}\phi=-\Delta\phi+[4\pi\int f_{0}^{\prime}dv]\phi-4\pi\int f_{0}^{\prime
}\mathcal{P}_{N}\phi dv.
\]

\begin{lemma}
\label{lemma-an}There exists $K,\delta_{0}>0$ such that when $N>K$ we have
\begin{equation}
\left(  A^{N}\phi,\phi\right)  \geq\delta_{0}\left\vert \nabla\phi\right\vert
_{2}^{2} \label{estimate-An}%
\end{equation}
for any $\phi\in H_{r}^{2}$.
\end{lemma}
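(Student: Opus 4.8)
The plan is to argue by contradiction, exploiting the positivity of $A_0$ from Lemma \ref{lemma-ao} together with the fact that $\mathcal{P}_N \to \mathcal{P}$ strongly on $L^{2,r}_{|f_0'|}$. Suppose the conclusion fails. Then there is a sequence $N_j \to \infty$ and functions $\phi_j \in H^2_r$ with, say, $|\nabla \phi_j|_2 = 1$ but $(A^{N_j}\phi_j, \phi_j) < \tfrac{1}{j}$. From $(A^{N_j}\phi_j,\phi_j) = |\nabla\phi_j|_2^2 + 4\pi\!\int\!\!\int f_0'(\phi_j - \mathcal{P}_{N_j}\phi_j)^2\,dxdv$ and $|\nabla\phi_j|_2 = 1$, Sobolev's inequality on $\mathbf{R}^3$ gives a uniform bound on $\|\phi_j\|_{L^6}$, hence (using the compact support of $f_0'$ in $x$) uniform bounds on $\|\phi_j\|_{L^{2,r}_{|f_0'|}}$. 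After passing to a subsequence, $\phi_j \rightharpoonup \phi$ weakly in $\dot H^1$, weakly in $L^6$, and—by the compact embedding coming from the compact $x$-support of $f_0'$, as in the proof of Lemma \ref{lemma-operator}—strongly in $L^{2,r}_{|f_0'|}$.

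The key point is to pass to the limit in the cross term. Write $\mathcal{P}_{N_j}\phi_j - \mathcal{P}\phi = \mathcal{P}_{N_j}(\phi_j - \phi) + (\mathcal{P}_{N_j} - \mathcal{P})\phi$. Since each $\mathcal{P}_{N_j}$ is an orthogonal projection on $L^{2,r}_{|f_0'|}$ it has norm $\le 1$, so the first piece is controlled by $\|\phi_j - \phi\|_{L^{2,r}_{|f_0'|}} \to 0$; the second piece tends to $0$ because $\mathcal{P}_N \to \mathcal{P}$ strongly. Hence $\mathcal{P}_{N_j}\phi_j \to \mathcal{P}\phi$ strongly in $L^{2,r}_{|f_0'|}$, and therefore $\phi_j - \mathcal{P}_{N_j}\phi_j \to \phi - \mathcal{P}\phi$ strongly in that space. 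Consequently
\[
\int\!\!\int f_0'(\phi_j - \mathcal{P}_{N_j}\phi_j)^2\,dxdv \;\longrightarrow\; \int\!\!\int f_0'(\phi - \mathcal{P}\phi)^2\,dxdv,
\]
while $\liminf_j |\nabla\phi_j|_2^2 \ge |\nabla\phi|_2^2$ by weak lower semicontinuity. Therefore
\[
0 \ge \liminf_j (A^{N_j}\phi_j,\phi_j) \ge |\nabla\phi|_2^2 + 4\pi\!\int\!\!\int f_0'(\phi - \mathcal{P}\phi)^2\,dxdv = (A_0\phi,\phi).
\]
By Lemma \ref{lemma-ao}, $(A_0\phi,\phi) \ge \varepsilon(|\nabla\phi|_2^2 + |\phi|_2^2) \ge 0$, so we must have $(A_0\phi,\phi) = 0$, hence $\phi = 0$.

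It then remains to derive a contradiction from $\phi = 0$. If $\phi = 0$ then $\phi_j \to 0$ strongly in $L^{2,r}_{|f_0'|}$, so $\mathcal{P}_{N_j}\phi_j \to 0$ there as well, whence the cross term $4\pi\!\int\!\!\int f_0'(\phi_j - \mathcal{P}_{N_j}\phi_j)^2\,dxdv \to 0$. But then $(A^{N_j}\phi_j,\phi_j) = |\nabla\phi_j|_2^2 + o(1) = 1 + o(1)$, contradicting $(A^{N_j}\phi_j,\phi_j) < \tfrac 1j$. This establishes the existence of $\delta_0 > 0$ and $K$ with $(A^N\phi,\phi) \ge \delta_0 |\nabla\phi|_2^2$ for all $N > K$ and all $\phi \in H^2_r$ (scaling back by homogeneity). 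The main obstacle is the limit passage in the nonlocal cross term; the delicate ingredients there are the uniform operator-norm bound $\|\mathcal{P}_{N_j}\| \le 1$ and the compactness of the embedding $H^1_r \hookrightarrow L^{2,r}_{|f_0'|}$ afforded by the compact $x$-support of $f_0'$, exactly as used for the relative compactness of $K_\lambda$ in Lemma \ref{lemma-operator}.
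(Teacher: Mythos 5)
Your argument is correct, but it runs along a noticeably different track than the paper's. The paper first notes $A^{N}\phi\rightarrow A_{0}\phi$ strongly in $L^{2}$ for fixed $\phi$, and then argues spectrally: if the lowest eigenvalue of $A^{N}$ were below $k_{0}/2$, one takes the corresponding eigenfunctions normalized in $L^{2}$, uses elliptic estimates to get uniform $H^{2}$ bounds, invokes the compact embedding $H_{r}^{2}\hookrightarrow L_{r}^{2}$ to extract a strong $L^{2}$ limit of norm one, and passes to the limit in the eigenvalue equation to contradict the fact that $k_{0}>0$ is the lowest eigenvalue of $A_{0}$ (Lemma \ref{lemma-ao}); this yields $(A^{N}\phi,\phi)\geq\frac{k_{0}}{2}\left\vert \phi\right\vert _{2}^{2}$, which is then upgraded to the gradient bound (\ref{estimate-An}) by the same $\varepsilon$-splitting used for (\ref{estimate-a0}). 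You instead contradict the quadratic-form inequality directly, normalizing by $\left\vert \nabla\phi_{j}\right\vert _{2}=1$, and pass to the limit in the nonlocal term using $\left\Vert \mathcal{P}_{N}\right\Vert \leq1$, the strong convergence $\mathcal{P}_{N}\psi\rightarrow\mathcal{P}\psi$, and Rellich compactness localized to the (compact) $x$-support of $f_{0}^{\prime}$, reducing everything to $(A_{0}\phi,\phi)\leq0$ and hence $\phi=0$ via Lemma \ref{lemma-ao}, after which the surviving gradient term gives the contradiction. What your route buys: no eigenvalue problem for $A^{N}$ is needed, the only compactness used is local (on supp$\,f_{0}^{\prime}$) rather than the global embedding $H_{r}^{2}\hookrightarrow L_{r}^{2}$ on all of $\mathbf{R}^{3}$, and the gradient-weighted estimate comes out in one pass without the extra splitting step; the paper's route, in exchange, produces the stronger intermediate statement that the bottom of the spectrum of $A^{N}$ is at least $k_{0}/2$. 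Two small points to tidy in your write-up: the weak limit $\phi$ is a priori only in $\dot{H}_{r}^{1}\cap L^{6}$, so to apply (\ref{estimate-a0}) you should either note a density/cut-off extension of the estimate to such $\phi$, or simply conclude $\nabla\phi=0$ from the coercivity and then $\phi=0$ since $\phi\in L^{6}$; and the bound $\left\Vert \mathcal{P}_{N}\right\Vert \leq1$ presumes the basis $\left\{ \xi_{i}\right\}$ is orthonormal in $L_{\left\vert f_{0}^{\prime}\right\vert }^{2,r}$, which is implicit in the paper's definition (\ref{defn-Pn}) as well.
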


\begin{proof}
First we have $A^{N}\rightarrow A_{0}$ strongly in $L^{2}.$ In deed, for any
$\phi\in H_{r}^{2}$,
\[
\left\Vert A^{N}\phi-A_{0}\phi\right\Vert _{2}=\left\Vert \int4\pi
f_{0}^{\prime}\left(  \mathcal{P}_{N}\phi-\mathcal{P}\phi\right)
dv\right\Vert _{2}\leq C\left\Vert \mathcal{P}_{N}\phi-\mathcal{P}%
\phi\right\Vert _{L_{\left\vert f_{0}^{\prime}\right\vert }^{2}}\rightarrow0
\]
as $N\rightarrow\infty.$We claim that for $N$ sufficiently large, the lowest
eigenvalue of $A^{N}$ is at least $k_{0}/2$ where $k_{0}>0$ is the lowest
eigenvalue of $A_{0}$. Suppose otherwise, then there exists a sequence
$\left\{  \lambda_{n}\right\}  $ and $\left\{  \phi_{n}\right\}  \subset
H_{r}^{2}$ with $\lambda_{n}<k_{0}/2$, $\left\Vert \phi_{n}\right\Vert _{2}=1$
and $A^{n}\phi_{n}=\lambda_{n}\phi_{n}$. This implies that $\Delta\phi_{n}$ is
uniformly bounded in $L^{2}$, by elliptic estimate we have $\left\Vert
\phi_{n}\right\Vert _{H^{2}}\leq C$ for some constant $C$ independent of $n$.
Therefore there exists $\phi_{0}\in H_{r}^{2}$ such that $\phi_{n}%
\rightarrow\phi_{0}$ weakly in $H_{r}^{2}$. By the compact embedding of
$H_{r}^{2}$ $\hookrightarrow L_{r}^{2}$, we have $\phi_{n}\rightarrow\phi_{0}$
strongly in $L_{r}^{2}$ and $\left\Vert \phi_{0}\right\Vert _{2}=1$. The
strong convergence of $A^{n}\phi_{0}\rightarrow A_{0}\phi_{0}$ implies that
\[
A^{n}\phi_{n}\rightarrow A_{0}\phi_{0}%
\]
weakly in $L^{2}$. Let $\lambda_{n}\rightarrow\lambda_{0}\leq k_{0}/2$, then
we have $A_{0}\phi_{0}=\lambda_{0}\phi_{0}$, a contradiction. Therefore we
have $\left(  A^{N}\phi,\phi\right)  \geq k_{0}/2\left\vert \phi\right\vert
_{2}^{2}$ for $\phi\in H_{r}^{2},$ when $N$ is large enough. The estimate
(\ref{estimate-An}) is by the same proof of (\ref{estimate-a0}) in Lemma
\ref{lemma-ao}.
\end{proof}

Recalling (\ref{king}) with $f_{0}=[e^{E_{0}-E}-1]_{+\text{ }}$and
$Q_{0}(f)=(f+1)\ln(f+1)-f,$ we further define functionals (related to the
finite dimensional approximation of $\ker D$) as
\begin{align*}
A_{i}(f)  &  \equiv\int_{0}^{f}\alpha_{i}(-\ln(s+1)+E_{0})ds,\\
Q_{i}(f,L)  &  \equiv A_{i}(f)\beta_{i}(L),\text{ for }1\leq i\leq N.
\end{align*}
for $1\leq i\leq N.$ Clearly,
\[
\partial_{1}Q_{i}(f_{0},L)=\alpha_{i}(-\ln(f_{0}+1)+E_{0})\beta_{i}%
(L)=\alpha_{i}(E)\beta_{i}(L)=\xi_{i}(E,L),
\]
where $\left\{  \xi_{i}(E,L)\right\}  _{i=1}^{N}$ are used to define
$\mathcal{P}_{N}$ in Lemma \ref{lemma-an}. Define the Casimir functional
($E_{0}<0\,$)
\[
I(f)=\int[Q_{0}(f)+\frac{1}{2}|v|^{2}f-E_{0}f]dxdv-\frac{1}{8\pi}\int
|\nabla\phi|^{2}dx
\]
which is invariant of the nonlinear Vlasov-Poisson system. We introduce
additional $N$ invariants
\[
J_{i}(f,L)\equiv\int Q_{i}(f,L)dxdv.
\]
for $1\leq i\leq N$. We define $\Omega$ to be the support of $f_{0}(E).$ We
first consider
\begin{align*}
I(f)-I(f_{0})  &  =\int[Q_{0}(f)-Q_{0}(f_{0})+\frac{1}{2}|v|^{2}%
(f-f_{0})-E_{0}(f-f_{0})]dxdv\\
&  \ \ \ \ \ -\frac{1}{4\pi}\int\nabla U_{0}\cdot\nabla(U-U_{0})-\frac{1}%
{8\pi}\int|\nabla(U-U_{0})|^{2}dx\\
&  =\int[Q_{0}(f)-Q_{0}(f_{0})+(E-E_{0})(f-f_{0})]dxdv-\frac{1}{8\pi}%
\int|\nabla(U-U_{0})|^{2}dx.
\end{align*}
We define
\[
g=f-f_{0},\text{ \ \ \ \ \ }\phi=U-U_{0}%
\]
and
\[
g_{\text{in}}\equiv(f-f_{0})\mathbf{1}_{\Omega},\text{ \ \ \ \ \ \ }%
g_{\text{out}}\equiv(f-f_{0})\mathbf{1}_{\Omega^{c}},\text{ \ \ }\Delta
\phi_{\text{in}}\equiv\int g_{\text{in}},\text{ \ \ \ }\Delta\phi_{\text{out}%
}\equiv\int g_{\text{out }}.
\]
And we define the distance function for nonlinear stability as
\begin{align}
d(f,f_{0})  &  \equiv\left\{  \int\int[Q_{0}(g_{\text{in}}+f_{0})-Q_{0}%
(f_{0})+(E-E_{0})g_{\text{in}}]dxdv\right\}  +\frac{1}{8\pi}\int|\nabla
\phi_{\text{in}}|^{2}dx\label{defn-distance}\\
&  +\left\{  \int\int Q_{0}(g_{\text{out}})dxdv+\int_{E\geq E_{0}}%
(E-E_{0})g_{\text{out}}dxdv\right\} \nonumber\\
&  =d_{\text{in}}+\frac{1}{8\pi}\int|\nabla\phi_{\text{in}}|^{2}%
dx+d_{\text{out}},\nonumber
\end{align}
for which each term is non-negative. We therefore split:%
\begin{align*}
&  I(f)-I(f_{0})\\
&  =\left\{  \int[Q_{0}(f_{0}+g_{\text{in}})-Q_{0}(f_{0})+(E-E_{0}%
)g_{\text{in}}]dxdv-\frac{1}{8\pi}\int|\nabla\phi_{\text{in}}|^{2}dx\right\}
+\\
&  \left\{  \int Q_{0}(g_{\text{out}})dxdv+\int_{E\geq E_{0}}(E-E_{0}%
)g_{\text{out}}dxdv-\frac{1}{8\pi}\int|\nabla\phi_{\text{out}}|^{2}dx-\frac
{1}{4\pi}\int\nabla\phi_{\text{out}}\cdot\nabla\phi_{\text{in}}dx\right\} \\
&  =I_{\text{in}}+I_{\text{out }}.
\end{align*}
In the estimates below, we use $C,C^{\prime},C^{\prime\prime}$ to denote
general constants depending only on $f_{0}\ $and quantities like $\left\Vert
f\left(  t\right)  \right\Vert _{L^{p}}$ $\left(  p\in\left[  1,+\infty
\right]  \right)  $ which equals $\left\Vert f\left(  0\right)  \right\Vert
_{L^{p}}$ and therefore always under control. We first estimate $\left\Vert
\nabla\phi_{\text{out}}\right\Vert _{2}^{2}$ to be of higher order of $d$,
which also implies that $\int\nabla\phi_{\text{out}}\cdot\nabla\phi
_{\text{in}}dx$ is of higher order of $d$.

\begin{lemma}
\label{lemma-out}For $\varepsilon>0\ $sufficiently small, we have
\[
\int|\nabla\phi_{\text{out}}|^{2}dx\leq C\left(  \varepsilon d(f,f_{0}%
)+\frac{1}{\varepsilon^{5/3}}[d(f,f_{0})]^{5/3}\right)  .
\]

\end{lemma}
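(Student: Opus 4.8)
The plan is to control $\nabla\phi_{\text{out}}$ purely in terms of the outer part of the distance, $d_{\text{out}}=\int Q_0(g_{\text{out}})\,dxdv+\int_{E\ge E_0}(E-E_0)g_{\text{out}}\,dxdv$, which is the only piece of $d$ that sees the region $\Omega^c$. Since $g_{\text{out}}=(f-f_0)\mathbf 1_{\Omega^c}=f\mathbf 1_{\Omega^c}\ge0$ (because $f_0$ vanishes on $\Omega^c$), and $\Delta\phi_{\text{out}}=\int g_{\text{out}}\,dv$, the standard elliptic estimate gives
\[
\int|\nabla\phi_{\text{out}}|^2\,dx=-\int\phi_{\text{out}}\,\Delta\phi_{\text{out}}\,dx
\le C\,\|\rho_{\text{out}}\|_{6/5}^2,\qquad \rho_{\text{out}}(x)=\int g_{\text{out}}(x,v)\,dv,
\]
by Hardy--Littlewood--Sobolev (equivalently $\||\nabla|^{-1}\rho\|_2\lesssim\|\rho\|_{6/5}$). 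So the task reduces to bounding $\|\rho_{\text{out}}\|_{6/5}$ by the stated power of $d$.

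The key step is a pointwise-in-$x$ interpolation that converts the nonlinear Casimir density $Q_0(g_{\text{out}})$ into control of $\rho_{\text{out}}$. First I would split the $v$-integral defining $\rho_{\text{out}}$ into the region $\{E<E_0+1\}$ and $\{E\ge E_0+1\}$ (or simply use the weight $E-E_0\ge0$ on $\Omega^c$). On the far region the linear term $\int_{E\ge E_0}(E-E_0)g_{\text{out}}$ already gives an $L^1$-type bound with a decaying velocity weight, which after integrating in $v$ is harmless. On the near region one uses that for $0\le f\le $ (something), $Q_0(f)=(1+f)\ln(1+f)-f\ge c\min(f^2,f\ln(1+f))$; combined with the crude a priori bound $\|f(t)\|_{L^\infty}=\|f(0)\|_{L^\infty}$ (under control), one gets $g_{\text{out}}^2\le C\,Q_0(g_{\text{out}})$ on the relevant set, hence $\int g_{\text{out}}^2\,dxdv\le C\,d_{\text{out}}$. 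Then a Hölder interpolation in $v$ of the form
\[
\rho_{\text{out}}(x)=\int_{|v|\le R}g_{\text{out}}\,dv+\int_{|v|>R}g_{\text{out}}\,dv
\le C R^{3/2}\Big(\int g_{\text{out}}^2\,dv\Big)^{1/2}+\frac{C}{R}\int |v|^2 g_{\text{out}}\,dv,
\]
optimized in $R=R(x)$, yields a pointwise bound of $\rho_{\text{out}}$ by a product of powers of $\int g_{\text{out}}^2\,dv$ and the kinetic-energy density; raising to the power $6/5$, integrating in $x$, and applying Hölder in $x$ with the global bounds on $\int g_{\text{out}}^2\,dxdv$ ($\le C d_{\text{out}}$) and on $\int|v|^2 g_{\text{out}}\,dxdv$ (controlled by $d_{\text{out}}$ plus the conserved total mass/energy) produces $\|\rho_{\text{out}}\|_{6/5}^2\le C\,d_{\text{out}}^{p}$ for a suitable $p$. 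Tracking the exponents should give the $5/3$ power; the term $\varepsilon d$ appears when one splits off the low-order contribution (the part of $\rho_{\text{out}}$ already linearly controlled) with a Young-type inequality, and the $\varepsilon^{-5/3}$ is the dual Young constant from $ab\le \varepsilon a^{5/2}+C\varepsilon^{-2/3}b^{5/3}$ or similar.

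The main obstacle is the non-smooth, slow-growing Casimir $Q_0(f)\sim f\ln f$: unlike a polytrope where $Q_0\sim f^{1+1/k}$ gives a clean $L^{1+1/k}$ bound, here $Q_0$ barely beats $f$, so one cannot directly extract an $L^p$ bound with $p>1$ of $g_{\text{out}}$ from $d_{\text{out}}$ alone. The fix is precisely to bring in the independently-conserved quantities — total mass $\|f\|_{L^1}$, total energy, and $\|f\|_{L^\infty}$ — which are all "always under control" per the convention stated just before the lemma; these upgrade the weak $L\log L$ control from $Q_0$ to an honest $L^2_{x,v}$ bound on $g_{\text{out}}$ on the bounded-energy region, after which the interpolation in $v$ is routine. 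I would be careful that all constants depend only on $f_0$ and on conserved norms of $f$, never on $t$.
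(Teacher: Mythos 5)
Your reduction to $\|\rho_{\text{out}}\|_{L^{6/5}}^2$ via the Sobolev/HLS estimate, the bound $\int\int g_{\text{out}}^2\,dx\,dv\le C\,d$ coming from $Q_0(s)\ge c\,s^2$ on $0\le s\le\|f\|_\infty$ (using $g_{\text{out}}=f\mathbf 1_{\Omega^c}\ge 0$), and the use of the weight $E-E_0$ together with a Glassey-type kinetic interpolation on the high-energy part all match the paper's proof. But there is a genuine gap in how you produce the two terms $\varepsilon d$ and $\varepsilon^{-5/3}d^{5/3}$. In the paper, the $\varepsilon$ of the statement is the \emph{energy cutoff itself}: one splits $g_{\text{out}}$ at $E=E_0+\varepsilon$, and the near part is controlled by Cauchy--Schwarz in $v$ together with the thin-shell estimate $\int\mathbf 1_{\{E_0\le E\le E_0+\varepsilon\}}\,dv\le C\varepsilon$ (valid for $\varepsilon$ small so that $E_0+\varepsilon\le 0$), which converts the $L^2$ control $\int g_{\text{out}}^2\le Cd$ into $C\varepsilon\,d$; the far part has $E-E_0\ge\varepsilon$, so its mass is $\le d/\varepsilon$ and its kinetic energy $\le d+C d/\varepsilon$, and the interpolation $\|\int g\,dv\|_{6/5}^2\le\big(\int\int g\big)^{7/6}\big(\int\int|v|^2g\big)^{1/2}$ yields $C\varepsilon^{-5/3}d^{5/3}$. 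You instead fix the cutoff at $E_0+1$ and try to generate the $\varepsilon$ afterwards by a Young-type inequality; this cannot work, because on the near-$E_0$ region your argument only gives a bound $C\,d$ with a fixed (possibly large) constant, and no algebraic manipulation upgrades $X\le Cd$ to $X\le\varepsilon d+C_\varepsilon d^{5/3}$, which for small $d$ is strictly smaller. Similarly, your control of $\int|v|^2 g_{\text{out}}$ by ``$d_{\text{out}}$ plus conserved mass/energy'' only yields an $O(1)$ bound (the conserved quantities are not small), so the optimized-in-$R$ interpolation again delivers at best a linear-in-$d$ bound with a non-tunable constant.

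This is not a cosmetic difference: the small coefficient on the linear term is exactly what the proof of Theorem \ref{theorem-stability} needs, since the error from $\|\nabla\phi_{\text{out}}\|_2^2$ must be absorbed into $d_{\text{out}}+\frac{\tau}{2}d_{\text{in}}+\cdots$, and a term $C\,d(t)$ with large $C$ would destroy the lower bound $\delta' d(t)-\cdots$. The repair is precisely the paper's device: make the energy splitting parameter coincide with the $\varepsilon$ of the statement, prove the velocity-measure smallness of the shell $\{E_0\le E\le E_0+\varepsilon\}$ by the explicit computation with $U_0$ bounded, and pay the $\varepsilon^{-1}$ factors only on the region where $E-E_0\ge\varepsilon$.
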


\begin{proof}
In fact, since
\begin{align*}
\int|\nabla\phi_{\text{out}}|^{2}dx  &  \leq C||\int g_{\text{out }%
}dv||_{L^{6/5}}^{2}\\
&  \leq C||\int g_{\text{out }}\mathbf{1}_{E_{0}\leq E\leq E_{0}+\varepsilon
}dv||_{L^{6/5}}^{2}+C||\int g_{\text{out }}\mathbf{1}_{E>E_{0}+\varepsilon
}dv||_{L^{6/5}}^{2}.
\end{align*}
The first term is bounded by%
\begin{align*}
&  \left[  \int[\int g_{\text{out }}^{2}dv]^{3/5}[\int\mathbf{1}_{E_{0}\leq
E\leq E_{0}+\varepsilon}dv]^{3/5}dx\right]  ^{5/3}\\
&  \leq\lbrack\int g_{\text{out }}^{2}dvdx]\times\left[  \int[\int
\mathbf{1}_{E_{0}\leq E\leq E_{0}+\varepsilon}dv]^{3/2}dx\right]  ^{2/3}\\
&  \leq C\varepsilon\lbrack\int g_{\text{out }}^{2}dvdx]\leq C\varepsilon
\lbrack\int g_{\text{out }}^{2}dvdx]\\
&  \leq C\varepsilon d(f,f_{0}).
\end{align*}
In the above estimates, we use that $\int\int Q_{0}(g_{\text{out}})dvdx\geq
c\int g_{\text{out }}^{2}dvdx$ and
\[
\int\mathbf{1}_{E_{0}\leq E\leq E_{0}+\varepsilon}dv\leq C\varepsilon,
\]
which can be checked by an explicit computation when $\varepsilon>0\ $is
sufficiently small such that $E_{0}+\varepsilon\leq0$.

On the other hand, by the standard estimates (see \cite[P. 120-121]{glassey})
\begin{align*}
&  ||\int g_{\text{out }}\mathbf{1}_{E>E_{0}+\varepsilon}dv||_{L^{6/5}}^{2}\\
&  \leq\left[  \int\int g_{\text{out \ }}\mathbf{1}_{E>E_{0}+\varepsilon
}dxdv\right]  ^{\frac{7}{6}}\times\left[  \int\int|v|^{2}g_{\text{out \ }%
}\mathbf{1}_{E>E_{0}+\varepsilon}dxdv\right]  ^{\frac{1}{2}}\\
&  \leq\left[  \frac{1}{\varepsilon}\int\int(E-E_{0})g_{\text{out \ }%
}\mathbf{1}_{E>E_{0}+\varepsilon}dxdv\right]  ^{\frac{7}{6}}\\
&  \times\left[  \int\int(E-E_{0})g_{\text{out \ }}\mathbf{1}_{E>E_{0}%
+\varepsilon}dxdv+2\sup|U_{0}|\int\int g_{\text{out \ }}\mathbf{1}%
_{E>E_{0}+\varepsilon}dxdv\right]  ^{\frac{1}{2}}\\
&  \leq\left(  \frac{1}{\varepsilon}d\right)  ^{\frac{7}{6}}\left(
d+\frac{2\sup|U_{0}|}{\varepsilon}d\right)  ^{\frac{1}{2}}\leq\frac
{C}{\varepsilon^{5/3}}d^{5/3}\text{.}%
\end{align*}

\end{proof}

By Lemma \ref{lemma-out}, we have%
\begin{align*}
\left\vert \int\nabla\phi_{\text{out}}\cdot\nabla\phi_{\text{in}%
}dx\right\vert  &  \leq\left\Vert \nabla\phi_{\text{out}}\right\Vert
_{2}\left\Vert \nabla\phi_{\text{in}}\right\Vert _{2}\\
&  \leq C\left(  \varepsilon^{1/3}d(f,f_{0})+\frac{1}{\varepsilon^{5/6}%
}[d(f,f_{0})]^{4/3}\right)
\end{align*}
and therefore for $\varepsilon$ sufficiently small,
\begin{equation}
I_{\text{out }}\geq d_{\text{out}}-C\left(  \varepsilon^{1/3}d(f,f_{0}%
)+\frac{1}{\varepsilon^{5/6}}[d(f,f_{0})]^{4/3}+\frac{1}{\varepsilon^{5/3}%
}[d(f,f_{0})]^{5/3}\right)  . \label{estimate-I-out}%
\end{equation}

To estimate $I_{\text{in}}$, we split it into three parts:
\begin{align}
&  \tau\left\{  \int[Q_{0}(f_{0}+g_{\text{in}})-Q_{0}(f_{0})+(E-E_{0}%
)g_{\text{in}}+\phi_{\text{in}}g_{\text{in}}]dxdv+\frac{1}{8\pi}\int
|\nabla\phi_{\text{in}}|^{2}dx\right\}  +\nonumber\\
&  (1-\tau)\left\{  \int[Q_{0}(f_{0}+g_{\text{in}})-Q_{0}(f_{0})+(E-E_{0}%
)g_{\text{in}}+(I-P_{N})\phi_{\text{in}}g_{\text{in}}]dxdv+\frac{1}{8\pi}%
\int|\nabla\phi_{\text{in}}|^{2}dx\right\} \nonumber\\
&  +(1-\tau)\int P_{N}\phi_{\text{in}}g_{\text{in}}dxdv\nonumber\\
&  =I_{\text{in}}^{1}+I_{\text{in}}^{2}+I_{\text{in}}^{3}, \label{i1+i2+i3}%
\end{align}
where $\Delta\phi_{\text{in}}=4\pi\int g_{\text{in }}dv.$ We estimate each
term in the following lemmas.

\begin{lemma}%
\begin{equation}
I_{\text{in}}^{1}\geq\frac{\tau}{2}d_{\text{in}}-C\tau\int|\nabla
\phi_{\text{in}}|^{2}dx. \label{estimate-I-in1}%
\end{equation}

\end{lemma}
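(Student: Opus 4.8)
The plan is to estimate $I_{\text{in}}^{1}$ by recognizing that the integrand $Q_{0}(f_{0}+g_{\text{in}})-Q_{0}(f_{0})+(E-E_{0})g_{\text{in}}+\phi_{\text{in}}g_{\text{in}}$ differs from the distance integrand $Q_{0}(f_{0}+g_{\text{in}})-Q_{0}(f_{0})+(E-E_{0})g_{\text{in}}$ only by the extra term $\phi_{\text{in}}g_{\text{in}}$, which is linear in $g_{\text{in}}$ and hence controllable. Concretely,
\[
I_{\text{in}}^{1}=\tau\left(d_{\text{in}}-\frac{1}{8\pi}\int|\nabla\phi_{\text{in}}|^{2}dx+\int\phi_{\text{in}}g_{\text{in}}\,dxdv\right)+\frac{\tau}{8\pi}\int|\nabla\phi_{\text{in}}|^{2}dx,
\]
since the definition of $d_{\text{in}}$ already carries the $+\frac{1}{8\pi}\int|\nabla\phi_{\text{in}}|^{2}$ piece. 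Wait — checking against (\ref{defn-distance}): $d_{\text{in}}=\int[Q_{0}(f_{0}+g_{\text{in}})-Q_{0}(f_{0})+(E-E_{0})g_{\text{in}}]dxdv$ and the $\frac{1}{8\pi}\int|\nabla\phi_{\text{in}}|^{2}$ appears separately in $d$. So $I_{\text{in}}^{1}=\tau d_{\text{in}}+\tau\int\phi_{\text{in}}g_{\text{in}}\,dxdv+\frac{\tau}{8\pi}\int|\nabla\phi_{\text{in}}|^{2}dx$.

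**Next I would** bound the cross term $\int\phi_{\text{in}}g_{\text{in}}\,dxdv$. The natural move is integration by parts: since $\Delta\phi_{\text{in}}=4\pi\int g_{\text{in}}\,dv$, we have $\int\phi_{\text{in}}g_{\text{in}}\,dxdv=\int\phi_{\text{in}}\left(\int g_{\text{in}}\,dv\right)dx=\frac{1}{4\pi}\int\phi_{\text{in}}\Delta\phi_{\text{in}}\,dx=-\frac{1}{4\pi}\int|\nabla\phi_{\text{in}}|^{2}dx$. Plugging this in gives $I_{\text{in}}^{1}=\tau d_{\text{in}}+\frac{\tau}{8\pi}\int|\nabla\phi_{\text{in}}|^{2}dx-\frac{\tau}{4\pi}\int|\nabla\phi_{\text{in}}|^{2}dx=\tau d_{\text{in}}-\frac{\tau}{8\pi}\int|\nabla\phi_{\text{in}}|^{2}dx$, which already beats the claimed bound with room to spare (one would simply write $\geq\frac{\tau}{2}d_{\text{in}}-C\tau\int|\nabla\phi_{\text{in}}|^{2}dx$ with, say, $C=\frac{1}{8\pi}$ and discard the favorable $\frac{\tau}{2}d_{\text{in}}$ slack, or keep it). If instead the cross term does not integrate by parts cleanly because of subtleties with the support restriction $\mathbf{1}_{\Omega}$ (so that $\int g_{\text{in}}\,dv$ is not literally $\Delta\phi_{\text{in}}/4\pi$ pointwise — but by the very definition $\Delta\phi_{\text{in}}=4\pi\int g_{\text{in}}$ it is), then one falls back to $\left|\int\phi_{\text{in}}g_{\text{in}}\,dxdv\right|\leq\|\phi_{\text{in}}\|_{L^{6}}\|\int g_{\text{in}}\,dv\|_{L^{6/5}}\leq C\|\nabla\phi_{\text{in}}\|_{2}^{2}$ using Sobolev and the elliptic estimate for $\phi_{\text{in}}$, again absorbing into $C\tau\int|\nabla\phi_{\text{in}}|^{2}dx$.

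**The only real subtlety** is purely bookkeeping: making sure the $\frac{1}{8\pi}\int|\nabla\phi_{\text{in}}|^{2}$ terms in (\ref{defn-distance}) and in (\ref{i1+i2+i3}) are accounted for with the correct signs, and confirming that the integration by parts is legitimate (no boundary terms, which holds because $\phi_{\text{in}}\to 0$ at infinity and $g_{\text{in}}$ has compact support). There is no analytic obstacle here — the estimate is generous, and indeed the genuinely delicate work is deferred to the bounds on $I_{\text{in}}^{2}$ (where the non-smoothness of $Q_{0}$ and the cubic remainder $f^{3}$ enter) and $I_{\text{in}}^{3}$ (where the finite-dimensional projection $\mathcal{P}_{N}$ and the Legendre/duality argument live). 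For $I_{\text{in}}^{1}$ itself, the proof is a two-line integration by parts followed by a trivial inequality.
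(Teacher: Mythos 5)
Your main argument is correct, and it takes a genuinely different (in fact sharper) route than the paper. The paper treats the cross term $\int\int\phi_{\text{in}}g_{\text{in}}\,dxdv$ as an error: it drops the nonnegative $\frac{\tau}{8\pi}\int|\nabla\phi_{\text{in}}|^{2}dx$ piece, bounds $\left\vert \int\int\phi_{\text{in}}g_{\text{in}}\right\vert \leq\Vert\phi_{\text{in}}\Vert_{L^{6}}\Vert g_{\text{in}}\Vert_{L^{6/5}}\leq C\Vert\nabla\phi_{\text{in}}\Vert_{2}\Vert g_{\text{in}}\Vert_{2}$ (Sobolev plus the finiteness of the region $\Omega$), and then absorbs $\Vert g_{\text{in}}\Vert_{2}$ using $d_{\text{in}}\geq C\Vert g_{\text{in}}\Vert_{2}^{2}$ and Young's inequality; that absorption is precisely where the factor $\frac{1}{2}$ in front of $d_{\text{in}}$ comes from. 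You instead evaluate the cross term exactly, $\int\int\phi_{\text{in}}g_{\text{in}}\,dxdv=\frac{1}{4\pi}\int\phi_{\text{in}}\Delta\phi_{\text{in}}\,dx=-\frac{1}{4\pi}\int|\nabla\phi_{\text{in}}|^{2}dx$, which is legitimate: $\rho_{\text{in}}=\int g_{\text{in}}dv$ is bounded with compact support, so $\phi_{\text{in}}=O(|x|^{-1})$, $\nabla\phi_{\text{in}}=O(|x|^{-2})$ and the boundary term vanishes. (Indeed this identity is exactly what makes the paper's splitting $I_{\text{in}}=I_{\text{in}}^{1}+I_{\text{in}}^{2}+I_{\text{in}}^{3}$ consistent with the original expression for $I_{\text{in}}$.) You obtain the stronger identity $I_{\text{in}}^{1}=\tau d_{\text{in}}-\frac{\tau}{8\pi}\int|\nabla\phi_{\text{in}}|^{2}dx$, which implies the lemma at once; the improvement from $\frac{\tau}{2}d_{\text{in}}$ to $\tau d_{\text{in}}$ is harmless but unnecessary for the final stability argument, which only needs a positive multiple of $d_{\text{in}}$ minus $C\tau\Vert\nabla\phi_{\text{in}}\Vert_{2}^{2}$.

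One caution: your fallback H\"older chain is wrong as written, because $\Vert\int g_{\text{in}}dv\Vert_{L^{6/5}}\leq C\Vert\nabla\phi_{\text{in}}\Vert_{2}$ is a reverse elliptic estimate that fails in general (the valid inequality goes the other way, $\Vert\nabla\phi_{\text{in}}\Vert_{2}\leq C\Vert\rho_{\text{in}}\Vert_{L^{6/5}}$). Had the exact identity not been available, the correct fallback is the paper's: bound $\Vert g_{\text{in}}\Vert_{L^{6/5}}\leq C\Vert g_{\text{in}}\Vert_{2}$ on the bounded set $\Omega$, then use $\Vert g_{\text{in}}\Vert_{2}^{2}\leq Cd_{\text{in}}$ and Young. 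Since your primary argument stands on its own, this does not create a gap.
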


\begin{proof}
In fact, since the integration region $\Omega$ is finite, we have
\begin{align*}
I_{\text{in}}^{1}=  &  \tau\left[  \int\int[Q_{0}(f_{0}+g_{\text{in}}%
)-Q_{0}(f_{0})+(E-E_{0})g_{\text{in}}+\phi_{\text{in}}g_{\text{in}}%
]dxdv+\frac{1}{8\pi}\int|\nabla\phi_{\text{in}}|^{2}dx\right] \\
&  \geq\tau\int\int[Q_{0}(f_{0}+g_{\text{in}})-Q_{0}(f_{0})+(E-E_{0}%
)g_{\text{in}}]dxdv-C\tau||\phi_{\text{in}}||_{L^{6}}||g_{\text{in}%
}||_{L^{6/5}}\\
&  \geq\tau\int\int[Q_{0}(f_{0}+g_{\text{in}})-Q_{0}(f_{0})+(E-E_{0}%
)g_{\text{in}}]dxdv-C^{\prime}\tau||\nabla\phi_{\text{in}}||_{L^{2}%
}||g_{\text{in}}||_{2}\\
&  \geq\frac{\tau}{2}d_{\text{in}}-C^{\prime\prime}\tau||\nabla\phi
_{\text{in}}||_{2}^{2},
\end{align*}
since%
\[
d_{\text{in}}=\int[Q_{0}(f_{0}+g_{\text{in}})-Q_{0}(f_{0})+(E-E_{0}%
)g_{\text{in}}]dxdv\geq C||g_{\text{in}}||_{2}^{2}.
\]

\end{proof}

To estimate $I_{\text{in}}^{2}$, we need the following pointwise duality lemma
from elementary calculus.

\begin{lemma}
\label{lemma-duality}For any $c,$ and any $h,$ we have
\[
g_{c,f_{0}}\left(  h\right)  =Q_{0}(h+f_{0})-Q_{0}(f_{0})-Q_{0}^{\prime}%
(f_{0})h-ch\geq(f_{0}+1)(1+c-e^{c}).
\]

\end{lemma}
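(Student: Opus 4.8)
The plan is to reduce the inequality to a one-variable convex minimization via a rescaling of $h$. Recall that $Q_0(f)=(1+f)\ln(1+f)-f$ (the additive constant in the alternative normalization cancels in $g_{c,f_0}$), so $Q_0'(f)=\ln(1+f)$ and in particular $Q_0'(f_0)=\ln(1+f_0)$. First I would set $a=1+f_0>0$ and introduce the scaled variable $t=(1+h+f_0)/(1+f_0)\ge 0$, so that $1+h+f_0=at$ and $h=a(t-1)$. Substituting directly into $g_{c,f_0}(h)=Q_0(h+f_0)-Q_0(f_0)-Q_0'(f_0)h-ch$, the terms $(1+f_0)\ln(1+f_0)$ coming from $Q_0(f_0)$ and $Q_0'(f_0)h$ combine with $(1+h+f_0)\ln(1+h+f_0)$ to collapse everything, yielding
\[
g_{c,f_0}(h)=a\bigl(t\ln t-(1+c)t+(1+c)\bigr)=:a\,\psi(t),
\]
where at $t=0$ we use the convention $t\ln t=0$ (consistent with continuity, this being exactly the value of the left side when $h=-1-f_0$).

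Next I would minimize $\psi$ over $t\ge 0$. Since $\psi''(t)=1/t>0$ on $(0,\infty)$, $\psi$ is strictly convex there; moreover $\psi(0^+)=1+c$ and $\psi(t)\to+\infty$ as $t\to+\infty$, so $\psi$ attains its global minimum at its unique interior critical point. Solving $\psi'(t)=\ln t-c=0$ gives $t=e^c$, whence
\[
\psi(e^c)=c\,e^c-(1+c)e^c+(1+c)=1+c-e^c .
\]
Therefore $\psi(t)\ge 1+c-e^c$ for all $t\ge 0$, and multiplying through by $a=1+f_0>0$ gives
\[
g_{c,f_0}(h)\ge(1+f_0)(1+c-e^c),
\]
which is precisely the asserted bound.

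There is no real obstacle here, and this is the "elementary calculus" step its statement advertises. The only points deserving (minor) care are: checking that the logarithmic terms from $Q_0(f_0)$ and $Q_0'(f_0)h$ cancel exactly after the substitution; confirming the boundary behavior of $\psi$ at $t=0$ and $t=+\infty$ so that the interior critical point is genuinely the global minimizer; and recording that $1+f_0>0$ on the relevant range, with $h$ understood to be restricted so that $1+h+f_0\ge 0$, so that both the rescaling and the final multiplication by $a$ preserve the inequality. The content is just the fact that $\min_{t\ge 0}\bigl(t\ln t-(1+c)t\bigr)=c-e^c$, realized at $t=e^c$.
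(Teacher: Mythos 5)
Your proof is correct and is essentially the paper's argument: both minimize the convex function of $h$ pointwise, the critical point $t=e^{c}$ in your scaled variable being exactly the paper's minimizer $f_{c}=(f_{0}+1)(e^{c}-1)$, and both evaluate there to obtain $(f_{0}+1)(1+c-e^{c})$. Your normalization $t=(1+h+f_{0})/(1+f_{0})$ and the explicit convexity/boundary check merely tidy up what the paper does directly via the Euler--Lagrange equation and back-substitution.
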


\begin{proof}
Direct computation yields that the minimizer $f_{c}$ of $g_{c,f_{0}}\left(
h\right)  $ satisfies the Euler-Lagrange equation%
\[
\ln\left(  f_{c}+f_{0}+1\right)  -\ln\left(  f_{0}+1\right)  -c=0,
\]
so%
\[
f_{c}=\left(  f_{0}+1\right)  \left(  e^{c}-1\right)  .
\]
Thus by using the Euler-Lagrange equation, we deduce
\begin{align*}
\min g_{c,f_{0}}\left(  h\right)   &  =g_{c,d}\left(  f_{c}\right) \\
&  =(f_{c}+f_{0}+1)\ln(1+f_{c}+f_{0})\\
&  -(f_{0}+1)\ln(1+f_{0})-[1+\ln(f_{0}+1)]f_{c}-cf_{c}\\
&  =(f_{c}+f_{0}+1)[\ln(1+f_{c}+f_{0})-\ln(f_{0}+1)-c]\\
&  +f_{c}\ln(1+f_{0})+c(f_{0}+1)-[1+\ln(f_{0}+1)]f_{c}\\
&  =(f_{0}+1)(1+c-e^{c}).
\end{align*}

\end{proof}

\begin{lemma}%
\begin{equation}
I_{\text{in}}^{2}\geq\frac{\left(  1-\tau\right)  \delta_{0}}{8\pi}\int
|\nabla\phi_{\text{in}}|^{2}dx-Ce^{C^{\prime}d^{\frac{1}{2}}}d^{\frac{3}{2}}.
\label{estimate-I-in2}%
\end{equation}

\end{lemma}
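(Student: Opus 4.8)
\emph{Proof strategy.} The point is to recognize the integrand of $I_{\text{in}}^{2}$ as an instance of the pointwise functional $g_{c,f_{0}}$ of Lemma~\ref{lemma-duality}, apply that lemma, and then Taylor expand the resulting exponential lower bound so that its quadratic part is exactly the quadratic form $(A^{N}\phi_{\text{in}},\phi_{\text{in}})$ to which Lemma~\ref{lemma-an} applies. For the normalized King model $Q_{0}^{\prime}(f)=\ln(1+f)$, so on $\Omega$ one has $Q_{0}^{\prime}(f_{0})=\ln(e^{E_{0}-E})=E_{0}-E$, i.e. $(E-E_{0})g_{\text{in}}=-Q_{0}^{\prime}(f_{0})g_{\text{in}}$ there. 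Hence, pointwise on $\Omega$, the bracket in $I_{\text{in}}^{2}$ equals
\[
Q_{0}(f_{0}+g_{\text{in}})-Q_{0}(f_{0})-Q_{0}^{\prime}(f_{0})g_{\text{in}}-c\,g_{\text{in}}=g_{c,f_{0}}(g_{\text{in}}),\qquad c:=-(I-P_{N})\phi_{\text{in}},
\]
where $c=c(x,v)$ depends on $x$ through $\phi_{\text{in}}$ and on $v$ through $\mathcal{P}_{N}\phi_{\text{in}}$. By Lemma~\ref{lemma-duality}, $g_{c,f_{0}}(g_{\text{in}})\ge (f_{0}+1)(1+c-e^{c})$, so integrating over $\Omega$ we get a lower bound for the bracket integral by $\int_{\Omega}(f_{0}+1)(1+c-e^{c})\,dxdv$.

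Next, by Taylor expansion there is $\theta\in(0,1)$ with $1+c-e^{c}=-\tfrac12 c^{2}e^{\theta c}\ge-\tfrac12 c^{2}-\tfrac12 c^{2}(e^{|c|}-1)\ge-\tfrac12 c^{2}-\tfrac12|c|^{3}e^{|c|}$. The crucial special feature of the King model is that $f_{0}+1=e^{E_{0}-E}=|f_{0}^{\prime}|$ on $\Omega$ and $f_{0}^{\prime}\equiv0$ off $\Omega$, so the quadratic piece combines with the field energy term to reproduce $A^{N}$ exactly:
\[
-\tfrac12\int_{\Omega}(f_{0}+1)c^{2}\,dxdv+\frac{1}{8\pi}\int|\nabla\phi_{\text{in}}|^{2}dx=\frac{1}{8\pi}\left(\int|\nabla\phi_{\text{in}}|^{2}dx-4\pi\int\int|f_{0}^{\prime}|\big((I-P_{N})\phi_{\text{in}}\big)^{2}dxdv\right)=\frac{1}{8\pi}(A^{N}\phi_{\text{in}},\phi_{\text{in}}).
\]
Since $\phi_{\text{in}}$ is spherically symmetric and solves $\Delta\phi_{\text{in}}=4\pi\int g_{\text{in}}dv$ with compactly supported right-hand side, Lemma~\ref{lemma-an} (with $N$ chosen $>K$) gives $(A^{N}\phi_{\text{in}},\phi_{\text{in}})\ge\delta_{0}|\nabla\phi_{\text{in}}|_{2}^{2}$. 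Multiplying through by $(1-\tau)$ yields
\[
I_{\text{in}}^{2}\ge\frac{(1-\tau)\delta_{0}}{8\pi}|\nabla\phi_{\text{in}}|_{2}^{2}-\frac{1-\tau}{2}\int_{\Omega}(f_{0}+1)|c|^{3}e^{|c|}\,dxdv,
\]
so it remains to show the last integral is of order $e^{C^{\prime}d^{1/2}}d^{3/2}$.

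The heart of the matter is then an $L^{\infty}$ bound $\|(I-P_{N})\phi_{\text{in}}\|_{\infty}\le Cd^{1/2}$. The density $\rho_{\text{in}}=\int g_{\text{in}}\,dv$ is supported in $S_{x}$ with $\|\rho_{\text{in}}\|_{L^{2}_{x}}\le C\|g_{\text{in}}\|_{L^{2}}\le Cd^{1/2}$ (using $d_{\text{in}}\ge c\|g_{\text{in}}\|_{2}^{2}$), and since $\tfrac1{|x|}\mathbf{1}_{B}\in L^{2}$ the Newtonian potential estimate gives $\|\phi_{\text{in}}\|_{\infty}\le C\|\rho_{\text{in}}\|_{L^{2}}\le Cd^{1/2}$; because $\mathcal{P}_{N}$ projects onto the span of finitely many smooth (hence bounded on $\Omega$) functions $\xi_{i}$, also $\|\mathcal{P}_{N}\phi_{\text{in}}\|_{L^{\infty}(\Omega)}\le C_{N}\|\phi_{\text{in}}\|_{\infty}$, so $\|c\|_{\infty}\le Cd^{1/2}$. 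Then, using $\int_{\Omega}(f_{0}+1)c^{2}\le C\|\phi_{\text{in}}\|_{\infty}^{2}\int_{\Omega}(f_{0}+1)\,dxdv\le Cd$,
\[
\int_{\Omega}(f_{0}+1)|c|^{3}e^{|c|}\,dxdv\le e^{\|c\|_{\infty}}\|c\|_{\infty}\int_{\Omega}(f_{0}+1)c^{2}\,dxdv\le e^{Cd^{1/2}}\cdot Cd^{1/2}\cdot Cd=Ce^{C^{\prime}d^{1/2}}d^{3/2},
\]
and bounding $1-\tau\le1$ gives the claimed inequality. \textbf{The main obstacle} is precisely this remainder estimate: one must extract the sharp dependence $e^{C^{\prime}d^{1/2}}d^{3/2}$, which forces the $L^{\infty}$ control of $(I-P_{N})\phi_{\text{in}}$ by $d^{1/2}$ — here the compact support of $g_{\text{in}}$ and the finite‑dimensionality of $P_{N}$ are essential, since a mere $L^{6}$ bound from Sobolev embedding would not tame the exponential nonlinearity in the King Casimir $Q_{0}$. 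By comparison, the identification of the quadratic part with $(A^{N}\phi_{\text{in}},\phi_{\text{in}})$ is automatic once one uses $|f_{0}^{\prime}|=f_{0}+1$ on $\Omega$.
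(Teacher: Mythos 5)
Your proposal is correct and follows essentially the same route as the paper: apply the pointwise duality Lemma \ref{lemma-duality} with $c=-(I-P_{N})\phi_{\text{in}}$, Taylor expand $1+c-e^{c}$, use $(f_{0}+1)\mathbf{1}_{\Omega}=|f_{0}^{\prime}|$ to identify the quadratic part with $(A^{N}\phi_{\text{in}},\phi_{\text{in}})$ and invoke Lemma \ref{lemma-an}, then control the cubic remainder through the bound $\|(I-P_{N})\phi_{\text{in}}\|_{\infty}\leq Cd^{1/2}$ coming from the compactly supported density and the finite-dimensionality of $P_{N}$. The only cosmetic difference is that you obtain $\|\phi_{\text{in}}\|_{\infty}\leq C\sqrt{R}\,\|\rho_{\text{in}}\|_{2}$ via the Newtonian potential kernel rather than the paper's explicit radial representation formula, which is the same estimate.
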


\begin{proof}
Recall (\ref{i1+i2+i3}). By using Lemma \ref{lemma-duality} for $c=-\left(
\phi_{\text{in}}-P_{N}\phi_{\text{in}}\right)  $ and using the Taylor
expansion, we have
\begin{align*}
I_{\text{in}}^{2}  &  =(1-\tau)\int\int[Q_{0}(f_{0}+g_{\text{in}})-Q_{0}%
(f_{0})+(E-E_{0})g_{\text{in}}+\left(  \phi_{\text{in}}-P_{N}\phi_{\text{in}%
}\right)  f_{\text{in}}]dxdv\\
&  \text{ }\ \ \ \ \ +\frac{1}{8\pi}(1-\tau)\int|\nabla\phi_{\text{in}}%
|^{2}dx\\
&  \geq\frac{1}{8\pi}(1-\tau)\int|\nabla\phi_{\text{in}}|^{2}dx+(1-\tau
)\int\int(f_{0}+1)\mathbf{1}_{\Omega}(1+\phi_{\text{in}}-P_{N}\phi_{\text{in}%
}-e^{\phi_{\text{in}}-P_{N}\phi_{\text{in}}})dxdv\\
&  \geq\frac{1-\tau}{8\pi}\left\{  \int|\nabla\phi_{\text{in}}|^{2}dx-4\pi
\int\int\left\vert f_{0}^{\prime}\left(  E\right)  \right\vert \left(
\phi_{\text{in}}-P_{N}\phi_{\text{in}}\right)  ^{2}dxdv\right\}  \text{ }\\
&  \ \ \ \ -Ce^{\left\vert \phi_{\text{in}}-P_{N}\phi_{\text{in}}\right\vert
_{\infty}}\int\int\left\vert f_{0}^{\prime}\left(  E\right)  \right\vert
\left\vert \phi_{\text{in}}-P_{N}\phi_{\text{in}}\right\vert ^{3}dxdv\text{
}\ \ \ \text{(Note }\left(  f_{0}(E)+1\right)  \mathbf{1}_{\Omega}%
=|f_{0}^{\prime}(E)|\text{)}\\
&  \geq\frac{\left(  1-\tau\right)  \delta_{0}}{8\pi}\int|\nabla
\phi_{\text{in}}|^{2}dx-Ce^{\left\vert \phi_{\text{in}}-P_{N}\phi_{\text{in}%
}\right\vert _{\infty}}\int\int\left\vert f_{0}^{\prime}\left(  E\right)
\right\vert \left\vert \phi_{\text{in}}-P_{N}\phi_{\text{in}}\right\vert
^{3}dxdv.
\end{align*}
In the last line, we have used Lemma \ref{lemma-an}. To estimate the last term
above and conclude our lemma, it suffices to show
\[
|\phi_{\text{in}}-P_{N}\phi_{\text{in}}|_{\infty}\leq C_{N}d^{\frac{1}{2}}.
\]
This follows from the facts that for the fixed $N$ smooth functions $\xi_{i},$
we have
\[
\left\vert P_{N}\phi_{\text{in}}\right\vert _{\infty}=\left\vert \sum
_{i=1}^{N}(\phi_{\text{in}},\xi_{i})_{\left\vert f_{0}^{\prime}\right\vert
}\xi_{i}\right\vert _{\infty}\leq C_{N}\left\vert \phi_{\text{in}}\right\vert
_{\infty},
\]
and since $\phi$ is spherically symmetric,
\begin{align*}
|\phi_{\text{in}}|\left(  r\right)   &  =\left\vert \frac{1}{r}\int_{0}%
^{r}u^{2}\rho_{\text{in}}\left(  u\right)  du+\int_{r}^{R}u\rho_{\text{in}%
}\left(  u\right)  du\right\vert \\
&  \leq C^{\prime}\sqrt{R}\left\vert \rho_{\text{in}}\right\vert _{2}\leq
C^{\prime\prime}\left\Vert g_{\text{in}}\right\Vert _{2}\leq C_{N}d^{\frac
{1}{2}}%
\end{align*}
where $\rho_{\text{in}}=\int g_{\text{in}}dv$ and $R$ is the support radius of
$\rho_{\text{in}}$.
\end{proof}

We now estimate the term $\int\int P_{N}\phi_{\text{in}}f_{\text{in}}dxdv$,
for which we use the additional invariants.

\begin{lemma}
For any $\varepsilon>0,$ we have
\begin{equation}
\left\vert I_{\text{in}}^{3}\right\vert \leq C(d^{1/2}(0)+\varepsilon
^{1/2}d^{1/2}+\frac{1}{\varepsilon}d)d^{1/2}. \label{estimate-I-in3}%
\end{equation}

\end{lemma}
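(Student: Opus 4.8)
The plan is to exploit the $N$ conserved functionals $J_i(f)=\int\int Q_i(f,L)\,dxdv$ in order to show that the phase--space pairings $\int\int\xi_i g_{\text{in}}\,dxdv$ are small, and then to pair them with $P_N\phi_{\text{in}}$, whose coefficients against the $\xi_i$ are controlled by $\|\nabla\phi_{\text{in}}\|_2\le Cd^{1/2}$.

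\emph{Reduction to the pairings.} Since $P_N\phi_{\text{in}}=\sum_{i=1}^N(\phi_{\text{in}},\xi_i)_{|f_0'|}\xi_i$, we expand
\[
I_{\text{in}}^3=(1-\tau)\sum_{i=1}^N(\phi_{\text{in}},\xi_i)_{|f_0'|}\int\int\xi_i g_{\text{in}}\,dxdv .
\]
As each $\xi_i$ is normalized in $L^{2,r}_{|f_0'|}$, Cauchy--Schwarz gives $|(\phi_{\text{in}},\xi_i)_{|f_0'|}|\le\|\phi_{\text{in}}\|_{|f_0'|}$; since $\phi_{\text{in}}$ depends only on $x$ and $\int|f_0'(E)|\,dv$ belongs to $L^\infty_x$ with compact $x$--support, $\|\phi_{\text{in}}\|_{|f_0'|}^2\le C\int_{S_x}\phi_{\text{in}}^2\le C\|\phi_{\text{in}}\|_{L^6}^2\le C\|\nabla\phi_{\text{in}}\|_2^2\le Cd$. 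Hence it suffices to bound each $\bigl|\int\int\xi_i g_{\text{in}}\,dxdv\bigr|$ by $C\bigl(d^{1/2}(0)+\varepsilon^{1/2}d^{1/2}+\varepsilon^{-1}d\bigr)$.

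\emph{Using the invariants.} Recall $\xi_i=\partial_1 Q_i(f_0,L)$ and $J_i(f(t))=J_i(f(0))$. Splitting the integral over $\Omega$ (where $f_0>0$) and $\Omega^c$ (where $f_0\equiv0$, so $Q_i(f_0+g,L)=Q_i(g_{\text{out}},L)$ since $A_i(0)=0$), and Taylor--expanding on $\Omega$, we get
\[
\int\int\xi_i g_{\text{in}}\,dxdv=\bigl[J_i(f(0))-J_i(f_0)\bigr]-\int_{\Omega^c}Q_i(g_{\text{out}},L)\,dxdv-\tfrac12\int_\Omega\partial_1^2 Q_i(\tilde f,L)\,g^2\,dxdv
\]
for an intermediate $\tilde f$ between $f_0$ and $f_0+g$. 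Then: (i) $|J_i(f(0))-J_i(f_0)|\le Cd^{1/2}(0)$, the linear part being controlled by $\|g_{\text{in}}(0)\|_2\le Cd^{1/2}(0)$ and $\int|g_{\text{out}}(0)|\le Cd^{1/2}(0)$ (using $d_{\text{in}}\ge c\|g_{\text{in}}\|_2^2$, $\int\int Q_0(g_{\text{out}})\,dxdv\ge c\|g_{\text{out}}\|_2^2$, and a fixed energy cut), and the quadratic remainder by $Cd(0)$; (ii) $A_i$ is Lipschitz (the $\alpha_i$ may be taken bounded, which does not affect their restriction to $\mathrm{supp}\,|f_0'|$), so $|Q_i(g_{\text{out}},L)|\le C|g_{\text{out}}|$, and cutting at the energy level $E_0+\varepsilon$,
\[
\int|g_{\text{out}}|\,\mathbf 1_{E_0\le E\le E_0+\varepsilon}\le\Bigl(\int\mathbf 1_{E_0\le E\le E_0+\varepsilon}\,dxdv\Bigr)^{1/2}\|g_{\text{out}}\|_2\le C\varepsilon^{1/2}d^{1/2},
\]
\[
\int|g_{\text{out}}|\,\mathbf 1_{E>E_0+\varepsilon}\le\varepsilon^{-1}\int_{E\ge E_0}(E-E_0)g_{\text{out}}\,dxdv\le\varepsilon^{-1}d,
\]
the last step using $g_{\text{out}}=f\ge0$ on $\{E\ge E_0\}$; (iii) $\partial_1^2 Q_i(\tilde f,L)=-\alpha_i'(\cdot)(\tilde f+1)^{-1}\beta_i(L)$ is uniformly bounded since $\tilde f+1\ge1$ and $\|f(t)\|_\infty$ is conserved, so this term is $\le C\|g_{\text{in}}\|_2^2\le Cd_{\text{in}}\le Cd$. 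Adding these and absorbing $d\le\varepsilon^{-1}d$ bounds the pairings as required; multiplying by the $N$ coefficients $\le Cd^{1/2}$ from the previous step yields (\ref{estimate-I-in3}).

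The main obstacle is the careful bookkeeping of the outer contribution: one must place the energy cut exactly at $E_0+\varepsilon$ and combine the $Q_0$--coercivity of $d_{\text{out}}$ near the support boundary with the potential/kinetic control $\int_{E\ge E_0}(E-E_0)g_{\text{out}}\,dxdv\le d_{\text{out}}$ away from it, so as to split $\int|g_{\text{out}}|$ into precisely the $\varepsilon^{1/2}d^{1/2}$ and $\varepsilon^{-1}d$ pieces --- all while keeping the Lipschitz constant of $A_i$ and the bound on $\partial_1^2 Q_i$ uniform in time, which is guaranteed by the conservation of $\|f(t)\|_{L^p}$. Once the pairings are controlled, the reduction to $\|\nabla\phi_{\text{in}}\|_2$ and thus to $d^{1/2}$ is routine.
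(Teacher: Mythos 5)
Your proof is correct and follows essentially the same route as the paper: expand the conserved functionals $J_i$, Taylor-expand the inner contribution to isolate $(g_{\text{in}},\xi_i)$ with an $O(d)$ remainder, control the outer contribution by the energy cut at $E_0+\varepsilon$ giving the $\varepsilon^{1/2}d^{1/2}+\varepsilon^{-1}d$ terms, and pair with $P_N\phi_{\text{in}}$ whose coefficients are $O(d^{1/2})$. The only (harmless) deviation is cosmetic: you bound $|(\phi_{\text{in}},\xi_i)_{|f_0'|}|$ by Cauchy--Schwarz plus Sobolev via $\Vert\nabla\phi_{\text{in}}\Vert_2\le Cd^{1/2}$, whereas the paper uses the pointwise bound $|\phi_{\text{in}}|_\infty\le Cd^{1/2}$ established in the preceding lemma.
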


\begin{proof}
By the definition of $I_{\text{in}}^{3}$ in (\ref{i1+i2+i3})$,$ it suffices to
estimate $(g_{\text{in}},\xi_{i}).$ We expand
\begin{align*}
&  J_{i}(f,L)-J_{i}(f_{0},L)\\
&  =J_{i}(f_{0}+g_{\text{in}},L)-J_{i}(f_{0},L)+J_{i}(g_{\text{out}},L)\\
&  =(g_{\text{in }},\xi_{i})+O(d)+J_{i}(g_{\text{out}},L).
\end{align*}
Notice that
\begin{align*}
|J_{i}(g_{\text{out}},L)|  &  \leq C||g_{\text{out}}||_{L^{1}}\leq
C||\mathbf{1}_{\left\{  E_{0}\leq E\leq E_{0}+\varepsilon\right\}
}g_{\text{out}}||_{L^{1}}+C||\mathbf{1}_{\left\{  E\geq E_{0}+\varepsilon
\right\}  }g_{\text{out}}||_{L^{1}}\\
&  \leq\varepsilon^{1/2}||g_{\text{out}}||_{L^{2}}+\frac{C}{\varepsilon
}||\mathbf{1}_{\left\{  E\geq E_{0}+\varepsilon\right\}  }(E-E_{0}%
)g_{\text{out}}||_{L^{1}}\leq C[\varepsilon^{1/2}d^{1/2}+\frac{1}{\varepsilon
}d].
\end{align*}
It thus follows that
\begin{align*}
|(g_{\text{in }},\xi_{i})|  &  \leq|J_{i}(f(0),L)-J_{i}(f_{0}%
,L)|+C[\varepsilon^{1/2}d^{1/2}+\frac{1}{\varepsilon}d]\\
&  \leq C[d^{1/2}(0)+\varepsilon^{1/2}d^{1/2}+\frac{1}{\varepsilon}d].
\end{align*}

Therefore%
\begin{align*}
\left\vert I_{\text{in}}^{3}\right\vert  &  =\left(  1-\tau\right)  \left\vert
\int\int P_{N}\phi_{\text{in}}g_{\text{in }}dxdv\right\vert =\left\vert
\int\int\left(  \sum_{i=1}^{N}(\phi_{\text{in}},\xi_{i})_{\left\vert
f_{0}^{\prime}\right\vert }\xi_{i}\right)  g_{\text{in }}dxdv\right\vert \\
&  \leq\sum_{i=1}^{N}\left\vert (\phi_{\text{in}},\xi_{i})_{\left\vert
f_{0}^{\prime}\right\vert }\right\vert |(\xi_{i},g_{\text{in}})|\leq
C^{\prime}\sum_{i=1}^{N}\left\vert \phi_{\text{in}}\right\vert _{\infty}%
|(\xi_{i},g_{\text{in}})|\\
&  \leq Cd^{1/2}[d^{1/2}(0)+\varepsilon^{1/2}d^{1/2}+\frac{1}{\varepsilon}d].
\end{align*}

\end{proof}

Now we prove the nonlinear stability of King model.

\begin{proof}
[Proof of Theorem \ref{theorem-stability}]The global existence of classical
solutions of 3D Vlasov-Poisson system was shown in \cite{phaffa} for compactly
supported initial data $f\left(  0\right)  \in C_{c}^{1}$. Let the unique
global solution be $\left(  f\left(  t\right)  ,\phi\left(  t\right)  \right)
$. Let $d\left(  t\right)  =d(f\left(  t\right)  ,f_{0})$. Combining estimates
(\ref{estimate-I-out}), (\ref{estimate-I-in1}), (\ref{estimate-I-in2}) and
(\ref{estimate-I-in3}), we have
\begin{align*}
I(f\left(  0\right)  )-I(f_{0})  &  =I(f\left(  t\right)  )-I(f_{0})\\
&  \geq d_{\text{out}}+\frac{\tau}{2}d_{\text{in}}+\left(  \frac{\left(
1-\tau\right)  \delta_{0}}{8\pi}-C\tau\right)  \int|\nabla\phi_{\text{in}%
}|^{2}dx\\
&  -C\left(  \varepsilon^{1/3}d\left(  t\right)  +\frac{1}{\varepsilon^{5/6}%
}d\left(  t\right)  ^{4/3}+\frac{1}{\varepsilon^{5/3}}d\left(  t\right)
^{5/3}\right)  -Ce^{C^{\prime}d\left(  t\right)  ^{\frac{1}{2}}}d\left(
t\right)  ^{\frac{3}{2}}\\
&  -Cd\left(  t\right)  ^{1/2}[d^{1/2}(0)+\varepsilon^{1/2}d\left(  t\right)
^{1/2}+\frac{1}{\varepsilon}d\left(  t\right)  ].
\end{align*}
Thus by choosing $\varepsilon$ and $\tau$ sufficiently small, there exists
$\delta^{\prime}>0$ such that
\begin{align}
I(f\left(  0\right)  )-I(f_{0})  &  \geq\delta^{\prime}d(t)-C\left(  d\left(
t\right)  ^{4/3}+d\left(  t\right)  ^{5/3}+d\left(  t\right)  ^{3/2}\right)
-Ce^{C^{\prime}d\left(  t\right)  ^{\frac{1}{2}}}d\left(  t\right)  ^{\frac
{3}{2}}\label{estimate-ec}\\
&  -Cd\left(  t\right)  ^{1/2}d^{1/2}(0).\nonumber
\end{align}
It is easy to show that $I(f\left(  0\right)  )-I(f_{0})\leq C^{\prime\prime
}d\left(  0\right)  $. Define the functions $y_{1}\left(  x\right)
=\delta^{\prime}x^{2}-Ce^{C^{\prime}x}x^{3}-C\left(  x^{8/3}+x^{10/3}%
+x^{3}\right)  $ and $y_{2}\left(  x\right)  =Cd\left(  0\right)
^{1/2}x+C^{\prime\prime}d\left(  0\right)  $. Then above estimates implies
that $y_{1}\left(  d\left(  t\right)  ^{1/2}\right)  \leq y_{2}\left(
d\left(  t\right)  ^{1/2}\right)  $. The function $y_{1}$ is increasing in
$\left(  0,x_{0}\right)  $ where $x_{0}$ is the first maximum point. So if
$d\left(  0\right)  $ is sufficiently small, the line $y=y_{2}\left(
x\right)  $ intersects the curve $y=y_{1}\left(  x\right)  $ at points
$x_{1},x_{2},\cdots,$ with $x_{1}\left(  d\left(  0\right)  \right)
<x_{0}<x_{2}\left(  d\left(  0\right)  \right)  <\cdots$. Thus the inequality
$y_{1}\left(  x\right)  \leq y_{2}\left(  x\right)  $ is valid in disjoint
intervals $\left[  0,x_{1}\left(  d\left(  0\right)  \right)  \right]  $ and
$[x_{2}\left(  d\left(  0\right)  \right)  ,x_{3}\left(  d\left(  0\right)
\right)  ],\cdots.$ Because $d\left(  t\right)  $ is continuous, we have that
$d\left(  t\right)  ^{1/2}<x_{1}\left(  d\left(  0\right)  \right)  $ for all
$t<\infty$, provided we choose $d\left(  0\right)  ^{1/2}<x_{0}$. Since
$x_{1}\left(  d\left(  0\right)  \right)  \rightarrow0$ as $d\left(  0\right)
\rightarrow0$, we deduce the nonlinear stability in terms of the distance
functional $d\left(  t\right)  ^{1/2}$.
\end{proof}

\vskip 0.5cm

\noindent\textbf{Acknowledgements}{\Large \ }

\vskip 0.2cm This research is supported partly by NSF grants DMS-0603815 and
DMS-0505460. We thank the referees for comments and corrections.

\end{document}